\documentclass[english,11pt, reqno]{amsart}
\usepackage[T1]{fontenc}
\usepackage{amscd}
\usepackage{amsthm,amsmath,amsfonts,amssymb}
\usepackage{verbatim}
\usepackage{lmodern}
\usepackage{graphicx}
\usepackage{enumerate}
\usepackage[percent]{overpic}
\usepackage{amsaddr}
\usepackage[numbers,sort&compress]{natbib}

\addtolength{\topmargin}{-10ex}
\addtolength{\oddsidemargin}{-3em}
\addtolength{\evensidemargin}{-3em}
\addtolength{\textheight}{15ex}
\addtolength{\textwidth}{5.5em}


\newcommand{\re}{\text{\upshape Re\,}}
\newcommand{\im}{\text{\upshape Im\,}}



\usepackage[colorlinks=true]{hyperref}
\hypersetup{urlcolor=blue, citecolor=red, linkcolor=blue}

\makeatletter
\numberwithin{equation}{section}
\theoremstyle{plain}
\newtheorem{thm}{\protect\theoremname}
  \theoremstyle{plain}
  \newtheorem{lemma}[thm]{\protect\lemmaname}
   \newtheorem{prop}[thm]{\protect\propname}

\newtheorem{figuretext}{Figure}

\newtheorem{remark}[thm]{Remark}
\numberwithin{thm}{section}

\theoremstyle{remark}

\makeatother

\usepackage{babel}
\providecommand{\propname}{Proposition}
\providecommand{\lemmaname}{Lemma}
\providecommand{\theoremname}{Theorem}

\newcommand{\imag}{\operatorname{Im} }
\newcommand{\real}{\operatorname{Re} }
\renewcommand{\Im}{\imag}
\renewcommand{\Re}{\real}

\newcommand{\dist}{\operatorname{dist}}

\newcommand{\Z}{{\mathbb Z}}

\newcommand{\R}{{\mathbb{R}}}
\newcommand{\C}{{\mathbb C}}

\newcommand{\GG}{{\mathcal G}} 

\let \le \leqslant
\let \leq \leqslant

\let \geq \geqslant

\input epsf
\title[Asymptotic analysis of Dotsenko-Fateev integrals]
{Asymptotic analysis of Dotsenko-Fateev integrals}

\author{Jonatan Lenells and Fredrik Viklund }
\address{Department of Mathematics, KTH Royal Institute of Technology, \\ 100 44 Stockholm, Sweden.}
\email{jlenells@kth.se, fredrik.viklund@math.kth.se}

\begin{document}
\begin{abstract} 
We develop a method for evaluating asymptotics of certain contour integrals that appear in Conformal Field Theory under the name of Dotsenko-Fateev integrals and which are natural generalizations of the classical hypergeometric functions. We illustrate the method by establishing a number of estimates that are useful in the context of martingale observables for multiple Schramm-Loewner evolution processes.
\end{abstract}

\maketitle

\noindent
{\small{\sc AMS Subject Classification (2010)}: 30E15, 33C70, 81T40.}

\noindent
{\small{\sc Keywords}: Integral asymptotics, conformal field theory, Schramm-Loewner evolution.}

\tableofcontents

\section{Introduction}
In this paper, we will be interested in integrals of the form
\begin{align}\label{pochhammerexpression}
\int_A^{(z_1+,z_2+,z_1-,z_2-)} \prod_{j = 1}^N (u - z_j)^{a_j} du,
\end{align}
where $N \geq 2$ is an integer, $\{a_j\}_{j=1}^N \subset \R$ is a set of exponents, $\{z_j\}_{j=1}^N \subset \C$ is a finite collection of points, and the Pochhammer integration contour encloses two of these points, say $z_1$ and $z_2$. More precisely, the Pochhammer contour in (\ref{pochhammerexpression}) begins at the base point $A \in \C$, encircles $z_1$ once in the positive (counterclockwise) direction, returns to $A$, encircles $z_2$ once in the positive direction, returns to $A$, encircles $z_1$ once in the negative direction, returns to $A$, and finally encircles $z_2$ once in the negative direction before returning to $A$, see Figure \ref{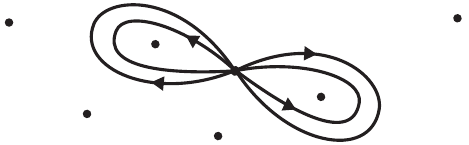}. The remaining points $\{z_j\}_{j=3}^N$ are assumed to lie exterior to the contour and the powers in the integrand are assumed to take their principal values at the starting point and are then analytically continued along the contour.

By employing what is sometimes called the screening method, Dotsenko and Fateev obtained representations of correlation functions in Conformal Field Theory (CFT) in terms of such integrals \cite{DF1984, DF1985} (see also \cite{DMS1997}). Integrals of the form (\ref{pochhammerexpression}) are therefore often referred to as Dotsenko-Fateev integrals. Various aspects of such integrals and related special functions have been studied in a number of places, e.g., \cite{FL2004, GI2016}. For $N =2$ and $N = 3$, the integral (\ref{pochhammerexpression}) can be expressed in terms of Gamma functions and the classical hypergeometric function ${}_2F_1$, respectively. In the case of $N = 4$, a comparison with the representation of Appell's $F_{1}$ function as an Euler integral (cf. \cite{FL2004}) shows that (\ref{pochhammerexpression}) can be viewed as an analytic continuation of Appell's function. We also mention \cite{D2006,  KM2013, KP2016a} which discuss related questions in the context of the Schramm-Loewner Evolution (SLE). 
%

The basic problem we are interested in here is how to compute the asymptotics of the integral (\ref{pochhammerexpression}) as one (or several) of the points $z_k$, $k = 3, \dots, N$, approaches $z_1$ or $z_2$. The computation of such asymptotics presents difficulties, because if the point $z_k$, $k = 3, \dots, N$, approaches $z_1$, say, then the contour gets squeezed between $z_k$ and $z_1$ which, in general, gives rise to a singular behavior. For $N =3$, the relevant asymptotics can be obtained from well-known identities and expansions of hypergeometric functions. However, for $N \geq 4$, the analysis is more intricate. 

In this paper, we propose a method which makes it possible to compute the asymptotics of (\ref{pochhammerexpression}) to all orders for any $N \geq 3$ as one (or many) of the points $z_k$, $k \geq 3$, approaches $z_1$ or $z_2$. The obtained expansions are power series in the relevant small quantities with coefficients explicitly given in terms of Gamma functions and integrals of the form (\ref{pochhammerexpression}) of a lower order $\leq N-1$. 
In the case of $N = 4$, it is conceivable that the asymptotic formulas we derive can be obtained also from known properties of Appell's function. However, as far as we are aware, formulas of this type have not appeared in the literature even in this simplest case.

By applying a linear fractional transformation, we may assume that $z_1 = 0$ and $z_2 = 1$ in (\ref{pochhammerexpression}); this yields an expression of the form
\begin{align}\label{pochhammerexpression2}
\int_A^{(0+,1+,0-,1-)} v^{a_1} (1-v)^{a_2} \prod_{j = 3}^{N} (v - w_j)^{a_j} dv.
\end{align}
For simplicity, we will present the results in this paper for the class of integrals corresponding to $N = 4$.
Although the proposed method works for arbitrary values of $N$ and an arbitrary number $M \leq N$ of merging points, its application involves a large number of steps when $M$ is large (the number of steps typically grows like $M$, because, roughly speaking, each time two points merge one of the series contributing to the asymptotics splits into two). 

\begin{figure}
\medskip
\begin{center}
\begin{overpic}[width=.55\textwidth]{Pochhammerintro.pdf}
      \put(47,11){\small $A$}
      \put(38,19){\small $1$}
      \put(62,10){\small $2$}
      \put(32,9){\small $3$}
      \put(65,21.5){\small $4$}
      \put(28,21){\small $z_1$}
      \put(70,9.5){\small $z_2$}
      \put(-3.5,25.4){\small $z_3$}
      \put(13,6){\small $z_4$}
      \put(41,1){\small $z_5$}
      \put(92,26.4){\small $z_6$}
\end{overpic}
\medskip
     \begin{figuretext}\label{Pochhammerintro.pdf}
       The Pochhammer integration contour in (\ref{pochhammerexpression}) consists of four loops based at $A$ encircling $z_1$ and $z_2$. The loop denoted by $1$ is traversed first, then the loop denoted by $2$ is traversed, and so on.
       \end{figuretext}
     \end{center}
\end{figure}

\subsection{Brief description of method}
Let $F(w_1, w_2) \equiv F(a,b,c,d; w_1, w_2)$ denote the integral in (\ref{pochhammerexpression2}) for $N = 4$, that is, $F(w_1, w_2)$ is defined for $w_1,w_2 \in \C \setminus [0, \infty)$ by
\begin{align}\label{Fdef}
F(w_1, w_2) = \int_A^{(0+,1+,0-,1-)} v^a (v-w_1)^b (v-w_2)^{c} (1-v)^{d} dv, 
\end{align}
where $A \in (0,1)$ is a base point, $a,b,c,d\in \R$ are real exponents, and $w_1, w_2$ are assumed to lie outside the contour. In order to make $F$ single-valued, we have restricted the domain of definition in (\ref{Fdef}) to $w_1,w_2 \in \C \setminus [0, \infty)$. 
We will assume that $a,d\in \R \setminus \Z$ are not integers, because otherwise the integral in (\ref{Fdef}) vanishes identically or can be computed by a residue calculation. Note that $F$ can be analytically continued to a multiple-valued analytic function of $w_1,w_2 \in \C \setminus \{0, 1\}$.
Our goal is to compute the asymptotic behavior of $F$ to all orders as one or both of the points $w_1$ and $w_2$ approach $0$ or $1$. 

The basic idea is the following: If we want to consider the limit $w_1 \to 0$ say, then we rewrite $F$ as a sum of two terms (see equation (\ref{FQidentity})). One term which is defined by the same integral as $F$ except that $w_1$ is now assumed to lie inside the contour in the same component as $0$ (see equation (\ref{Q1def})), and a second term which is defined by a similar expression but with the Pochhammer contour enclosing $\{0, w_1\}$ instead of $\{0, 1\}$ (see (\ref{Q2def}) and (\ref{FminusQ1})). The asymptotics of both of these terms can easily be computed to all orders by simply replacing the factors in the integrands by their asymptotic expansions as $w_1 \to 0$ such as
\begin{align}\label{vw1expansion}
(v - w_1)^b \sim \sum_{k=0}^\infty \frac{\Gamma(b+1)}{\Gamma(b+1-k)} (-1)^k v^{b-k} \frac{w_1^k}{k!}, \qquad w_1 \to 0.
\end{align}

We emphasize that it is, in general, not possible to compute the asymptotics of $F$ as $w_1 \to 0$ by substituting the expansion (\ref{vw1expansion}) directly into (\ref{Fdef}). Indeed, such a procedure gives the correct contribution from the first term, but completely ignores the contribution from the second term.

\subsection{Two examples}
Our initial motivation for studying asymptotics of integrals of the form \eqref{pochhammerexpression} was that they appear naturally when constructing observables for multiple SLE curves via the screening method within CFT. Thus, in the second part of the paper, in order to illustrate our method, we consider two concrete examples of such observables where integrals of the form (\ref{pochhammerexpression}) are important. By applying the techniques developed in the first part of the paper, we are able to derive asymptotic estimates for the integrals in these examples. The estimates we establish are used in the derivation of the Green's function and Schramm's formula for multiple SLE presented in \cite{LV2018_A}. We expect similar asymptotic estimates to be needed also in the context of other SLE observables derived via the screening method.

\subsection{Organization of the paper}
In Section \ref{mainsec}, we present four different examples of asymptotic expansions to all orders which can be derived by our method. Before turning to the full description of the method and the proofs of the above expansions in Section \ref{methodsec}, we consider the hypergeometric case of $N = 3$ in Section \ref{hypergeometricsec} as motivation. 
The two examples from SLE theory are introduced in Section \ref{SLEsect}. 
In Section \ref{greensec}, we derive the estimates relevant for the first example corresponding to the Green's function. 
In Section \ref{schrammsec}, we derive the estimates relevant for the second example corresponding to Schramm's formula.

\section{Four asymptotic theorems}\label{mainsec}
The purpose of this paper is to propose a method which makes it possible to compute the asymptotics of (\ref{pochhammerexpression2}) as one (or several) of the points $w_k$, $k = 3, \dots, N$, approaches $0$ or $1$. There are clearly many different cases to consider depending on the value of $N$, on the number of points $w_k$ involved in the limiting process, and on whether each $w_k$ approaches $0$ or $1$. For the sake of presentation, we have chosen to discuss four cases in detail. The presented cases all have $N = 4$ and correspond to the following limits: (a) $w_2 \to 1$, (b) $w_1 \to 0$, (c) $w_1 \to 0$ and $w_2 \to 0$, and (d) $w_1 \to 0$ and $w_2 \to 1$. These four cases, which are treated one by one in Theorem \ref{mainth1}-\ref{mainth4}, illustrate the different situations that may arise and it will be clear from the analysis of these cases how to apply the method also in other cases.

\subsection{Notation}
We let $F(w_1, w_2) \equiv F(a,b,c,d; w_1, w_2)$, $G(a,c,d; w)$, and $H(a,d)$ denote the integral given in (\ref{pochhammerexpression2}) for $N = 4$, $N = 3$, and $N =2$, respectively. That is, $F(w_1, w_2)$ is defined by (\ref{Fdef}), $G(a,c,d; w)$ is defined by
\begin{align}\label{Gdef}
  G(a,c,d; w) = \int_A^{(0+,1+,0-,1-)} v^a (v - w)^{c} (1-v)^{d} dv, \qquad w \in \C \setminus [0, \infty),
\end{align}
where $w$ is assumed to lie outside the contour, and $H(a,d)$ is defined by
\begin{align}\label{Hdef}
  H(a,d) = \int_A^{(0+,1+,0-,1-)} v^a (1-v)^{d} dv.
\end{align}
The Pochhammer integration contour in (\ref{Gdef}) begins at the base point $A \in (0,1)$, encircles the point $0$ once in the positive (counterclockwise) direction, returns to $A$, encircles $1$ once in the positive direction, returns to $A$, and so on. The point $w$ lies exterior to all loops; the factors in the integrand take their principal values at the starting point and are then analytically continued along the contour. A similar interpretation of Pochhammer contours applies to the definition (\ref{Fdef}) of $F(w_1,w_2)$ and elsewhere.
Throughout the paper we adopt the convention that unless stated otherwise, the principal branch is used for all complex powers and logarithms, i.e., $\ln w = \ln |w| + i \arg w$ with $\arg w \in (-\pi, \pi]$. 
For $c \in \C$, we define $\rho_c(w)$ by
\begin{align}\label{rhodef}
\rho_c(w) = \begin{cases} e^{-i\pi c}, & \im w \geq 0, \\
e^{i\pi c}, & \im w < 0.
\end{cases}
\end{align}

\subsection{Asymptotic theorems}
We can now state the asymptotic results. The proofs are given in Section \ref{methodsec}.

\begin{thm}[Asymptotics as $w_2 \to 1$]\label{mainth1}
Let $a,b,c,d \in \R \setminus \Z$ be such that $c+d \notin\Z$. 
Then $F$ satisfies the following asymptotic expansion to all orders as $w_2 \to 1$ with $w_1,w_2 \in \C \setminus [0, \infty)$:
\begin{align*}
F(w_1, w_2) \sim &\; \sum_{k=0}^\infty \bigg\{A_k^{(1)}(w_1) (w_2-1)^k 
 + A_k^{(2)}(w_1) (w_2-1)^{c+d+1+k}\bigg\}\rho_c(w_2), 
\end{align*}
where the coefficients $A_k^{(j)}(w_1) \equiv A_k^{(j)}(a,b,c,d;w_1)$, $j = 1,2$, are given by
\begin{align*}
& A_k^{(1)}(w_1) = \frac{e^{2\pi i d} - 1}{e^{2\pi i(c+d)} -1} \frac{\Gamma(c+1)}{\Gamma(c+1-k)k!} 
G(a,b,c+d-k; w_1),
	\\
& A_k^{(2)}(w_1) = \frac{(e^{2\pi ia} -1)e^{\pi i d}}{1 - e^{2\pi i(c+d)}} 
\sum_{l=0}^k  \frac{\Gamma(a+1)\Gamma(b+1)(1 - w_1)^{b-l}}{\Gamma(a+1-k+l)\Gamma(b+1-l)(k-l)!\, l!} H(c,d+k). 
\end{align*} 
\end{thm}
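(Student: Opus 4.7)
Following the strategy described in the introduction, the plan is to decompose $F(w_1,w_2)$ into two contour integrals, each amenable to a convergent Taylor expansion in $w_2-1$, and to recognize the resulting coefficients as constant multiples of $G$'s and $H$'s. Define $Q_1(w_1,w_2)$ by the same integrand as $F$ but along a Pochhammer contour around $\{0,1\}$ deformed so that $w_2$ lies \emph{inside} the loop around $1$, and define $Q_2(w_1,w_2)$ as a Pochhammer-type integral around the pair $\{1,w_2\}$ (with $0$ and $w_1$ outside). A contour-deformation argument in $\C\setminus\{0,1,w_1,w_2\}$, carefully tracking the branches of $(v-w_2)^c$ and $(1-v)^d$ at $v=1$, should produce an identity of the form $F(w_1,w_2)=Q_1(w_1,w_2)+(\text{explicit phase})\cdot Q_2(w_1,w_2)$, entirely analogous to the splitting identity for the case $w_1\to 0$ described in the introduction.

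On the deformed contour of $Q_1$ one can arrange $|v-1|\ge 2|w_2-1|$ uniformly, so the binomial series
\[
(v-w_2)^c=\sum_{k=0}^\infty \frac{\Gamma(c+1)}{\Gamma(c+1-k)\,k!}(-1)^k(w_2-1)^k(v-1)^{c-k}
\]
converges uniformly on the contour and may be integrated termwise. The product $(v-1)^{c-k}(1-v)^d$ on this Pochhammer contour has the same monodromy $e^{2\pi i(c+d)}$ around $v=1$ as the single factor $(1-v)^{c+d-k}$, so the two differ only by a fixed phase determined at the base point $A\in(0,1)$. Each term of the series therefore produces a constant multiple of $G(a,b,c+d-k;w_1)$, giving a $(w_2-1)^k$-series whose coefficients match the stated $A_k^{(1)}(w_1)$ up to the global phase $\rho_c(w_2)$.

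For $Q_2$, substitute $v=1+(w_2-1)s$; this maps the contour to a standard Pochhammer contour around $\{0,1\}$ in the $s$-variable, and the integrand becomes proportional to
\[
(w_2-1)^{c+d+1}\,s^d\,(s-1)^c\,\bigl(1+(w_2-1)s\bigr)^a\,\bigl(1-w_1+(w_2-1)s\bigr)^b\,ds.
\]
Expanding the last two factors as power series in $(w_2-1)s$ (the second after factoring out $(1-w_1)^b$), forming the Cauchy product with summation index $k=j+l$, integrating termwise, and recognizing the Pochhammer integral of $s^{d+k}(s-1)^c$ as a phase times $H(c,d+k)$, one obtains a $(w_2-1)^{c+d+1+k}$-series whose coefficients match the stated $A_k^{(2)}(w_1)$, again up to the overall phase $\rho_c(w_2)$.

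The termwise integration, the change of variables in $Q_2$, and the Cauchy-product combinatorics are routine. The main obstacle will be the careful bookkeeping of branch cuts and monodromies: establishing the splitting identity, identifying $(v-1)^{c-k}(1-v)^d$ with a phase times $(1-v)^{c+d-k}$ on the Pochhammer contour, and reconciling all phases with the principal-branch conventions so as to produce exactly the coefficients $(e^{2\pi id}-1)/(e^{2\pi i(c+d)}-1)$, $(e^{2\pi ia}-1)e^{i\pi d}/(1-e^{2\pi i(c+d)})$, and the overall factor $\rho_c(w_2)$. The factor $\rho_c(w_2)$ enters because $(v-w_2)^c$, in the principal branch and with $w_2$ in either half-plane, must be compared with the principal branch of $(v-1)^c$ used in the expansion, and this comparison differs in sign according to whether $\im w_2\ge 0$ or $\im w_2<0$; the hypothesis $c+d\notin\Z$ is precisely what keeps the denominator $e^{2\pi i(c+d)}-1$ in $A_k^{(1)}$ nonzero.
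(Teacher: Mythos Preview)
Your proposal is correct and follows essentially the same approach as the paper. The paper first passes to $\tilde{F}$ with $(w_2-v)^c$ in place of $(v-w_2)^c$ (absorbing $\rho_c(w_2)$ at the outset), then proves the splitting identity $\tilde{F}=P_1+(w_2-1)^{c+d+1}P_2$ by an explicit contour decomposition into line segments and semicircles with boundary values at $w_2\in(0,1)$, and finally expands exactly as you describe; its change of variables for the second piece is $s=\frac{v-w_2}{1-w_2}$ rather than your $v=1+(w_2-1)s$, but these differ only by $s\leftrightarrow 1-s$ and lead to the same $H(c,d+k)$. Your anticipation that the phase bookkeeping is the main labor is accurate: that is precisely the content of the paper's Proposition establishing the splitting identity.
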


\begin{thm}[Asymptotics as $w_1 \to 0$]\label{mainth2}
Let $a,b,c,d \in \R \setminus \Z$ be such that $a+b \notin\Z$. 
Then $F$ satisfies the following asymptotic expansion to all orders as $w_1 \to 0$ with $w_1,w_2 \in \C \setminus [0, \infty)$:
\begin{align*}
F(w_1, w_2) \sim &\; \sum_{k=0}^\infty \bigg\{B_k^{(1)}(w_2) w_1^k 
+ \rho_{a+b}(w_1)  B_k^{(2)}(w_2) w_1^{a+b+1 + k}\bigg\}
\end{align*}
where the coefficients $B_k^{(j)}(w_2) \equiv B_k^{(j)}(a,b,c,d;w_2)$, $j = 1,2$, are given by
\begin{align*}
& B_k^{(1)}(w_2) = \frac{e^{2\pi i a} -1}{e^{2\pi i(a+b)} -1}\frac{\Gamma(b+1)(-1)^k}{\Gamma(b+1-k)k!}G(a+b-k, c,d; w_2),
	\\
& B_k^{(2)}(w_2) = \frac{(e^{2\pi i d} -1)e^{\pi i a}}{e^{2\pi i(a+b)} - 1} \sum_{l=0}^k \frac{\Gamma(c+1)\Gamma(d+1)(-1)^l (-w_2)^{c-k+l}}{\Gamma(c+1-k+l)\Gamma(d+1-l)(k-l)! \, l!} H(a+k,b).
\end{align*} 
\end{thm}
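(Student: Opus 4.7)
The plan is to realize the strategy sketched in the introduction. I introduce two auxiliary integrals:
\[
Q_1(w_1, w_2) = \int_A^{(0+, 1+, 0-, 1-)} v^a(v-w_1)^b(v-w_2)^c(1-v)^d\, dv,
\]
interpreted with $w_1$ lying \emph{inside} the $0$-loop of the Pochhammer contour (the branches being the analytic continuation of those used in $F$ after pulling $w_1$ across the contour), and
\[
Q_2(w_1, w_2) = \int^{(0+, w_1+, 0-, w_1-)} v^a(v-w_1)^b(v-w_2)^c(1-v)^d\, dv,
\]
a Pochhammer integral around $\{0, w_1\}$ with $1$ and $w_2$ exterior. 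A homological argument in the space of twisted cycles on $\C \setminus \{0, 1, w_1, w_2\}$---deforming the $0$-loop of the $F$-contour across $w_1$ and compensating with a cycle around $\{0, w_1\}$---produces a relation of the form $F = c_1 Q_1 + c_2 Q_2$, where $c_1, c_2$ depend only on the exponents and on $\mathrm{sgn}\,\Im w_1$ via the monodromy of $v^a(v-w_1)^b$. Matching constants against the statement of the theorem forces $c_1 = (e^{2\pi i a}-1)/(e^{2\pi i(a+b)}-1)$ and a $c_2$ of the form $e^{\pi i a}(e^{2\pi i d}-1)\rho_{a+b}(w_1)/(e^{2\pi i(a+b)}-1)$; the hypothesis $a+b\notin\Z$ is exactly what keeps the denominator nonzero and the two cycles independent.

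I then expand each piece around $w_1 = 0$. For $Q_1$ the contour may be chosen bounded away from $w_1$ once $|w_1|$ is small enough, so the integrand is holomorphic in $w_1$ on a neighborhood of the origin and $Q_1$ has a convergent Taylor series. Substituting the binomial expansion
\[
(v - w_1)^b = \sum_{k=0}^\infty \frac{(-1)^k \Gamma(b+1)}{k!\,\Gamma(b+1-k)}\, v^{b-k}\, w_1^k
\]
and integrating term by term, together with the observation that at $w_1=0$ the product $v^a(v-w_1)^b$ becomes the branch of $v^{a+b}$ used in the definition of $G$ (both acquire $e^{2\pi i(a+b)}$ after one positive loop about $0$), identifies the $k$th Taylor coefficient as a multiple of $G(a+b-k, c, d; w_2)$. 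Multiplication by $c_1$ produces the $B_k^{(1)}$ part of the expansion.

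For $Q_2$ the substitution $v = w_1 u$ sends the Pochhammer contour around $\{0, w_1\}$ to a Pochhammer contour around $\{0, 1\}$ and factors out $w_1^{a+b+1}$; the branch of $w_1^{a+b+1}$ forced by this rescaling depends on the sign of $\Im w_1$ and produces the $\rho_{a+b}(w_1)$ prefactor. On the rescaled contour, both $(w_1 u - w_2)^c = (-w_2)^c(1 - w_1 u/w_2)^c$ and $(1 - w_1 u)^d$ admit Taylor expansions in $w_1$ that converge uniformly along the contour as $w_1 \to 0$. Inserting these expansions, collecting terms by total power of $w_1$, and identifying the surviving $u$-integrals as
\[
\int^{(0+,1+,0-,1-)} u^{a+k}(u-1)^b\, du = H(a+k, b),
\]
one obtains precisely the sum in $B_k^{(2)}$, with the inner summation index $l$ tracking the order from $(w_1 u - w_2)^c$ and $k-l$ the order from $(1-w_1 u)^d$.

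The main obstacle is the monodromy bookkeeping: producing the decomposition $F = c_1 Q_1 + c_2 Q_2$ with exactly the constants required by the theorem, and determining which branch of $w_1^{a+b+1}$ the rescaling $v = w_1 u$ selects (this is the origin of $\rho_{a+b}(w_1)$). Once these branches are pinned down, the rest is Taylor expansion and standard Gamma-function identities, following the same template as the $N=3$ warm-up in Section \ref{hypergeometricsec}.
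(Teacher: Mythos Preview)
Your plan is the paper's: write $F = c_1 Q_1 + c_2 Q_2$ with $Q_1$ the same Pochhammer integral but $w_1$ pulled inside the $0$-loop and $Q_2$ a Pochhammer integral around $\{0,w_1\}$, rescale $Q_2$ by $v = w_1 s$, then Taylor-expand both pieces. The expansion half is correct and matches the paper's use of (\ref{thm2expansions}).

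The gap you flag is real, and ``matching constants against the statement of the theorem'' is circular: those constants are exactly what you are trying to prove. The paper does not invoke abstract twisted-cycle homology; it computes $c_1,c_2$ by hand in Proposition~\ref{FQprop}. The mechanism is to specialize to $w_1 \in (0,1)$ approached from below, decompose the Pochhammer contour into explicit segments $L^j_{w_1\pm\epsilon}$ and semicircles $S_{w_1}^\pm$, write out both $F(w_1-i0,w_2)$ and the integral with $w_1$ inside as sums over these pieces with the phase each factor acquires on each piece, and subtract. The difference collapses to the stated multiple of $\int^{(0+,w_1+,0-,w_1-)} v^a(w_1-v)^b(v-w_2)^c(1-v)^d\,dv$, with the constants falling out of the arithmetic; analytic continuation then gives the identity on all of $\mathcal{D}_0$. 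Note that the integrand in this intermediate step carries $(w_1-v)^b$, not $(v-w_1)^b$: after $s=v/w_1$ this becomes $(1-s)^b$, so the residual $s$-integral is $H(a+k,b)$ on the nose and the branch ambiguity you anticipate in that identification is already absorbed into the contour computation. (Minor: in your last paragraph the roles of $l$ and $k-l$ are reversed relative to the formula for $B_k^{(2)}$; the $(-1)^l$ and $\Gamma(d+1-l)$ come from the $(1-w_1 u)^d$ expansion, not the $(w_1 u - w_2)^c$ one.)
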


\begin{thm}[Asymptotics as $w_1 \to 0$ and $w_2 \to 0$ with $|w_1/w_2| < 1 - \delta$]\label{mainth3}
Let $a,b,c,d \in \R \setminus \Z$ be such that $a+b, a+b+c \notin\Z$. 
Then $F$ satisfies the following asymptotic expansion to all orders as $w_1 \to 0$ and $w_2 \to 0$ with $w_1,w_2 \in \C \setminus [0, \infty)$ such that $|w_1/w_2| < 1 - \delta$ for some $\delta > 0$:
\begin{align*}
F(w_1, w_2) \sim &\; \sum_{k=0}^\infty\sum_{l=0}^\infty \frac{1}{k!l!}\bigg\{C_{kl}^{(1)} w_1^k w_2^l
+ C_{kl}^{(2)} \rho_{a+b+c}(w_2) w_2^{a+b+c+1+l} \bigg(\frac{w_1}{w_2}\bigg)^k
	\\
& + C_{kl}^{(3)} \rho_{a+b}(w_1)\rho_c(w_2) w_1^{a+b+1+l} w_2^{c} \bigg(\frac{w_1}{w_2}\bigg)^k\bigg\},
\end{align*}
where the coefficients $C_{kl}^{(j)} \equiv C_{kl}^{(j)}(a,b,c,d)$, $j = 1,2,3$, are given by
\begin{align*}
& C_{kl}^{(1)} = \frac{e^{2\pi i a} -1}{e^{2\pi i(a+b+c)} -1} \frac{\Gamma(b+1)\Gamma(c+1) (-1)^{k+l}}{\Gamma(b+1-k) \Gamma(c+1-l)} H(a+b+c-k-l,d),
	\\
& C_{kl}^{(2)} = \frac{(e^{2\pi i a} -1)(e^{2\pi i d} -1)e^{\pi i(a+b)}}{(e^{2\pi i(a+b)} - 1)(e^{2\pi i(a+b+c)}-1)}  \frac{\Gamma(b+1)\Gamma(d+1) (-1)^{k+l}}{\Gamma(b+1-k) \Gamma(d+1-l)} H(a+b-k+l,c),
	\\
& C_{kl}^{(3)} = \frac{(e^{2\pi i d} -1)e^{\pi i a}}{e^{2\pi i(a+b)} - 1}  \frac{\Gamma(c+1)\Gamma(d+1) (-1)^{k+l}}{\Gamma(c+1-k) \Gamma(d+1-l)} H(a+k+l,b).
\end{align*} 
\end{thm}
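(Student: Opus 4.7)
The plan is to prove Theorem \ref{mainth3} by iterating the splitting technique of Theorem \ref{mainth2}: first expand $F(w_1,w_2)$ in $w_1$ using Theorem \ref{mainth2} with $w_2$ held fixed, and then expand each $w_2$-dependent coefficient as $w_2 \to 0$ using the $N=3$ hypergeometric asymptotics from Section \ref{hypergeometricsec}. The constraint $|w_1/w_2| < 1-\delta$ is what allows one to reorganize this nested expansion into a genuine joint asymptotic expansion, with $w_2$ as the small parameter and $w_1/w_2$ treated as a bounded variable.

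In more detail, the outer application of Theorem \ref{mainth2} gives
\begin{align*}
F(w_1, w_2) \sim \sum_{k=0}^\infty B_k^{(1)}(w_2)\, w_1^k + \rho_{a+b}(w_1) \sum_{k=0}^\infty B_k^{(2)}(w_2)\, w_1^{a+b+1+k}.
\end{align*}
The second series is already explicit in $w_2$: from Theorem \ref{mainth2}, $B_k^{(2)}(w_2)$ is a finite sum whose $w_2$-dependence is $(-w_2)^{c-k+l}$ for $0 \leq l \leq k$. Multiplying by $w_1^{a+b+1+k}$, using the identity $(-w_2)^{c-k+l} = (-1)^{k-l}\rho_c(w_2)\, w_2^{c-k+l}$, writing $w_1^{a+b+1+k} w_2^{c-k+l} = w_1^{a+b+1+l} w_2^c (w_1/w_2)^{k-l}$, and reindexing $(k,l) \mapsto (k-l, l)$ to an unrestricted double sum over $\mathbb{Z}_{\geq 0}^2$ will recover Term~3 of the theorem with the stated coefficients $C_{kl}^{(3)}$.

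For the first series I invoke the $N=3$ asymptotic formula for $G(a',c,d;w)$ as $w \to 0$, which is the hypergeometric analogue of Theorem \ref{mainth2} and is the content of Section \ref{hypergeometricsec}. Applied to $B_k^{(1)}(w_2)$, which is a constant times $G(a+b-k,c,d;w_2)$, this produces two sub-expansions,
\begin{align*}
\sum_{l=0}^\infty \alpha_{kl}\, w_2^l + \rho_{a+b+c-k}(w_2)\sum_{l=0}^\infty \beta_{kl}\, w_2^{a+b+c+1-k+l},
\end{align*}
where $\alpha_{kl}$ and $\beta_{kl}$ are explicit products of $\Gamma$-factors and $H$-integrals matching $H(a+b+c-k-l,d)$ and $H(a+b-k+l,c)$ respectively. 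Multiplying by $w_1^k$, the first piece directly yields Term~1. The second piece, after the rewriting $w_1^k w_2^{a+b+c+1-k+l} = w_2^{a+b+c+1+l}(w_1/w_2)^k$ and the phase identity $\rho_{a+b+c-k}(w_2) = (-1)^k \rho_{a+b+c}(w_2)$, yields Term~2.

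The main obstacle is to turn this nested expansion into an honest joint asymptotic expansion in the claimed sense. The coefficient $B_k^{(1)}(w_2)$ is not uniformly bounded as $w_2 \to 0$: its subdominant piece carries $w_2^{a+b+c+1-k}$, which blows up for large $k$. The hypothesis $|w_1/w_2| < 1-\delta$ is exactly what tames this, since $|w_1^k w_2^{a+b+c+1-k}| \leq (1-\delta)^k |w_2|^{\re(a+b+c+1)}$, allowing the doubly-indexed expansion to be reorganized into an asymptotic series in the two variables $w_2$ and $w_1/w_2$. The remainder of the argument is careful but routine bookkeeping of $\Gamma$-ratios, signs, and $\rho$-factors to match the precise coefficients $C_{kl}^{(j)}$ displayed in the theorem.
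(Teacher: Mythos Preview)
Your approach is genuinely different from the paper's. The paper does not iterate asymptotic expansions; instead it first proves an \emph{exact} three-term identity (Proposition \ref{FRQprop}),
\[
F(w_1,w_2) = R_1(w_1,w_2) + w_2^{a+c+1} R_2(w_1,w_2) + w_1^{a+b+1} Q_2(w_1,w_2),
\]
by a Pochhammer contour computation that further splits $Q_1 = R_1 + w_2^{a+c+1} R_2$ (here $R_1$ has both $w_1$ and $w_2$ inside the contour near $0$, and $R_2$ is a Pochhammer integral in which $w_1/w_2$ sits inside the contour). Each of $R_1, R_2, Q_2$ is then a regular integral in the regime $w_1,w_2 \to 0$ with $|w_1/w_2| < 1-\delta$, and one simply Taylor-expands the integrands (in particular the factor $(sw_2 - w_1)^b$ in $R_2$ via the binomial series in $w_1/w_2$). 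The three terms yield $C^{(1)}$, $C^{(2)}$, $C^{(3)}$ directly, with no iterated remainder control needed.

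Your plan---apply Theorem \ref{mainth2} in $w_1$, then expand each $B_k^{(1)}(w_2) \propto G(a+b-k,c,d;w_2)$ as $w_2 \to 0$---is formally correct and reproduces the right coefficients, but the step you label ``careful but routine bookkeeping'' is precisely the analytic crux. Theorem \ref{mainth2} is an asymptotic statement as $w_1 \to 0$ with $w_2$ fixed; its remainder after truncation at order $N$ is $O(|w_1|^N)$ with an implied constant that depends on $w_2$ and in fact blows up as $w_2 \to 0$. Your observation that $|w_1^k w_2^{a+b+c+1-k}| \le (1-\delta)^k |w_2|^{a+b+c+1}$ bounds the individual terms of the formal series, but it does not by itself bound the remainder $F - (\text{partial sum})$. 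To get uniform remainder estimates you must return to the exact identity $F = Q_1 + w_1^{a+b+1} Q_2$ of Proposition \ref{FQprop} and then prove a uniform expansion of $Q_1$ in $w_2$; doing so rigorously amounts to establishing $Q_1 = R_1 + w_2^{a+c+1} R_2$, which is exactly the paper's Proposition \ref{FRQprop}. So your route is viable, but the missing ingredient is an identity, not bookkeeping, and once you supply it you have essentially recovered the paper's proof.
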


\begin{thm}[Asymptotics as $w_1 \to 0$ and $w_2 \to 1$]\label{mainth4}
Let $a,b,c,d \in \R \setminus \Z$ be such that $a+b,c+d \notin\Z$. 
Then $F$ satisfies the following asymptotic expansion to all orders as $w_1 \to 0$ and $w_2 \to 1$ with $w_1,w_2 \in \C \setminus [0, \infty)$:
\begin{align*}
F(w_1, w_2) \sim &\; \sum_{k=0}^\infty\sum_{l=0}^\infty \frac{ \rho_c(w_2)}{k!\, l!}\bigg\{ D_{kl}^{(1)} w_1^k (w_2- 1)^l 
+ D_{kl}^{(2)} w_1^k (w_2-1)^{c+d+1+ l}
	\\
& + \rho_{a+b}(w_1) D_{kl}^{(3)} w_1^{a+b+1+ k}(w_2-1)^l\bigg\}
\end{align*}
where the coefficients $D_{kl}^{(j)} \equiv D_{kl}^{(j)}(a,b,c,d)$, $j = 1,2,3$, are given by
\begin{align*}
& D_{kl}^{(1)} = \frac{(e^{2\pi i a} -1)(e^{2\pi i d} -1)}{(e^{2\pi i(a+b)} - 1)(e^{2\pi i(c+d)} - 1)} \frac{\Gamma(b+1)\Gamma(c+1)(-1)^k}{\Gamma(b+1-k)\Gamma(c+1-l)} H(a+b-k, c+d-l),
	\\
& D_{kl}^{(2)} = \frac{(e^{2\pi i a} -1) e^{\pi i d}}{1 - e^{2\pi i (c+d)}} \frac{\Gamma(b+1)\Gamma(a+b+1-k)(-1)^k}{\Gamma(b+1-k)\Gamma(a+b+1-k-l)}H(c, d+l),
	\\
& D_{kl}^{(3)} = \frac{(e^{2\pi i d} -1) e^{\pi i a}}{e^{2\pi i (a+b)} - 1} \frac{\Gamma(c+1)\Gamma(-c-d+k-l)}{\Gamma(c+1-l)\Gamma(-c-d+l)}H(a+k, b).
\end{align*} 
\end{thm}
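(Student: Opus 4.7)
The plan is to iterate the basic splitting idea used to prove Theorems \ref{mainth1}--\ref{mainth2}, applying it first with respect to $w_1$ (to resolve the $w_1 \to 0$ limit) and then with respect to $w_2$ on one of the resulting pieces (to resolve $w_2 \to 1$). Following the method outlined in the introduction, I first decompose $F = Q_1 + Q_2^{(1)}$, where $Q_1$ is defined by the same integrand as $F$ but with $w_1$ lying inside the Pochhammer loop around $0$, and $Q_2^{(1)}$ is the corresponding Pochhammer integral whose contour encircles $\{0, w_1\}$ in place of $\{0,1\}$. I then further split $Q_1 = R_1 + R_2$ by the analogous manoeuvre at $w_2 = 1$: $R_1$ has both $w_1$ inside the loop around $0$ and $w_2$ inside the loop around $1$, while $R_2$ is the Pochhammer integral whose contour encircles $\{w_2, 1\}$ (with $w_1$ still enclosed in a separate loop around $0$). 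Each of $R_1$, $R_2$, and $Q_2^{(1)}$ then admits a direct termwise expansion of the integrand.

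For $R_1$, both $(v-w_1)^b$ and $(v-w_2)^c$ are regular in $w_1$ at $0$ and in $w_2$ at $1$ respectively on the contour, so expanding them by the binomial series \eqref{vw1expansion} (and its $w_2 \to 1$ analogue) and integrating termwise produces a double series in integer powers $w_1^k (w_2-1)^l$ whose coefficients reduce to $H(a+b-k, c+d-l)$; this yields the $D_{kl}^{(1)}$ contribution. For $R_2$, rescaling the loop around $\{w_2, 1\}$ via $v = 1 + (w_2-1)s$ extracts a factor $(w_2-1)^{c+d+1}$, and expanding the remaining $v^a (v-w_1)^b$ at $v=1$ and in $w_1$ produces the series in $w_1^k (w_2-1)^{c+d+1+l}$ matching $D_{kl}^{(2)}$. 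For $Q_2^{(1)}$, the change of variables $v = w_1 s$ turns the loop around $\{0, w_1\}$ into one around $\{0,1\}$ in $s$ and extracts $w_1^{a+b+1}$ together with the phase $\rho_{a+b}(w_1)$ coming from the principal-branch convention; the remaining factors $(w_1 s - w_2)^c (1 - w_1 s)^d$ are then expanded in $w_1$, and writing $-w_2 = -1 - (w_2-1)$ and invoking \eqref{rhodef} gives $(-w_2)^{c-\bullet} = \rho_c(w_2)(-1)^{\bullet}(1 + (w_2-1))^{c-\bullet}$, yielding an integer-power series in $(w_2-1)$ that supplies the $D_{kl}^{(3)}$ contribution.

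I expect the main obstacle to be the careful bookkeeping of phase factors and monodromy ratios. Each of the two Pochhammer splittings introduces a factor of the form $(e^{2\pi i \alpha}-1)/(e^{2\pi i \beta}-1)$, and each change of variables introduces principal-branch discrepancies encoded by $\rho_c(w_2)$ and $\rho_{a+b}(w_1)$; verifying that these combine, via standard Gamma reflection and Pochhammer-shift identities, into precisely the stated $D_{kl}^{(j)}$ is routine but intricate. A structural sanity check is built into the argument: no ``doubly shifted'' term $w_1^{a+b+1+k}(w_2-1)^{c+d+1+l}$ should appear, and indeed only the $Q_1$ branch of the $w_1$-split is further split at $w_2 = 1$, while the small contour of $Q_2^{(1)}$ around $\{0, w_1\}$ remains geometrically separated from both $w_2$ and $1$ and therefore contributes no additional monodromy there.
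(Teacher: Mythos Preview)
Your proposal is correct and follows essentially the same route as the paper: the paper first applies the $w_1\to 0$ splitting of Proposition~\ref{FQprop} to write $F=Q_1+w_1^{a+b+1}Q_2$, and then further splits $\tilde Q_1=T_1+(w_2-1)^{c+d+1}P_2$ (Proposition~\ref{FSQprop}), so that your $R_1,R_2,Q_2^{(1)}$ correspond precisely to the paper's $T_1,P_2,Q_2$. The termwise expansions you describe for each piece, and your observation that no doubly shifted term $w_1^{a+b+1+k}(w_2-1)^{c+d+1+l}$ arises because only the $Q_1$ branch is re-split, match the paper's derivation.
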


\begin{remark}\upshape
For definiteness, we have stated Theorem \ref{mainth3} under the assumption that $|w_1/w_2| < 1 - \delta$. It will be clear from the full description of the method in Section \ref{methodsec} that the case $|w_2/w_1| < 1 - \delta$ can be treated similarly. 
The case $|w_1| \asymp |w_2|$ can also be handled by similar steps, but in this case the coefficients depend on the quotient $\alpha := w_2/w_1$. In fact, a slight modification of the proof of Theorem \ref{mainth3} yields the following result (see Remark \ref{thmEremark}): If $a,b,c,d \in \R \setminus \Z$ satisfy $a+b, a+b+c \notin\Z$ and $w_2 = \alpha w_1$, then $F$ satisfies the following asymptotic expansion to all orders as $w_1 \to 0$ and $w_2 \to 0$ with $w_1,w_2 \in \C \setminus [0, \infty)$ such that $\alpha \in \C \setminus [0, \infty)$ and $\delta < |\alpha| < \delta^{-1}$ for some $\delta > 0$:
\begin{align}\nonumber
F(w_1, w_2) \sim &\; \sum_{k=0}^\infty\sum_{l=0}^\infty \frac{E_{kl}^{(1)}}{k!l!} w_1^k w_2^l
+ \sum_{l=0}^\infty \frac{1}{l!} \bigg\{ E_{l}^{(2)}(\alpha)\rho_{a+b+c}(w_2) w_2^{a+b+c+1+l}
	\\ \label{FexpansionE}
& + E_{l}^{(3)}(\alpha) \rho_{a+b+c}(w_1) w_1^{a+b+c+1+l} \bigg\},
\end{align}
where $E_{kl}^{(1)} = C_{kl}^{(1)}$  and the coefficients $E_{kl}^{(j)}(\alpha)  \equiv E_{kl}^{(j)}(a,b,c,d; \alpha)$, $j = 2,3$, are given by
\begin{align*}
& E_{kl}^{(2)}(\alpha) = \frac{(e^{2\pi i a} -1)(e^{2\pi i d} -1)e^{\pi i(a+b)}}{(e^{2\pi i(a+b)} - 1)(e^{2\pi i(a+b+c)}-1)}  \frac{\Gamma(d+1) (-1)^{l}}{\Gamma(d+1-l)} G_{\text{inside}}(a+l,b,c; \alpha^{-1}),
	\\
& E_{kl}^{(3)}(\alpha) = \frac{(e^{2\pi i d} -1)e^{\pi i (a+c)}}{e^{2\pi i(a+b)} - 1}  \frac{\Gamma(d+1) (-1)^{l}}{\Gamma(d+1-l)} G(a+l,c,b;\alpha).
\end{align*} 
Here $G_{\text{inside}}(a,b,c;w)$ is defined by the same formula (\ref{Gdef}) as $G(a,b,c;w)$ except that $w$ is assumed to lie inside the contour in the same component as $0$. 
\end{remark}

\begin{remark}\upshape
The function $H(a,d)$ defined in (\ref{Hdef}) can be expressed in terms of Gamma functions as follows:
$$H(a,d) = (-1 + e^{2\pi i a} - e^{2\pi i (a+d)} + e^{2\pi i d})\frac{\Gamma(a+1)\Gamma(d+1)}{\Gamma(a+d+2)}.$$
This formula is easily derived by collapsing the Pochhammer contour onto the interval $[0,1]$ and comparing the resulting expression with the definition of the Euler beta function.
\end{remark}

\begin{remark}\upshape
The function $G(a,d)$ defined in (\ref{Gdef}) can be expressed in terms of the hypergeometric function ${}_2F_1$ as follows:
\begin{align}\label{G2F1}
G(a,c,d; w) = \rho_c(w)\frac{4\pi^2w^c \, {}_2F_1(-c,a+1,a+d+2; 1/w)}{e^{-\pi i (a+d+2)}\Gamma(a+d+2)\Gamma(-a)\Gamma(-d)}, \qquad w \in \C \setminus [0, \infty).
\end{align}
Indeed, if $\tilde{G}$ denotes the function
\begin{align*}
\tilde{G}(a,c,d; w) = \int_A^{(0+,1+,0-,1-)} v^a (w - v)^{c} (1-v)^{d} dv, \qquad w \in \C \setminus (-\infty,1],
\end{align*}
where $w$ lies exterior to the contour, then $G$ and $\tilde{G}$ are related by
\begin{align}\label{GtildeGrelation}
G(a,c,d; w) = \rho_c(w) \tilde{G}(a,c,d; w), \qquad w \in \C \setminus \R.
\end{align}
The expression (\ref{G2F1}) follows because the Pochhammer integral expression for the hypergeometric function ${}_2F_1$ (see e.g., \cite[Eq. (15.6.5)]{DLMF}) implies that
\begin{align}\label{tildeG2F1}
 \tilde{G}(a,c,d; w) = \frac{4\pi^2w^c \, {}_2F_1(-c,a+1,a+d+2; 1/w)}{e^{-(a+d+2)\pi i}\Gamma(a+d+2)\Gamma(-a)\Gamma(-d)}, \qquad w \in \C \setminus (-\infty,1].
\end{align}
\end{remark}

\section{The hypergeometric case of $N =3$}\label{hypergeometricsec}
Before turning to the full description of the method and the proofs of Theorem \ref{mainth1}-\ref{mainth4}, it is helpful to consider, as motivation, the case $N = 3$ in which the integral in (\ref{pochhammerexpression2}) reduces to a hypergeometric function. 

Let $G(a,c,d; w)$ be the function defined in (\ref{Gdef}) and corresponding to (\ref{pochhammerexpression2}) with $N = 3$. Equation (\ref{G2F1}) expresses $G(a,c,d; w)$ in terms of the hypergeometric function ${}_2F_1$ and we can use known properties of this function to derive the asymptotics of $G$ as $w \to 0$ or $w \to 1$. For definiteness, we consider the limit $w \to 1$. We will show the following analog of Theorem \ref{mainth1}.

\begin{prop}[Asymptotics of $G$ as $w \to 1$]
Let $a,c,d \in \R \setminus \Z$ be such that $c+d \notin\Z$. 
Then $G$ satisfies the following asymptotic expansion to all orders as $w \to 1$ with $w \in \C \setminus [0, \infty)$:
\begin{align}\label{Gexpansion}
G(a,c,d; w) \sim &\; \sum_{k=0}^\infty \bigg\{\hat{A}_k^{(1)} (w-1)^k 
 + \hat{A}_k^{(2)} (w-1)^{c+d+1+k}\bigg\}\rho_c(w), 
\end{align}
where the coefficients $\hat{A}_k^{(j)} \equiv \hat{A}_k^{(j)}(a,c,d)$, $j = 1,2$, are given by
\begin{subequations}\label{hatAdef}
\begin{align}
& \hat{A}_k^{(1)} = \frac{e^{2\pi i d} - 1}{e^{2\pi i(c+d)} -1} \frac{\Gamma(c+1)}{\Gamma(c+1-k)k!} 
H(a,c+d-k),
	\\
& \hat{A}_k^{(2)} = \frac{(e^{2\pi ia} -1)e^{\pi i d}}{1 - e^{2\pi i(c+d)}} 
 \frac{\Gamma(a+1)}{\Gamma(a+1-k)k!} H(c,d+k). 
\end{align} 
\end{subequations} 
\end{prop}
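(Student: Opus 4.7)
The plan is to deduce (\ref{Gexpansion}) directly from the hypergeometric representation (\ref{G2F1}) together with the classical Kummer connection formula for ${}_2F_1$ at the singular point $z = 1$.

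Starting from (\ref{G2F1}), I first observe that as $w \to 1$ with $w \in \C \setminus [0,\infty)$, the variable $z = 1/w$ tends to $1$, so the asymptotics of $G$ reduce to those of $w^c \, {}_2F_1(-c, a+1; a+d+2; 1/w)$ multiplied by the $w$-independent constant $C = 4\pi^2/[e^{-i\pi(a+d+2)}\Gamma(a+d+2)\Gamma(-a)\Gamma(-d)]$ and the factor $\rho_c(w)$. I then apply Kummer's formula
\begin{align*}
{}_2F_1(\alpha,\beta;\gamma;z) = &\;\frac{\Gamma(\gamma)\Gamma(\gamma-\alpha-\beta)}{\Gamma(\gamma-\alpha)\Gamma(\gamma-\beta)}\,{}_2F_1(\alpha,\beta;\alpha+\beta-\gamma+1;1-z) \\
& + (1-z)^{\gamma-\alpha-\beta}\,\frac{\Gamma(\gamma)\Gamma(\alpha+\beta-\gamma)}{\Gamma(\alpha)\Gamma(\beta)}\,{}_2F_1(\gamma-\alpha,\gamma-\beta;\gamma-\alpha-\beta+1;1-z)
\end{align*}
with $(\alpha,\beta,\gamma)=(-c,a+1,a+d+2)$; the hypothesis $c+d \notin \Z$ guarantees $\gamma-\alpha-\beta = c+d+1 \notin \Z$, so Kummer's formula applies without degeneracy. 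This splits ${}_2F_1(-c,a+1;a+d+2;1/w)$ into an analytic series in $(1-z)$ and a series with prefactor $(1-z)^{c+d+1}$, thereby producing the two power scales $(w-1)^k$ and $(w-1)^{c+d+1+k}$ appearing in the statement.

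Next, I substitute $1-z = (w-1)/w$; since $w$ near $1$ in $\C \setminus [0,\infty)$ keeps $\arg w$ small and $\arg(w-1) \in (-\pi,\pi)$, the principal branches are consistent and one may factor $((w-1)/w)^{c+d+1} = (w-1)^{c+d+1}\, w^{-(c+d+1)}$. Expanding $w^c$ and $w^{-(c+d+1)}$ binomially in $(w-1)$, multiplying by the two ${}_2F_1$ power series in $(w-1)/w$, and collecting like powers of $(w-1)$ produces finite-sum coefficients of the form
$$\frac{\Gamma(c+1)}{\Gamma(c+1-k)\,k!}\sum_{n=0}^{k}\binom{k}{n}\frac{(-1)^n(a+1)_n}{(-c-d)_n}$$
(for the analytic part), and an analogous sum for the singular part. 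Each such finite sum is a terminating ${}_2F_1(-k,\,\cdot\,;\,\cdot\,;1)$ and collapses by Chu--Vandermonde to a single ratio of Pochhammer symbols; rewriting these as ratios of Gammas yields precisely the Gamma skeletons $\Gamma(c+d-k+1)/\Gamma(a+c+d-k+2)$ and $\Gamma(a+1)\Gamma(d+k+1)\Gamma(c+d+2)/[\Gamma(a+1-k)\Gamma(d+1)\Gamma(c+d+k+2)]$ that are carried (after dividing out the $H$-factors) by $\hat A_k^{(1)}$ and $\hat A_k^{(2)}$ in (\ref{hatAdef}).

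The final step is algebraic: I apply Euler's reflection formula $\Gamma(x)\Gamma(1-x) = \pi/\sin(\pi x)$ to the $\Gamma(-a)$, $\Gamma(-c)$, $\Gamma(-d)$, $\Gamma(-c-d-1)$ appearing in $C$ and in the Kummer prefactors, and rewrite $\sin(\pi x) = (e^{2\pi i x}-1)/(2ie^{i\pi x})$. Combined with the factored form $H(x,y) = (e^{2\pi i x}-1)(1-e^{2\pi i y})\,\Gamma(x+1)\Gamma(y+1)/\Gamma(x+y+2)$ of the $H$-formula in the Remark preceding this section, the assembled prefactors match $\hat A_k^{(1)}$ and $\hat A_k^{(2)}$ on the nose. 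The main obstacle is this concluding bookkeeping: it is routine in principle but lengthy, and one must carefully track the $e^{i\pi(\cdot)}$ factors generated by every use of the reflection formula so that the normalisation constant, the Kummer prefactor $\Gamma(\gamma)\Gamma(\gamma-\alpha-\beta)/[\Gamma(\gamma-\alpha)\Gamma(\gamma-\beta)]$ or $\Gamma(\gamma)\Gamma(\alpha+\beta-\gamma)/[\Gamma(\alpha)\Gamma(\beta)]$, and the $H$-factor collapse exactly into the ratios $(e^{2\pi i d}-1)/(e^{2\pi i(c+d)}-1)$ and $(e^{2\pi i a}-1)e^{i\pi d}/(1-e^{2\pi i(c+d)})$ stated in the proposition.
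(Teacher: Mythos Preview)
Your approach is correct and, interestingly, is exactly the route the paper itself mentions as a valid alternative before declining to take it. The paper's proof also begins with the hypergeometric representation (\ref{G2F1}) and Kummer's connection formula, obtaining the decomposition $\tilde{G}(a,c,d;w)=\hat P_1(w)+(w-1)^{c+d+1}\hat P_2(w)$ with $\hat P_1,\hat P_2$ analytic at $w=1$; this establishes the two power scales in (\ref{Gexpansion}) just as you do. For the explicit coefficients, however, the paper explicitly remarks that pushing through the series manipulations you outline ``requires a somewhat elaborate resummation of the coefficients'' and instead recommends computing $\hat A_k^{(1)},\hat A_k^{(2)}$ by the contour-decomposition method developed for Theorem~\ref{mainth1} (i.e., the $P_1,P_2$ functions of Proposition~\ref{FPprop}, specialized to $N=3$). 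In that approach one moves $w$ inside the Pochhammer loop around $1$ and expands $(w-v)^c$ about $w=1$ to read off $\hat A_k^{(1)}$ directly as $H(a,c+d-k)$, and similarly handles the $\{w,1\}$-Pochhammer piece after the change of variable $s=(v-w)/(1-w)$ to get $\hat A_k^{(2)}$ as $H(c,d+k)$, with no Chu--Vandermonde step and no reflection-formula bookkeeping at all.

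What each approach buys: your direct hypergeometric computation is self-contained and needs no new machinery beyond standard special-function identities, but the Chu--Vandermonde collapse and the final reflection-formula bookkeeping are (as you acknowledge) lengthy and must be done carefully. The paper's contour method gives the coefficients in the desired $H$-form immediately and, more importantly, is the prototype for the $N\ge 4$ case where no closed hypergeometric identity like Kummer's is available---which is the paper's real point.
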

\begin{proof}
The function ${}_2F_1(a,c,d; z)$ is an analytic function of $z$ with a branch cut along $[1, \infty)$; in particular, it is not analytic at $z = 1$. In order to find the asymptotics of $G$ as $w \to 1$, we therefore first use the hypergeometric identity (see \cite[Eq. (10.12)]{O1974})
\begin{align}\nonumber
{}_2F_1(a, b, c; z) = &\; \frac{\Gamma(c)\Gamma(c-a-b)}{\Gamma(c-a)\Gamma(c-b)} {}_2F_1(a, b, 1+a+b-c; 1-z)
	\\  \nonumber
& + \frac{\Gamma(c)\Gamma(a+b-c)}{\Gamma(a)\Gamma(b)}(1-z)^{c-a-b} {}_2F_1(c-a, c-b, 1+c-a-b; 1-z), 
	\\  \label{2F1identity}
&\hspace{6cm} z \in \C \setminus ((-\infty, 0] \cup [1, \infty)),
\end{align}
to rewrite equation (\ref{tildeG2F1}) as 
\begin{align}\nonumber
\tilde{G}(a,c,d; w) = & -\frac{4 \pi  e^{i \pi  (a+d+2)} \sin (\pi  a) \Gamma (a+1) \sin (\pi  d) \csc (\pi 
   (c+d))}{\Gamma (-c-d) \Gamma (a+c+d+2)}  w^c 
   	\\\nonumber
&\times    {}_2F_1\bigg(1+a,-c,-c-d; 1-\frac{1}{w}\bigg)
	\\\nonumber
& + \frac{4 \pi  e^{i \pi  (a+d+2)} \sin (\pi  a)  \sin (\pi  d) \Gamma (d+1)}{\Gamma (-c) \Gamma (c+d+2) \sin(\pi(c+d))}w^c \bigg(\frac{w-1}{w}\bigg)^{c+d+1}	
      	\\ \label{Fhypergeometric2}
&\times  {}_2F_1\bigg(d+1,a+c+d+2;c+d+2; 1-\frac{1}{w}\bigg), \quad w \in \C \setminus(-\infty,1].
\end{align}
The hypergeometric functions in (\ref{Fhypergeometric2}) are analytic at $w = 1$. Hence we can write (\ref{Fhypergeometric2}) as
\begin{align}\label{hyperFPidentity}
\tilde{G}(a,c,d; w) = \hat{P}_1(w) + (w-1)^{c+d+1} \hat{P}_2(w)
\end{align}
where $\hat{P}_j(w) \equiv \hat{P}_j(a,c,d; w)$, $j = 1,2$, are analytic at $w = 1$. Recalling (\ref{GtildeGrelation}), it follows that $G$ admits an expansion of the form (\ref{Gexpansion}) for some complex coefficients $\hat{A}_k^{(j)}$.
It is possible to derive the expressions (\ref{hatAdef}) for these coefficients from (\ref{Fhypergeometric2}) by employing the expansions
$$w^c \sim \sum_{j = 0}^\infty \frac{\Gamma(c+1)}{\Gamma(c+1-j)}\frac{(w-1)^j}{j!}, \qquad \bigg(\frac{w-1}{w}\bigg)^a \sim \sum_{l = 0}^\infty \frac{(-1)^l \Gamma(a+l)}{\Gamma(a)} \frac{(w-1)^{a+l}}{l!},$$
which are valid as $w \to 1$, together with the identity
$${}_2F_1(a,b,c; z) = \sum_{k=0}^\infty \frac{(a)_k(b)_k}{(c)_k} \frac{z^k}{k!},$$
where the Pochhammer symbol $(a)_k$ is defined by
$$(a)_k = \frac{\Gamma(a+k)}{\Gamma(a)} = a(a+1)(a+2) \cdots (a+k-1).$$
However, in order to arrive at the simple expressions in (\ref{hatAdef}), this approach requires a somewhat elaborate resummation of the coefficients and it is actually more convenient to derive (\ref{hatAdef}) by proceeding as in the proof of Theorem \ref{mainth1} below.
\end{proof}

\section{Description of method}\label{methodsec}
In this section, we describe our method by considering, in turn, the following four asymptotic sectors for the function $F(w_1, w_2)$ defined in (\ref{Fdef}): (a) $w_2$ near $1$, (b) $w_1$ near $0$, (c) $w_1$ and $w_2$ both near $0$, and (d) $w_1$ near $0$ and $w_2$ near $1$. The limits considered in Theorem \ref{mainth1}-\ref{mainth4} belong to these four sectors, respectively, and the proofs of these theorems will also be given.

The basic idea of our method is to show that it is possible, for each asymptotic sector under consideration, to derive a generalization of the hypergeometric identity (\ref{hyperFPidentity}) from which asymptotics to arbitrary order can be obtained by simply replacing the factors in the integrand with their asymptotic expansions. Actually, we will see in Section \ref{greensec} that it is often convenient in applications to work with the generalizations of the hypergeometric identity (\ref{hyperFPidentity}) themselves. These generalizations are presented in Proposition \ref{FPprop}-\ref{FSQprop}, respectively. Throughout the discussion, $b,c \in \R$ and $a,d \in \R \setminus \Z$ denote some given parameters and we write $F(w_1, w_2)$ for $F(a,b,c,d; w_1,w_2)$. Furthermore, we let $\mathcal{D}_0 \subset \C^2$ and $\mathcal{D}_1 \subset \C^2$ denote the domains
$$\mathcal{D}_0 = \{(w_1,w_2) \in \C^2\,|\, w_2 \in \C \setminus [0, \infty), \; w_1 \in \C \setminus ([0,\infty) \cup \gamma_{(w_2,\infty)})\}$$
and
\begin{align}\label{D1def}
\mathcal{D}_1 = \{(w_1,w_2) \in \C^2\,|\, w_1 \in \C \setminus [0, \infty), \; w_2 \in \C \setminus ((-\infty,1] \cup \gamma_{(w_1,\infty)})\},
\end{align}
where $\gamma_{(w_j,\infty)} \subset \C$ denotes a branch cut from  $w_j$  to $\infty$. These branch cuts will be needed to make certain functions below single-valued; to be specific, we henceforth choose $\gamma_{(w_j,\infty)} = \{rw_j \, | \, r \geq 1\}$.

\subsection{The sector $w_2 \to 1$}
We will determine the behavior of $F(w_1, w_2)$ for $w_2$ close to $1$ by deriving a generalization of (\ref{hyperFPidentity}).

Define $\tilde{F}(w_1, w_2) \equiv \tilde{F}(a,b,c,d; w_1, w_2)$ by
\begin{align}\label{Ftildedef}
& \tilde{F}(w_1, w_2) = \int_A^{(0+,1+,0-,1-)} v^a (v-w_1)^b (w_2- v)^{c} (1-v)^{d} dv, \qquad (w_1, w_2) \in \mathcal{D}_1,
\end{align}
where $w_1$ and $w_2$ lie exterior to the contour. Then 
\begin{align}\label{FtildeF}
F(w_1,w_2) = \rho_c(w_2) \tilde{F}(w_1,w_2), \qquad (w_1, w_2) \in \mathcal{D}_0 \cap \mathcal{D}_1,
\end{align}
where $\rho_c$ is the function in (\ref{rhodef}). 

Assuming that $c+d \notin\Z$, we define two functions $P_j:\mathcal{D}_1 \to \C$, $j = 1,2$, as follows. 
The function $P_1$ is defined (up to a constant) by the same formula as $\tilde{F}$ except that the point $w_2$ is assumed to lie inside the contour in the same component as $1$; more precisely, for $(w_1,w_2) \in \mathcal{D}_1$,
\begin{align}\label{P1def}
& P_1(w_1, w_2) = \frac{e^{2\pi i d} - 1}{e^{2\pi i(c+d)} -1} \int_A^{(0+,1+,0-,1-)} v^{a} (v-w_1)^b (w_2 - v)^{c} (1-v)^{d} dv,  
\end{align}
where $w_1$ lies outside the contour and $w_2$ lies inside the contour in the same component as $1$, see Figure \ref{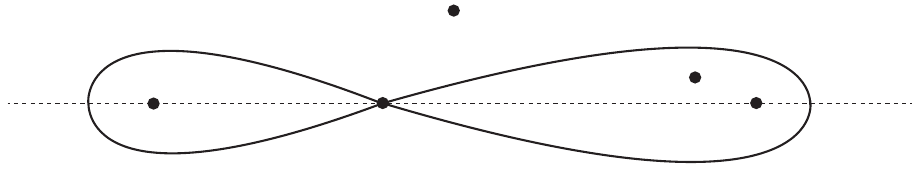}. 
\begin{figure}
\begin{center}
      \begin{overpic}[width=.75\textwidth]{Lcontourw1w2.pdf}
     \put(101,6){\small $\re v$}
     \put(15.8,3.5){\small $0$}
     \put(40,3.5){\small $A$}
     \put(81.5,3.5){\small $1$}
     \put(70.8,9){\small $w_2$}
     \put(50.7,16.3){\small $w_1$}
     \end{overpic}
      \begin{figuretext}\label{Lcontourw1w2.pdf}
       In the definition of the function $P_1$, the point  $w_1$ lies exterior to the contour, whereas $w_2$ lies inside the contour in the same component as $1$. 
       \end{figuretext}
     \end{center}
\end{figure}
The function $P_2:\mathcal{D}_1 \to \C$ is defined as follows.
First, given $w_1 \in \C \setminus [0, \infty)$, we define $P_2(w_1, w_2)$ for $\re w_2 \in (0,1)$ with $\im w_2 > 0$ sufficiently small by
\begin{align}\nonumber
P_{2}&(w_1, w_2) = \frac{(e^{2\pi ia} -1)e^{\pi i d}}{1 - e^{2\pi i(c+d)}}  
	\\ \label{P2def}
&\times \int_A^{(0+,1+,0-,1-)} (w_2+s(1-w_2))^a (w_2 - w_1 + s(1-w_2))^b s^{c} (1-s)^{d} ds, 
\end{align}
where $A \in (0,1)$ and the points $\frac{w_2}{w_2-1}$ and $\frac{w_1 - w_2}{1 - w_2}$  are assumed to lie exterior to the contour. Then, for each $w_1 \in \C \setminus [0, \infty)$, we use analytic continuation to extend $P_2$ to a (single-valued) analytic function of $w_2 \in \C \setminus ((-\infty,1] \cup \gamma_{(w_1,\infty)})$. The latter step is permissible because the function $P_2$ can be analytically continued as long as the points $\frac{w_2}{w_2-1}$ and $\frac{w_1 - w_2}{1 - w_2}$ stay away from the set $\{0,1, \infty\}$, i.e., as long as $w_2 \notin \{0,1,w_1, \infty\}$.


Let $f(w_1, w_2\pm i0)$ denote the boundary values of a function $f(w_1, w_2)$ as $w_2$ approaches the real axis from above and below, respectively. The following lemma provides the desired generalization of the hypergeometric identity (\ref{hyperFPidentity}).

\begin{prop}\label{FPprop}
Suppose $a,b,c,d \in \R$ and $a,d, c+d \notin\Z$. Then the function $F$ obeys the identity
\begin{align}\label{FPidentity} 
& F(w_1, w_2) = \rho_c(w_2)\big[P_1(w_1, w_2) + (w_2-1)^{c+d+1} P_2(w_1, w_2)\big], \qquad (w_1, w_2) \in \mathcal{D}_0. 
\end{align}
\end{prop}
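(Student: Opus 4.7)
By \eqref{FtildeF}, it suffices to prove, on the common domain $\mathcal{D}_0 \cap \mathcal{D}_1$, the identity
$$\tilde F(w_1,w_2) = P_1(w_1,w_2) + (w_2-1)^{c+d+1} P_2(w_1,w_2);$$
the extension of \eqref{FPidentity} from $\mathcal{D}_0 \cap \mathcal{D}_1$ to the whole of $\mathcal{D}_0$ will then follow by analytic continuation. The plan is to realize both summands on the right as Pochhammer integrals of the \emph{same} multivalued integrand $v^a(v-w_1)^b(w_2-v)^c(1-v)^d\, dv$ that defines $\tilde F$, but taken along different cycles in the complement of $\{0,w_1,w_2,1\}$.

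For $P_1$ this is immediate from \eqref{P1def}: the integral defining $P_1$ is, up to the explicit prefactor, exactly this integrand taken around a Pochhammer contour around $\{0,1\}$ whose loop around $1$ has been enlarged to enclose $w_2$. For $P_2$ I would apply the affine substitution $v = w_2 + s(1-w_2)$ in \eqref{P2def}. This map sends the singular values $0,1$ in the $s$-plane to $w_2,1$ in the $v$-plane, while the external singular values $w_2/(w_2-1)$ and $(w_1-w_2)/(1-w_2)$ in $s$ go to $0$ and $w_1$ in $v$. Hence the Pochhammer contour around $\{0,1\}$ in $s$ is transported to a Pochhammer contour around $\{1,w_2\}$ in $v$ with $0$ and $w_1$ exterior. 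Collecting the factors of $(1-w_2)$ that emerge from $s^c(1-s)^d\, ds$ and combining them with $(w_2-1)^{c+d+1}$ and the prefactor in \eqref{P2def} expresses $(w_2-1)^{c+d+1}P_2$ as a specific scalar multiple of the integral of the $\tilde F$-integrand along this $\{1,w_2\}$-Pochhammer cycle.

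With both terms on the right written as contour integrals of the common integrand, the proof reduces to the purely homological claim, in the local system of branches on $\C\setminus\{0,w_1,w_2,1\}$, that the $\tilde F$-cycle equals the $P_1$-cycle plus a prescribed scalar multiple of the $\{1,w_2\}$-Pochhammer cycle. I would prove this by deforming the loop around $1$ in the $\tilde F$-cycle outward across $w_2$; the deformed contour is precisely the $P_1$-cycle, and the defect is a Pochhammer-type loop around $\{1,w_2\}$ whose coefficient is determined by tracking, along each leg, the monodromy picked up by $v^a$, $(v-w_1)^b$, $(w_2-v)^c$ and $(1-v)^d$. This bookkeeping is where I expect the main obstacle to lie: four distinct branches of the integrand must be matched at every visit the contour makes to a singular point. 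It is also where the explicit constants $(e^{2\pi i d}-1)/(e^{2\pi i (c+d)}-1)$ in $P_1$ and $(e^{2\pi i a}-1)e^{i\pi d}/(1-e^{2\pi i(c+d)})$ in $P_2$ arise, and the hypothesis $c+d\notin\Z$ is what keeps the denominators away from zero.

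Finally, to pass from $\mathcal{D}_0 \cap \mathcal{D}_1$ to $\mathcal{D}_0$ I would invoke analytic continuation: $\rho_c(w_2)P_1$ is analytic on $\mathcal{D}_0$ because $P_1$ is analytic on $\mathcal{D}_1$ and $\rho_c$ is locally constant on each open half-plane, while the jumps of $\rho_c(w_2)$ across $\R$ and of $(w_2-1)^{c+d+1}$ across $(-\infty,1]$ are designed to cancel in the product $\rho_c(w_2)(w_2-1)^{c+d+1}P_2$, so that this product extends analytically across the part of $(-\infty,0)$ in $\mathcal{D}_0$. Since the left-hand side $F$ is analytic on the connected domain $\mathcal{D}_0$ and both sides agree on the nonempty open subset $\mathcal{D}_0 \cap \mathcal{D}_1$, the identity principle gives \eqref{FPidentity} on all of $\mathcal{D}_0$.
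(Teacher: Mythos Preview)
Your approach is the paper's own, phrased in homological language. The paper likewise reduces to the $\tilde F$ identity on $\mathcal D_1$, reads $P_1$ as the same integrand taken over a Pochhammer contour with $w_2$ moved inside the loop around $1$, and recovers $(w_2-1)^{c+d+1}P_2$ from the same integrand over a Pochhammer contour around $\{w_2,1\}$ via the affine substitution $s=(v-w_2)/(1-w_2)$, i.e.\ your $v=w_2+s(1-w_2)$.

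The difference is execution. You correctly flag the monodromy bookkeeping as the main obstacle and then do not carry it out; this is precisely where the paper does the real work. Rather than appeal to an abstract cycle relation, the paper fixes $w_1\in\C\setminus[0,\infty)$ and $w_2\in(0,1)$, takes boundary values $w_2+i0$, and decomposes both the $\tilde F$- and $P_1$-Pochhammer contours into explicit pieces: arcs $L^j_{w_2\pm\epsilon}$ along $[0,1]$ and semicircles $S_{w_2}^\pm$ around $w_2$, each tagged with the exact phase factor $e^{2\pi i(\cdots)}$ it has accumulated from the analytic continuation of $v^a$, $(w_2-v)^c$, $(1-v)^d$. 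Subtracting and simplifying, what survives is exactly a Pochhammer integral $\int_{w_2+\epsilon}^{(w_2+,1+,w_2-,1-)}$ with an explicit scalar in front, and the change of variable converts it to $|1-w_2|^{c+d+1}e^{i\pi(c+d+1)}P_2(w_1,w_2+i0)$ on the nose. Your outline is sound, but turning it into a proof means actually performing this phase accounting; that computation is how the constants $\frac{e^{2\pi i d}-1}{e^{2\pi i(c+d)}-1}$ and $\frac{(e^{2\pi i a}-1)e^{i\pi d}}{1-e^{2\pi i(c+d)}}$ are \emph{verified} rather than merely cited, and it also handles the branch discrepancy between $(w_2-v)^c$ and $(v-w_2)^c$ that your sketch glosses over when you say the substitution produces ``the $\tilde F$-integrand.''
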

\begin{proof}
By (\ref{FtildeF}), it is enough to show that
\begin{align}\label{tildeFPidentity} 
& \tilde{F}(w_1, w_2) = P_1(w_1, w_2) + (w_2-1)^{c+d+1} P_2(w_1, w_2), \qquad (w_1, w_2) \in \mathcal{D}_1. 
\end{align}
It is actually enough to show that
\begin{align}\nonumber
& \tilde{F}(w_1, w_2+i0) = P_{1}(w_1, w_2+i0) + |1-w_2|^{c+d+1} e^{i\pi(c+d+1)}P_{2}(w_1, w_2+i0), 
	\\ \label{tildeFPplus}
&\hspace{7cm} w_1 \in \C \setminus [0, \infty), \ w_2 \in (0,1).
\end{align}
Indeed, for each $w_1 \in \C \setminus [0, \infty)$, both sides of the equation (\ref{tildeFPidentity}) are analytic functions of $w_2 \in \C \setminus ((-\infty,1] \cup \gamma_{(w_1,\infty)})\}$ which can be extended to multiple-valued analytic functions of  $w_2 \in \C \setminus \{0,1,w_1\}$. Hence (\ref{tildeFPidentity}) follows from (\ref{tildeFPplus}) by analytic continuation. 

Let us prove (\ref{tildeFPplus}). Let $\epsilon > 0$ be small. Let $w_1 \in \C \setminus [0, \infty)$ and $w_2 \in (0,1)$. Given $w \in \C$, let 
\begin{align*}
& S_w^+ = \{w + \epsilon e^{i\phi}  \,  | \, 0 \leq \phi \leq \pi\},
\qquad  S_w^- = \{w + \epsilon e^{i\phi}  \,  | \, -\pi \leq \phi \leq 0\},
\end{align*}
denote counterclockwise semicircles of radius $\epsilon$ centered at $w$, see Figure \ref{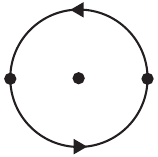}. 
Furthermore, for $A \in (0,1)$, let $L_A^j$, $j = 1, \dots, 4$, denote the contours 
\begin{align*}
& L_A^1 = [A, i\epsilon] \cup [i\epsilon, -\epsilon], && 
L_A^2 = [-\epsilon, -i\epsilon] \cup [-i\epsilon, A],
	\\
& L_A^3 = [A, 1-i\epsilon] \cup [1-i\epsilon, 1+\epsilon], && L_A^4 = [1+\epsilon, 1+ i\epsilon] \cup [1+i\epsilon, A],
\end{align*}
oriented so that $\sum_1^4 L_A^j$ is a counterclockwise contour enclosing $0$ and $1$, see Figure \ref{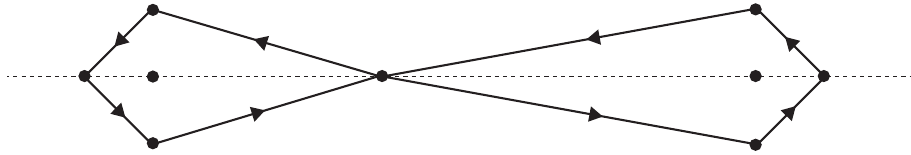}.

\begin{figure}
\medskip
\begin{center}
      \begin{overpic}[width=.12\textwidth]{Swcontours.pdf}
       \put(42,33){\small $w$}
       \put(103,45){\small $w+\epsilon$}
       \put(-50,45){\small $w-\epsilon$}
       \put(39,107){\small $S_w^+$}
       \put(39,-16){\small $S_w^-$}
     \end{overpic}
      \begin{figuretext}\label{Swcontours.pdf}
       The semicircles $S_w^\pm$.
       \end{figuretext}
     \end{center}
\end{figure}

\begin{figure}
\begin{center}
      \begin{overpic}[width=.7\textwidth]{Lcontours.pdf}
     \put(27,15){\small $L_A^1$}
     \put(27,0){\small $L_A^2$}
     \put(63,0){\small $L_A^3$}
     \put(63,15.5){\small $L_A^4$}
     \put(6,5.5){\small $-\epsilon$}
     \put(16,5){\small $0$}
     \put(14,-2.5){\small $-i\epsilon$}
     \put(15.5,17.5){\small $i\epsilon$}
     \put(40.5,5){\small $A$}
     \put(82,5){\small $1$}
     \put(79,-2.5){\small $1-i\epsilon$}
     \put(79,18){\small $1+i\epsilon$}
     \put(91,5.5){\small $1+\epsilon$}
     \end{overpic}
     \bigskip
      \begin{figuretext}\label{Lcontours.pdf}
       The contours $L_A^j$, $j = 1, \dots, 4$.
       \end{figuretext}
     \end{center}
\end{figure}

Then we can write
\begin{align*}
\tilde{F}(w_1, w_2+i0) = &\; \bigg\{\int_{L_{w_2-\epsilon}^1} + e^{2\pi i a}\int_{L_{w_2-\epsilon}^2 + S_{w_2}^- + L_{w_2+\epsilon}^3} + e^{2\pi i(a+c+d)}\int_{L_{w_2+\epsilon}^4}
	\\
& 
- e^{2\pi i(a+d)}\int_{S_{w_2}^- + L_{w_2-\epsilon}^2} 
+ e^{2\pi i d} \int_{-L_{w_2-\epsilon}^1 + S_{w_2}^-}
- e^{2\pi i(c+d)}\int_{L_{w_2+\epsilon}^4} 
	\\
& 
- \int_{L_{w_2 + \epsilon}^3 + S_{w_2}^-}\bigg\}
v^{a} (v - w_1)^b (w_2-v)^{c} (1-v)^{d} dv
\end{align*}	
and
\begin{align*}
P_{1}(w_1, w_2+i0) = &\; \frac{e^{2\pi i d} - 1}{e^{2\pi i(c+d)} -1} \bigg\{\int_{L_{w_2-\epsilon}^1} + e^{2\pi i a}\int_{L_{w_2-\epsilon}^2 + S_{w_2}^- + L_{w_2+\epsilon}^3}
	\\
&+ e^{2\pi i (a+d+c)}\int_{L_{w_2+\epsilon}^4 + S_{w_2}^+ - L_{w_2-\epsilon}^2}
- e^{2\pi i (d+c)} \int_{L_{w_2-\epsilon}^1 + S_{w_2}^+ + L_{w_2+\epsilon}^4}
	\\
& - \int_{L_{w_2+\epsilon}^3+ S_{w_2}^-}\bigg\}
 v^{a} (v - w_1)^b (w_2 - v)^{c} (1-v)^{d} dv.
\end{align*} 
Simplification gives
\begin{align*}
\tilde{F}(w_1, w_2+i0) = &\; \bigg\{(1-e^{2\pi i d})\bigg(\int_{L_{w_2-\epsilon}^1} + e^{2\pi i a}\int_{L_{w_2-\epsilon}^2}\bigg)
	\\
& + (e^{2\pi ia} -1)\bigg(\int_{L_{w_2+\epsilon}^3 + S_{w_2}^-} + e^{2\pi i(d+c)}\int_{L_{w_2 + \epsilon}^4}
- e^{2\pi i d}\int_{S_{w_2}^-}\bigg)\bigg\}
	\\
&\times v^{a} (v - w_1)^b (w_2-v)^{c} (1-v)^{d} dv
\end{align*}	
and
\begin{align*}
P_{1}(w_1, w_2+i0) = &\; \frac{e^{2i\pi d} - 1}{e^{2\pi i(d+c)} -1} 
\bigg\{(1 - e^{2\pi i(d+c)})\bigg(\int_{L_{w_2-\epsilon}^1} + e^{2\pi i a} \int_{L_{w_2-\epsilon}^2}\bigg)
	\\
& + (e^{2\pi i a} -1)\bigg(\int_{L_{w_2+\epsilon}^3 + S_{w_2}^-} + e^{2\pi i(d+c)}\int_{L_{w_2+\epsilon}^4 + S_{w_2}^+}\bigg)\bigg\} 
	\\
& \times v^{a} (v - w_1)^b (w_2 - v)^{c} (1-v)^{d} dv.
\end{align*} 
Hence
\begin{align*}
\tilde{F}(w_1, w_2+i0) & - P_{1}(w_1, w_2+i0) =  \frac{(e^{2\pi ia} -1)\sin(c\pi) e^{i\pi d}}{\sin(\pi(d+c))} \bigg\{\int_{L_{w_2+\epsilon}^3} + e^{2\pi i(d+c)}\int_{L_{w_2+\epsilon}^4} 
	\\
& + \frac{e^{2\pi ic}(e^{2\pi i d} -1)}{1 - e^{2\pi i c}} \int_{S_{w_2}^+ + S_{w_2}^-}  \bigg\} 
 v^{a} (v - w_1)^b (w_2 - v)^{c} (1-v)^{d} dv.
\end{align*} 
Using the identity
$$(w_2 - v)^{c}  = \begin{cases} e^{-i\pi c} (v - w_2)^{c}, & v \in S_{w_2}^+ \cup L_{w_2+\epsilon}^4, \\
e^{i\pi c} (v - w_2)^{c}, & v \in S_{w_2}^- \cup L_{w_2+\epsilon}^3, \end{cases}$$
we can write this as
\begin{align*}
& \tilde{F}(w_1, w_2+i0) - P_{1}(w_1, w_2+i0) =  \frac{(e^{2\pi ia} -1)\sin(c\pi) e^{i\pi (d+c)}}{\sin(\pi(d+c))} \bigg\{\int_{L_{w_2+\epsilon}^3} + e^{2\pi i d}\int_{L_{w_2+\epsilon}^4} 
	\\
& + \frac{e^{2\pi i d} -1}{1 - e^{2\pi i c}} \int_{S_{w_2}^+} + \frac{e^{2\pi i c}(e^{2\pi i d} - 1)}{1 - e^{2\pi i c}} \int_{S_{w_2}^-} \bigg\} 
	\\
& \times v^{a} (v - w_1)^b (v - w_2)^{c} (1-v)^{d} dv, \qquad w_1 \in \C \setminus [0, \infty), \ w_2 \in (0,1),
\end{align*} 
Factoring out $\frac{1}{e^{2\pi i c} - 1}$, we obtain
\begin{align*}
& \tilde{F}(w_1, w_2+i0) - P_{1}(w_1, w_2+i0) = \frac{(e^{2\pi ia} -1)\sin(c\pi) e^{i\pi (d+c)}}{\sin(\pi(d+c))(e^{2\pi i c} - 1)} \bigg\{(e^{2\pi i c} - 1)\int_{L_{w_2+\epsilon}^3} 
	\\
& + e^{2\pi i d}(e^{2\pi i c} - 1)\int_{L_{w_2+\epsilon}^4} + (1-e^{2\pi i d}) \int_{S_{w_2}^+} 
+ e^{2\pi i c}(1- e^{2\pi i d})\int_{S_{w_2}^-} \bigg\} 
	\\
& \times v^{a} (v - w_1)^b (v - w_2)^{c} (1-v)^{d} dv.
\end{align*} 
That is,
\begin{align*}\nonumber
\tilde{F}(w_1, w_2+i0) - P_{1}(w_1, w_2+i0) = &\; \frac{(e^{2\pi ia} -1)\sin(c\pi) e^{i\pi (d+c)}}{\sin(\pi(d+c))(e^{2\pi i c} - 1)} 
\int_{w_2+\epsilon}^{(w_2+,1+,w_2-,1-)}
	\\
& \times v^{a} (v - w_1)^b (v - w_2)^{c} (1-v)^{d} dv.
\end{align*} 
Performing the change of variables $s = \frac{v-w_2}{1-w_2}$, which maps the interval $(w_2,1)$ to the interval $(0,1)$, this yields
\begin{align*}
& \tilde{F}(w_1,w_2+i0) - P_{1}(w_1,w_2+i0) = \frac{e^{i(a+d)\pi}\sin(a\pi)}{\sin(\pi(d+c))} \int_A^{(0+,1+,0-,1-)}(w_2 + s(1-w_2))^{a} 
	\\
& \times (w_2 + s(1-w_2) - w_1)^b (s(1-w_2))^{c} ((1-w_2)(1-s))^{d} (1-w_2) ds.
\end{align*} 
Comparing this expression with the definition (\ref{P2def}) of $P_2$, equation (\ref{tildeFPplus}) follows.
\end{proof}

Since both terms on the right-hand side of the identity (\ref{FPidentity}) are well-behaved for $w_2$ close to $1$, the behavior of $F$ as $w_2 \to 1$ can easily be extracted from this identity. 

\begin{proof}[Proof of Theorem \ref{mainth1}]
Suppose $a,b,c,d \in \R \setminus \Z$ and $c+d \notin\Z$. Using the identity (\ref{FPidentity}), we can easily prove Theorem \ref{mainth1}. Indeed, the functions $P_1$ and $P_2$ in (\ref{FPidentity}) admit asymptotic expansions to all orders as follows. Substituting the expansion
\begin{align*}
(w_2 - v)^{c} \sim \sum_{k=0}^\infty \frac{\Gamma(c+1)}{\Gamma(c+1-k)} \frac{(w_2-1)^k}{k!} (1-v)^{c-k}, \qquad w_2 \to 1,
\end{align*}
into the definition of $P_1(w_1, w_2)$ and recalling that $w_2$ and $1$ lie in the same component inside the contour, we find 
\begin{align}\nonumber
P_1(w_1, w_2) \sim &\; \frac{e^{2\pi i d} - 1}{e^{2\pi i(c+d)} -1} 
\sum_{k=0}^\infty \frac{\Gamma(c+1)}{\Gamma(c+1-k)} \frac{(w_2-1)^k}{k!} 
	\\ \label{P1expanded}
&\times \int_A^{(0+,1+,0-,1-)} v^{a} (v - w_1)^b (1-v)^{c+d - k} dv, \qquad w_2 \to 1,
\end{align}
where the integral on the right-hand side is exactly $G(a,b,c+d-k; w_1)$. 
Similarly, substituting the expansions
\begin{align*}
(w_2+s(1-w_2))^a \sim \sum_{m=0}^\infty \frac{\Gamma(a+1)}{\Gamma(a+1-m)} (1-s)^m \frac{(w_2 -1)^m}{m!}, \qquad w_2 \to 1,
\end{align*}
and
$$(w_2 + s(1-w_2) - w_1)^b \sim \sum_{l=0}^\infty \frac{\Gamma(b+1)}{\Gamma(b+1-l)}  (1-s)^l (1 - w_1)^{b-l}\frac{(w_2-1)^l}{l!}, \qquad w_2 \to 1,$$
into the definition (\ref{P2def}) of $P_2(w_1, w_2)$, we find, as $w_2 \to 1$,
\begin{align} \nonumber
P_2(w_1, w_2) \sim &\; \frac{(e^{2\pi ia} -1)e^{\pi i d}}{1 - e^{2\pi i(c+d)}} 
\sum_{m=0}^\infty\sum_{l=0}^\infty \frac{\Gamma(a+1)\Gamma(b+1)(1 - w_1)^{b-l}}{\Gamma(a+1-m)\Gamma(b+1-l)} 
	\\ \label{P2expanded}
&\times  \frac{(w_2-1)^{m+l}}{m! \, l!}
H(c, d+m+l).
\end{align}
Substituting (\ref{P1expanded}) and (\ref{P2expanded}) (with the summation variable $m$ replaced by $k = m+l$) into (\ref{FPidentity}), we arrive at the expansion given in Theorem \ref{mainth1}.
\end{proof}

\subsection{The sector $w_1 \to 0$}
In order to determine the behavior of $F(w_1,w_2)$ as $w_1 \to 0$, we define two functions $Q_j:\mathcal{D}_0 \to \C$, $j = 1,2$, as follows. 
The function $Q_1(w_1, w_2)$ is defined for $(w_1, w_2) \in \mathcal{D}_0$ by
\begin{align}\label{Q1def}
Q_1(w_1, w_2) 
= &\; \frac{e^{2\pi i a} - 1}{e^{2\pi i (a+b)} -1}
\int_A^{(0+,1+,0-,1-)} v^{a} (v-w_1)^b (v-w_2)^{c}  (1-v)^{d} dv, 
\end{align} 
where $A \in (0,1)$, $w_1$ lies inside the contour in the same component as $0$, and $w_2$  lies outside the contour. 
Given $w_2 \in \C \setminus [0, \infty)$, we define $Q_2(w_1, w_2)$ for $\re w_1 \in (0,1)$ with $\im w_1 < 0$ sufficiently small by
\begin{align}\label{Q2def}
Q_2(w_1, w_2) 
= &\; \frac{(e^{2\pi i d} - 1)e^{-i\pi b}}{1 - e^{-2i\pi (a+b)}} 
\int_A^{(0+,1+,0-,1-)}
 s^{a} (1 - s)^b (sw_1-w_2)^{c}  (1-sw_1)^{d} ds, 
\end{align} 
where $A \in (0,1)$ and the points $\frac{w_2}{w_1}$ and $\frac{1}{w_1}$ lie exterior to the contour. For each $w_2 \in \C \setminus [0, \infty)$, we then use analytic continuation to extend $Q_2$ to a function of $w_1 \in \C \setminus ([0,\infty) \cup \gamma_{(w_2,\infty)})$. We have the following analog of Proposition \ref{FPprop}.

\begin{prop} \label{FQprop}
Suppose $a,b,c,d \in \R$ and $a,d,a+b \notin\Z$. Then the function $F$ obeys the identity
\begin{align}\label{FQidentity} 
F(w_1, w_2) = Q_{1}(w_1, w_2) + w_1^{a+b+1} Q_2(w_1, w_2), \qquad (w_1, w_2) \in \mathcal{D}_0.
\end{align}
\end{prop}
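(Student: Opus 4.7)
The plan is to mirror the proof of Proposition \ref{FPprop} but now with the roles of $1$ and $w_1$ interchanged: the "exceptional" singularity that needs to be pulled inside the contour is $w_1$ rather than $w_2$, and the eventual Pochhammer contour that appears in the remainder will encircle $\{0, w_1\}$ rather than $\{w_2, 1\}$. By analytic continuation in $w_1$, it suffices to establish (\ref{FQidentity}) on the slice where $w_1 \in (0,1)$ approaches the real axis from below (with $w_2 \in \C \setminus [0,\infty)$ fixed), so the first step is to restrict to this configuration.

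Next I would decompose the Pochhammer contour defining $F(w_1 - i0, w_2)$ (resp.\ $Q_1(w_1-i0,w_2)$) into four "L-shaped" pieces $L_{-\epsilon}^{1}, L_{-\epsilon}^{2}, L_{w_1+\epsilon}^{3}, L_{w_1+\epsilon}^{4}$ plus semicircles $S_{w_1}^\pm$ around the point $w_1$, exactly as in the proof of Proposition \ref{FPprop} but now with the special point $w_1$ sitting between the two loops around $0$ and $1$. Writing each traversal of the Pochhammer contour in terms of these pieces, and carefully tracking the phase factors $e^{2\pi i a}$, $e^{2\pi i (a+b)}$, $e^{2\pi i(a+b+d)}$ etc.\ picked up as the contour crosses the branch cuts emanating from $0$, $w_1$, and $1$, I can express $F(w_1-i0,w_2)$ and $Q_1(w_1-i0,w_2)$ as weighted sums of integrals over these pieces. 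The weight in front of $Q_1$, namely $(e^{2\pi ia}-1)/(e^{2\pi i(a+b)}-1)$, is chosen precisely so that in the difference $F - Q_1$ the contributions of the two "outer" pieces $L_{-\epsilon}^{1}$ and $L_{-\epsilon}^{2}$ (which capture the loop around $0$ enlarged to also enclose $w_1$) cancel, leaving only integrals over $L_{w_1+\epsilon}^{3}$, $L_{w_1+\epsilon}^{4}$, $S_{w_1}^\pm$.

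Then I would simplify the remaining combination using the identity
\[
(v-w_1)^b = \begin{cases} e^{-i\pi b}(w_1-v)^b, & v \in S_{w_1}^+ \cup L_{-\epsilon}^{2}, \\ e^{i\pi b}(w_1-v)^b, & v \in S_{w_1}^- \cup L_{-\epsilon}^{1}, \end{cases}
\]
to recognize the resulting combination (after factoring out $1/(e^{2\pi i b}-1)$ or similar) as a single Pochhammer integral whose contour starts near $-\epsilon$ and encircles $0$ and $w_1$ twice in the usual Pochhammer pattern. Finally, the substitution $v = s w_1$ maps this Pochhammer contour around $\{0,w_1\}$ to a standard Pochhammer contour around $\{0,1\}$ in the $s$-plane, produces the factor $w_1^{a+b+1}$ (coming from $v^a dv$ and $(w_1-v)^b$ combined), and transforms the remaining factors $(v-w_2)^c(1-v)^d$ into $(sw_1-w_2)^c(1-sw_1)^d$ up to the prefactor $e^{-i\pi b}$ from the branch choice. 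Matching with (\ref{Q2def}) identifies the result as $w_1^{a+b+1} Q_2(w_1-i0,w_2)$.

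The main obstacle I anticipate is the bookkeeping of phase factors in step two: there are three branch cuts involved (from $0$, from $w_1$, and from $1$), and getting the combinatorics of which winding produces which exponential right requires care, especially since the loop around $0$ in $Q_1$ now encloses $w_1$ on one of its sides. Once the phases are handled correctly, the algebraic collapse and the change of variables are routine. The analytic continuation at the end is straightforward because both sides of (\ref{FQidentity}) extend to single-valued analytic functions on $\mathcal{D}_0$ (the factor $w_1^{a+b+1}$ uses the principal branch, consistent with $\mathcal{D}_0$ avoiding $[0,\infty)$).
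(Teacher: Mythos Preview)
Your overall strategy is exactly the paper's: reduce by analyticity to $w_1\in(0,1)$ approached from below, decompose both $F$ and $Q_1$ along the pieces $L^1,L^2,L^3,L^4,S_{w_1}^\pm$, subtract, convert $(v-w_1)^b$ to $(w_1-v)^b$ via a phase, recognize a Pochhammer around $\{0,w_1\}$, and substitute $s=v/w_1$.

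However, your description of the cancellation is inverted and inconsistent with your own conclusion. In the difference $F-Q_1$ it is the pieces $L_{w_1+\epsilon}^3$ and $L_{w_1+\epsilon}^4$ (the segments running toward $1$) that drop out; what survives is the combination over $L_{w_1-\epsilon}^1$, $L_{w_1-\epsilon}^2$, and $S_{w_1}^\pm$. This is forced by your own final step: a Pochhammer contour around $\{0,w_1\}$ must be built from the segments lying between $0$ and $w_1$, not from those between $w_1$ and $1$. Concretely, the normalizing factor $(e^{2\pi i a}-1)/(e^{2\pi i(a+b)}-1)$ in $Q_1$ is tuned so that the \emph{outer} loop (the one around $1$, hence the pieces $L^3,L^4$) matches that of $F$, leaving the discrepancy localized near $0$ and $w_1$. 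Your parenthetical ``which capture the loop around $0$ enlarged to also enclose $w_1$'' describes the $Q_1$ contour correctly, but those are precisely the pieces whose contributions do \emph{not} cancel.

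Also, your labeling of the $L$-pieces with subscript $-\epsilon$ is off: in the paper's notation the subscript is the base point, so the relevant pieces are $L_{w_1-\epsilon}^{1,2}$ and $L_{w_1+\epsilon}^{3,4}$. Once you fix the direction of the cancellation, the branch identity and the substitution $s=v/w_1$ go through exactly as in the paper and yield the prefactor $\frac{(e^{2\pi i d}-1)e^{-i\pi b}}{1-e^{-2\pi i(a+b)}}$ matching~(\ref{Q2def}).
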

\begin{proof}
By analyticity, is enough to show that
\begin{align}\label{FQminusidentity}
F(w_1-i0, w_2) = Q_{1}(w_1-i0, w_2) + w_1^{a+b+1} Q_{2}(w_1-i0, w_2)
\end{align}
for $w_1 \in (0,1)$ and $w_2 \in \C \setminus [0, \infty)$.

Let $\epsilon > 0$ be small. Let $w_1 \in (0,1)$ and $w_2 \in \C \setminus [0, \infty)$. 
Then
\begin{align*}
F(w_1-i0, w_2) = &\; \bigg\{\int_{L_{w_1-\epsilon}^1} 
+ e^{2\pi i (a+b)} \int_{L_{w_1-\epsilon}^2}
- e^{2\pi i a} \int_{S_{w_1}^+}
+ e^{2\pi i a} \int_{L_{w_1+\epsilon}^3}
+ e^{2\pi i(a+d)}\int_{L_{w_1+\epsilon}^4} 
	\\
&+ e^{2\pi i(a+d)}\int_{S_{w_1}^+} 
- e^{2\pi i(a+b+d)}\int_{L_{w_1-\epsilon}^2}
- e^{2\pi i d} \int_{L_{w_1-\epsilon}^1+ S_{w_1}^+ + L_{w_1+\epsilon}^4}
	\\
& - \int_{L_{w_1 + \epsilon}^3} + \int_{S_{w_1}^+}\bigg\}
v^{a} (v-w_1)^b (v-w_2)^{c}  (1-v)^{d} dv, 
\end{align*}	
where $w_2$  lies exterior to the contours. 
Moreover,
\begin{align*}
Q_1(w_1-i0, w_2) = &\; \frac{e^{2\pi i a} - 1}{e^{2\pi i (a+b)} -1} 
\bigg\{\int_{S_w^+ +L_{w_1-\epsilon}^1} + e^{2\pi i (a+b)}\int_{L_{w_1-\epsilon}^2 + S_w^- + L_{w+\epsilon}^3}
	\\
& + e^{2\pi i (a+b+d)}\int_{L_{w+\epsilon}^4 - S_w^- - L_{w-\epsilon}^2}
- e^{2\pi i d} \int_{L_{w-\epsilon}^1 + S_w^+ + L_{w+\epsilon}^4} 
+ \int_{-L_{w+\epsilon}^3}\bigg\}
	\\
& \times v^{a} (v-w_1)^b (v-w_2)^{c} (1-v)^{d} dv.
\end{align*} 
Simplification gives
\begin{align*}
F(w_1-i0, w_2) = &\; \bigg\{(1 - e^{2\pi i d}) \int_{L_{w_1-\epsilon}^1} 
+ e^{2\pi i(a+b)} (1 - e^{2\pi i d})\int_{L_{w_1-\epsilon}^2 }
	\\
&+ (e^{2\pi i a} - 1)\bigg((e^{2\pi i d} - 1) \int_{S_{w_1}^+ }
+  \int_{L_{w_1+\epsilon}^3}
+ e^{2\pi i d}\int_{L_{w_1+\epsilon}^4}\bigg) \bigg\}
	\\
&\times v^{a} (v-w_1)^b (v-w_2)^{c}  (1-v)^{d} dv
\end{align*}	
and
\begin{align*}
Q_1(w_1-i0, w_2) = &\;  \frac{e^{2\pi i a} - 1}{e^{2\pi i (a+b)} -1} 
\bigg\{(1 - e^{2\pi id})\int_{L_{w_1-\epsilon}^1} 
+ e^{2\pi i (a+b)}(1 - e^{2\pi i d})\int_{L_{w_1-\epsilon}^2 }
	\\
& + e^{2\pi i (a+b)}(1 - e^{2\pi i d})\int_{S_w^-}
+ (e^{2\pi i (a+b)} - 1)\int_{L_{w+\epsilon}^3}
	\\
& + e^{2\pi i d} (e^{2\pi i (a+b)} - 1)\int_{L_{w+\epsilon}^4}
+ (1 - e^{2\pi i d} ) \int_{S_w^+}\bigg\}
	\\
& \times v^{a} (v-w_1)^b (v-w_2)^{c} (1-v)^{d} dv.
\end{align*} 
Hence
\begin{align*}
& F(w_1-i0, w_2) - Q_1(w_1-i0, w_2) 
=  \frac{(e^{2\pi i d} - 1)e^{-2i\pi b}}{1 - e^{-2i\pi (a+b)}} \bigg\{
(1 - e^{2i\pi b})\int_{L_{w_1-\epsilon}^1} 
	\\
&+ e^{2\pi i (a+b)}(1 - e^{2i\pi b}) \int_{L_{w_1-\epsilon}^2 } 
+ e^{2\pi i b}(e^{2\pi i a} - 1) \int_{S_{w_1}^+ + S_{w_1}^- } \bigg\}
 v^{a} (v-w_1)^b (v-w_2)^{c}  (1-v)^{d} dv.
\end{align*}	
Using the identity
$$(v-w_1)^b  = \begin{cases} e^{i\pi b} (w_1-v)^b, & v \in S_w^+ \cup L_{w+\epsilon}^1, \\
e^{-i\pi b} (w_1 - b)^b, & v \in S_w^- \cup L_{w+\epsilon}^2, \end{cases}$$
we can write this as
\begin{align*}
& F(w_1-i0, w_2) - Q_{1}(w_1-i0, w_2) 
= \frac{(e^{2\pi i d} - 1)e^{-i\pi b}}{1 - e^{-2i\pi (a+b)}} \bigg\{
(1 - e^{2i\pi b})\int_{L_{w_1-\epsilon}^1} 
	\\
&+ e^{2\pi i a}(1 - e^{2i\pi b}) \int_{L_{w_1-\epsilon}^2 }
+ e^{2\pi i b}(e^{2\pi i a} - 1) \int_{S_{w_1}^+} 
+ (e^{2\pi i a} - 1) \int_{S_{w_1}^- } \bigg\}
	\\
&\times v^{a} (w_1-v)^b (v-w_2)^{c}  (1-v)^{d} dv, \qquad w_1 \in (0,1), \quad w_2 \in \C \setminus [0, \infty).
\end{align*}
That is,
\begin{align}\nonumber
F(w_1-i0, w_2) - Q_{1}(w_1-i0, w_2)  
= &\; \frac{(e^{2\pi i d} - 1)e^{-i\pi b}}{1 - e^{-2i\pi (a+b)}}
\int_{w_1 - \epsilon}^{(0+,w_1+,0-,w_1-)}
	\\ \label{FminusQ1}
& \times v^{a} (w_1 - v)^b (v-w_2)^{c}  (1-v)^{d} dv.
\end{align} 
Performing the change of variables $s = \frac{v}{w_1}$, which maps the interval $(0,w_1)$ to the interval $(0,1)$, we obtain
\begin{align*}\nonumber
F(w_1-i0, w_2) - Q_{1}(w_1-i0, w_2)  
= &\; \frac{(e^{2\pi i d} - 1)e^{-i\pi b}}{1 - e^{-2i\pi (a+b)}} w_1^{a+b+1}
\int_A^{(0+,1+,0-,1-)}
	\\ \nonumber
& \times s^{a} (1 - s)^b (sw_1-w_2)^{c}  (1-sw_1)^{d} ds.
\end{align*} 
The lemma follows.
\end{proof}

\begin{proof}[Proof of Theorem \ref{mainth2}]
Suppose $a,b,c,d \in \R \setminus \Z$ and $a+b \notin\Z$. In the same way that (\ref{FPidentity}) can be used to determine the asymptotics of $F$ as $w_2 \to 1$, the identity (\ref{FQidentity}) can be used to determine the asymptotics of $F(w_1, w_2)$ as $w_1 \to 0$. Indeed, the expansion given in Theorem \ref{mainth2} follows from (\ref{FQidentity}) after substituting the following asymptotic expansions as $w_1 \to 0$ into the definitions of $Q_1$ and $Q_2$: 
\begin{subequations}\label{thm2expansions}
\begin{align}\label{thm2expansionsa}
& (v - w_1)^b \sim \sum_{k=0}^\infty \frac{\Gamma(b+1)}{\Gamma(b+1-k)} (-1)^k v^{b-k} \frac{w_1^k}{k!}, 
	\\\label{thm2expansionsb}
& (sw_1-w_2)^{c} \sim \sum_{k=0}^\infty \frac{\Gamma(c+1)}{\Gamma(c+1-k)} \frac{s^k (-w_2)^{c-k} w_1^k}{k!},
	\\\label{thm2expansionsc}
& (1-sw_1)^{d} \sim  \sum_{k=0}^\infty \frac{\Gamma(d+1)}{\Gamma(d+1-k)} \frac{(-1)^k s^k w_1^k}{k!}.
\end{align}
\end{subequations}
\end{proof}

\subsection{The sector $w_1 \to 0$ and $w_2 \to 0$}
We next determine the asymptotics of $F(w_1,w_2)$ in the regime where both $w_1$ and $w_2$ approach zero. 
Assuming $a+b, a+b+c \notin \Z$, we define two functions $R_j:\mathcal{D}_0 \to \C$, $j = 1,2$, as follows.
The function $R_1(w_1, w_2)$ is defined for $(w_1, w_2) \in \mathcal{D}_0$ by
\begin{align*}
R_1(w_1, w_2) 
= &\; \frac{e^{2\pi i a} - 1}{e^{2\pi i (a+b + c)} -1}
\int_A^{(0+,1+,0-,1-)} v^{a} (v-w_1)^b (v-w_2)^{c}  (1-v)^{d} dv,
\end{align*} 
where $A \in (0,1)$ and both points $w_1$ and $w_2$ are assumed to lie inside the contour in the same component as $0$. 
For $0 < \re w_1 < \re w_2 < 1$ with $\im w_1 < 0$ and $\im w_2 < 0$, we define $R_2(w_1, w_2)$ by
\begin{align}\nonumber
R_2(w_1, w_2) = &\; \frac{e^{2\pi i(a+b)}(e^{2\pi ia} -1)(e^{2\pi i d}-1) e^{i\pi c}}{(e^{2\pi i(a+b)} -1)(e^{2\pi i(a+b+c)} -1)} 
	\\ \label{R2def}
&\times \int_A^{(0+,1+,0-,1-)} s^{a} (sw_2-w_1)^b (1- s)^{c}  (1-sw_2)^{d} ds,
\end{align} 
where we assume $A \in (0,1)$ is so large that $\re(Aw_2-w_1) > 0$, that the point $\frac{w_1}{w_2}$ lies inside the contour in the same component as $0$,  and that $\frac{1}{w_2}$ lies outside the contour. We then use analytic continuation to extend $R_2$ to all of $\mathcal{D}_0$. 
We have the following analog of Proposition \ref{FPprop}.

\begin{prop}\label{FRQprop}
Suppose $a,b,c,d \in \R$ and $a,d,a+b,a+b+c \notin\Z$. Then the function $F$ obeys the following identity for $(w_1, w_2) \in \mathcal{D}_0$:
\begin{align}\label{FRQidentity} 
F(w_1, w_2) = R_{1}(w_1, w_2) + w_2^{a+c+1} R_{2}(w_1, w_2) + w_1^{a+b+1} Q_2(w_1, w_2). \end{align}
\end{prop}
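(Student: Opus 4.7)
The plan is to reduce the claim to Proposition \ref{FQprop}. By that proposition, we already have
$$F(w_1, w_2) = Q_1(w_1, w_2) + w_1^{a+b+1} Q_2(w_1, w_2)$$
throughout $\mathcal{D}_0$, so (\ref{FRQidentity}) is equivalent to establishing
$$Q_1(w_1, w_2) = R_1(w_1, w_2) + w_2^{a+c+1} R_2(w_1, w_2).$$
I would prove this second identity by applying the same contour-deformation argument used for Proposition \ref{FQprop}, but now starting from $Q_1$ and moving the point $w_2$ (rather than $w_1$) from the exterior of the Pochhammer contour into the sub-loop containing $0$. The key structural observation is that, viewed from outside the small $0$-loop (which already contains $w_1$), the integrand $v^a(v-w_1)^b(v-w_2)^c(1-v)^d$ has effective monodromy exponent $a+b$ around that loop, which is why $Q_1$ carries the prefactor $(e^{2\pi ia}-1)/(e^{2\pi i(a+b)}-1)$; pulling $w_2$ into the same loop upgrades the effective exponent to $a+b+c$, matching the denominator $e^{2\pi i(a+b+c)}-1$ in $R_1$.

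Concretely, by analyticity in $(w_1,w_2)$, it suffices to verify the identity on a convenient boundary slice, e.g., with $w_2 \in (0,1)$ approached from one side and $w_1$ in a suitable open set in which all ``inside/outside'' conditions hold. On such a slice, one decomposes the Pochhammer contours defining $Q_1$ and $R_1$ into straight-line arcs $L^j_{w_2 \pm \epsilon}$ and semicircles $S^{\pm}_{w_2}$ around $w_2$, precisely as in the proof of Proposition \ref{FQprop} (with the roles of $w_1$ and $w_2$ interchanged and $a$ replaced by the effective exponent $a+b$). Simplifying using the boundary monodromy of $(v-w_2)^c = e^{\pm i\pi c}(w_2-v)^c$ on the semicircles and telescoping the phase factors yields
$$Q_1(w_1,w_2) - R_1(w_1,w_2) = C \int_{w_2-\epsilon}^{(0+,\,w_2+,\,0-,\,w_2-)} v^{a}(v-w_1)^b (w_2-v)^c (1-v)^d \, dv,$$
where $w_1$ sits inside the $0$-sub-loop, $1$ is exterior to the contour, and $C$ is an explicit rational function of $e^{2\pi i a}, e^{2\pi i b}, e^{2\pi i c}, e^{2\pi i d}$.

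The change of variables $s = v/w_2$ maps $(0,w_2) \to (0,1)$ and converts the Pochhammer contour around $\{0, w_2\}$ into a Pochhammer contour around $\{0,1\}$; the substitutions $v^a = w_2^a s^a$, $(w_2-v)^c = w_2^c(1-s)^c$, $(1-v)^d = (1-sw_2)^d$, and $dv = w_2\,ds$ pull out an overall factor of $w_2^{a+c+1}$ and produce precisely the integral appearing in definition (\ref{R2def}) of $R_2$.

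The main obstacle is the phase bookkeeping: one must verify that the coefficient $C$, after the change of variables, collapses to the specific prefactor $\frac{e^{2\pi i(a+b)}(e^{2\pi ia}-1)(e^{2\pi id}-1)e^{i\pi c}}{(e^{2\pi i(a+b)}-1)(e^{2\pi i(a+b+c)}-1)}$ appearing in $R_2$. This is where the non-integrality hypotheses $a+b \notin \Z$ and $a+b+c \notin \Z$ are used, ensuring that the two denominators are nonzero; the remaining hypotheses $a,d \notin \Z$ are inherited from the application of Proposition \ref{FQprop}. Combining the two identities $F = Q_1 + w_1^{a+b+1}Q_2$ and $Q_1 = R_1 + w_2^{a+c+1}R_2$ then yields (\ref{FRQidentity}).
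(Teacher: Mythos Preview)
Your proposal is correct and follows essentially the same route as the paper: reduce to $Q_1 = R_1 + w_2^{a+c+1} R_2$ via Proposition~\ref{FQprop}, verify this on a real boundary slice (the paper takes $0 < w_1 < w_2 < 1$ with both points approached from $-i0$), decompose the Pochhammer contours around $w_2$ into arcs $L^j_{w_2\pm\epsilon}$ and semicircles $S^{\pm}_{w_2}$, collapse the difference to a Pochhammer integral around $\{0,w_2\}$ with $w_1$ inside the $0$-loop, and finish with the substitution $s=v/w_2$. Your observation that this is literally the argument of Proposition~\ref{FQprop} with $w_2$ in place of $w_1$ and effective exponent $a+b$ in place of $a$ is exactly how the paper's computation proceeds.
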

\begin{proof}
Both sides of (\ref{FRQidentity}) are analytic functions of $(w_1, w_2) \in \mathcal{D}_0$ which extend to multiple-valued analytic functions of 
\begin{align*}
(w_1, w_2) \in \C^2 \setminus \big(\{w_1 = 0\} \cup \{w_1 = 1\} \cup \{w_2 = 0\} \cup \{w_2 = 1\} \cup \{w_1 = w_2\}\big).
\end{align*}
Hence, by Proposition \ref{FQprop}, it is enough to show that
\begin{align}\label{QRminusidentity}
Q_{1-}(w_1, w_2) = R_{1-}(w_1, w_2) + w_2^{a+c+1} R_{2-}(w_1, w_2), \qquad 0 < w_1 < w_2 < 1,
\end{align}
where, for a function $f$, we use the short-hand notation $f_-(w_1, w_2) := f(w_1-i0, w_2-i0)$. 

Let $0 < w_1 < w_2 < 1$ and suppose $0 < \epsilon < \frac{1}{2}\min\{w_1, w_2-w_1, 1-w_2\}$. Then
\begin{align*}
& Q_{1-}(w_1, w_2) = \frac{e^{2\pi i a} - 1}{e^{2\pi i (a+b)} -1} \bigg\{\int_{L_{w_2-\epsilon}^1} 
+ e^{2\pi i (a+b+c)} \int_{L_{w_2-\epsilon}^2}
- e^{2\pi i (a+b)} \int_{S_{w_2}^+}
	\\
& + e^{2\pi i (a+b)} \int_{L_{w_2+\epsilon}^3}
+ e^{2\pi i(a+b+d)}\int_{L_{w_2+\epsilon}^4 + S_{w_2}^+} 
- e^{2\pi i(a+b+c+d)}\int_{L_{w_2-\epsilon}^2}
- e^{2\pi i d} \int_{L_{w_2-\epsilon}^1+ S_{w_2}^+ + L_{w_2+\epsilon}^4}
	\\
& - \int_{L_{w_2 + \epsilon}^3} + \int_{S_{w_2}^+}\bigg\}
v^{a} (v-w_1)^b (v-w_2)^{c}  (1-v)^{d} dv
\end{align*}	
and
\begin{align*}
R_{1-}(w_1, w_2) = &\; \frac{e^{2\pi i a} - 1}{e^{2\pi i (a+b+c)} -1} 
\bigg\{\int_{S_{w_2}^+ + L_{w_2-\epsilon}^1} + e^{2\pi i (a+b+c)}\int_{L_{w_2-\epsilon}^2 + S_{w_2}^- + L_{w+\epsilon}^3}
	\\
& + e^{2\pi i (a+b+c+d)}\int_{L_{w_2+\epsilon}^4 - S_{w_2}^- - L_{w_2-\epsilon}^2}
- e^{2\pi i d} \int_{L_{w_2-\epsilon}^1 + S_{w_2}^+ + L_{w_2+\epsilon}^4} 
- \int_{L_{w_2+\epsilon}^3}\bigg\}
	\\
& \times v^{a} (v-w_1)^b (v-w_2)^{c} (1-v)^{d} dv,
\end{align*} 
where the principal branch is used for all powers.
A computation gives
\begin{align*}
Q_{1-}&(w_1, w_2) - R_{1-}(w_1, w_2) 
= \frac{e^{2\pi i(a+b)}(e^{2\pi ia} -1)(e^{2\pi i d}-1)}{(e^{2\pi i(a+b)} -1)(e^{2\pi i(a+b+c)} -1)}
	\\
& \times \bigg\{
(1 - e^{2i\pi c})\int_{L_{w_2-\epsilon}^1} + e^{2\pi i (a+b+c)}(1 - e^{2i\pi c}) \int_{L_{w_2-\epsilon}^2 } + e^{2\pi i c}(e^{2\pi i (a+b)} - 1) \int_{S_{w_2}^+ + S_{w_2}^- } \bigg\}
	\\
&\times v^{a} (v-w_1)^b (v-w_2)^{c}  (1-v)^{d} dv, \qquad 0 < w_1 < w_2 < 1.
\end{align*}	
Using the identity
$$(v-w_2)^c  = \begin{cases} e^{i\pi c} (w_2-v)^c, & v \in S_{w_2}^+ \cup L_{w_2-\epsilon}^1, \\
e^{-i\pi c} (w_2 - v)^c, \quad & v \in S_{w_2}^- \cup L_{w_2-\epsilon}^2, \end{cases}$$
we can write this as
\begin{align*}
& Q_{1-}(w_1, w_2) - R_{1-}(w_1, w_2) 
= \frac{e^{2\pi i(a+b)}(e^{2\pi ia} -1)(e^{2\pi i d}-1) e^{i\pi c}}{(e^{2\pi i(a+b)} -1)(e^{2\pi i(a+b+c)} -1)} \bigg\{ (1 - e^{2i\pi c})\int_{L_{w_2-\epsilon}^1} 
	\\
&+ e^{2\pi i (a+b)}(1 - e^{2i\pi c}) \int_{L_{w_2-\epsilon}^2 }
+ e^{2\pi i c}(e^{2\pi i (a+b)} - 1) \int_{S_{w_2}^+ } 
+ (e^{2\pi i (a+b)} - 1) \int_{S_{w_2}^- } \bigg\}
	\\
&\times v^{a} (v-w_1)^b (w_2- v)^{c}  (1-v)^{d} dv, \qquad 0 < w_1 < w_2 < 1.
\end{align*}	
That is,
\begin{align*}
& Q_{1-}(w_1, w_2) - R_{1-}(w_1, w_2) 
= \frac{e^{2\pi i(a+b)}(e^{2\pi ia} -1)(e^{2\pi i d}-1) e^{i\pi c}}{(e^{2\pi i(a+b)} -1)(e^{2\pi i(a+b+c)} -1)} 
	\\
&\times \int_{w_2-\epsilon}^{(0+,w_2+,0-,w_2-)} v^{a} (v-w_1)^b (w_2- v)^{c}  (1-v)^{d} dv, \qquad 0 < w_1 < w_2 < 1,
\end{align*} 
where $w_1$ lies inside the contour in the same component as $0$. 
Applying the change of variables $s = \frac{v}{w_2}$, which maps the interval $(0,w_2)$ to the interval $(0,1)$, we obtain
\begin{align*}
& Q_{1-}(w_1, w_2) - R_{1-}(w_1, w_2) 
= \frac{e^{2\pi i(a+b)}(e^{2\pi ia} -1)(e^{2\pi i d}-1) e^{i\pi c}}{(e^{2\pi i(a+b)} -1)(e^{2\pi i(a+b+c)} -1)} w_2^{a + c + 1}
	\\
&\times \int_A^{(0+,1+,0-,1-)} s^{a} (sw_2-w_1)^b (1- s)^{c}  (1-sw_2)^{d} ds, \qquad 0 < w_1 < w_2 < 1,
\end{align*}
where $A \in (0,1)$  is so large that $Aw_2-w_1 > 0$.
Equation (\ref{QRminusidentity}) follows.
\end{proof}

\begin{proof}[Proof of Theorem \ref{mainth3}]
Suppose $a,b,c,d \in \R \setminus \Z$ and $a+b, a+b+c \notin\Z$. Theorem \ref{mainth3} follows by expanding the integrands in the definitions of $R_1$, $R_2$, $Q_2$ as $w_1 \to 0$ and $w_2 \to 0$, and substituting the resulting expressions into (\ref{FRQidentity}). We have stated Theorem \ref{mainth3} under the assumption that $|w_1/w_2| < 1 - \delta$; hence we use the expansion (\ref{thm2expansionsb}) of $(sw_1-w_2)^c$ and the expansion
\begin{align}\label{thm3expansion}
(sw_2 - w_1)^b \sim \sum_{k=0}^\infty \frac{\Gamma(b+1)(-1)^k}{\Gamma(b+1-k)} \frac{(sw_2)^{b-k} w_1^k}{k!}
\end{align}
of the factor $(sw_2 - w_1)^b$. 
\end{proof}

\begin{remark}\label{thmEremark}\upshape
To derive the expansion (\ref{FexpansionE}) of $F$ as $w_1, w_2 \to 0$ with $|w_1| \asymp |w_2|$, we proceed in the same way as the proof of Theorem \ref{mainth3}, i.e., we expand $R_1$, $R_2$, $Q_2$ as $w_1,w_2 \to 0$ and substitute the resulting expressions into (\ref{FRQidentity}). 
However, in this case, since $|w_1/w_2|$ is not necessarily smaller than $1$, we do not use the expansions (\ref{thm2expansionsb}) and (\ref{thm3expansion}) of $(sw_1-w_2)^c$ and $(sw_2 - w_1)^b$; instead we simplify the expressions for $Q_2$ and $R_2$ using the identities $sw_1-w_2 = w_1(s - \alpha)$ and $sw_2 - w_1 = w_2(s - \alpha^{-1})$ where $\alpha = w_2/w_1$; then the remaining factors are expanded as in the proof of Theorem \ref{mainth3}.
\end{remark}

\subsection{The sector $w_1 \to 0$ and $w_2 \to 1$}
We finally consider the behavior of $F(w_1,w_2)$ when $w_1$ is near $0$ and $w_2$ is near $1$.
Assuming that $a+b, c+d \notin \Z$, we define two functions $\tilde{Q}_1:\mathcal{D}_1 \to \C$ and $T_1:\mathcal{D}_0 \to \C$ as follows. 
We define $\tilde{Q}_1$ for $(w_1,w_2) \in \mathcal{D}_1$ by
\begin{align*}
\tilde{Q}_1(w_1, w_2) 
= &\; \frac{e^{2\pi i a} - 1}{e^{2\pi i (a+b)} -1}
\int_A^{(0+,1+,0-,1-)} v^{a} (v-w_1)^b (w_2-v)^{c}  (1-v)^{d} dv,
\end{align*} 
where $A\in (0,1)$, $w_1$ lies inside the contour in the same component as $0$, and $w_2$  lies outside the contour. 
Then
$$Q_1(w_1, w_2) = \rho_c(w_2) \tilde{Q}_1(w_1, w_2), \qquad (w_1, w_2) \in \mathcal{D}_0 \cap \mathcal{D}_1,$$
where $\rho_c$ is given by (\ref{rhodef}).
For $0 < \re w_1 < \re w_2 < 1$  with $\im w_1 <0$ and $\im w_2 > 0$, we define $T_1(w_1,w_2)$  by
\begin{align*}
T_1(w_1, w_2) 
= &\; \frac{(e^{2\pi i a} - 1)(e^{2\pi i d} -1)}{(e^{2\pi i (a+b)} -1)(e^{2\pi i(c+d)} -1)}
\int_A^{(0+,1+,0-,1-)} v^{a} (v-w_1)^b (w_2-v)^{c}  (1-v)^{d} dv,
\end{align*} 
where $w_1$ lies inside the contour in the same component as $0$, $w_2$ lies inside the contour in the same component as $1$, and we assume that $\re w_1 < A < \re w_2$. We then use analytic continuation to extend $T_1$ to all of $\mathcal{D}_1$.

\begin{prop}\label{FSQprop}
Suppose $a,b,c,d \in \R$ and $a,d,a+b, c+d \notin \Z$. Then the function $F$ obeys the identity
\begin{align}\nonumber
F(w_1, w_2) = &\; \rho_c(w_2)\big[T_1(w_1, w_2) + (w_2-1)^{c+d+1} P_2(w_1, w_2)\big]
	\\\label{FSQidentity} 
& + w_1^{a+b+1} Q_2(w_1, w_2), \qquad (w_1, w_2) \in \mathcal{D}_0 \cap \mathcal{D}_1.
\end{align}
\end{prop}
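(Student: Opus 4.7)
The plan is to combine Proposition \ref{FQprop} with the contour--deformation argument that underlies Proposition \ref{FPprop}. Starting from the identity
\begin{align*}
F(w_1, w_2) = Q_1(w_1, w_2) + w_1^{a+b+1} Q_2(w_1, w_2), \qquad (w_1, w_2) \in \mathcal{D}_0,
\end{align*}
provided by Proposition \ref{FQprop}, the statement (\ref{FSQidentity}) reduces to showing that
\begin{align*}
Q_1(w_1, w_2) = \rho_c(w_2)\big[T_1(w_1, w_2) + (w_2-1)^{c+d+1} P_2(w_1, w_2)\big]
\end{align*}
on $\mathcal{D}_0 \cap \mathcal{D}_1$. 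Since $Q_1(w_1, w_2) = \rho_c(w_2)\tilde{Q}_1(w_1, w_2)$ (as noted just above the statement), this in turn reduces to the identity
\begin{align*}
\tilde{Q}_1(w_1, w_2) = T_1(w_1, w_2) + (w_2-1)^{c+d+1} P_2(w_1, w_2), \qquad (w_1, w_2) \in \mathcal{D}_1.
\end{align*}

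I would prove this identity by repeating, \emph{mutatis mutandis}, the argument of Proposition \ref{FPprop}, but with $\tilde{F}$ replaced by $\tilde{Q}_1$ and $P_1$ replaced by $T_1$. The key observation is that the contour bookkeeping in the proof of Proposition \ref{FPprop} consists entirely of deformations near $0$, $1$, and $w_2$, with the factor $(v-w_1)^b$ carried as an inert single-valued analytic factor on the relevant arcs $L^j_{w_2\pm\epsilon}$ and $S^\pm_{w_2}$. Whether $w_1$ is positioned outside the outer Pochhammer contour (as for $\tilde{F}$ and $P_1$) or inside in the component of $0$ (as for $\tilde{Q}_1$ and $T_1$) has no effect on the validity of those deformations nor on the resulting phase accounting in powers of $e^{2\pi i a}$, $e^{2\pi i d}$, $e^{2\pi i c}$. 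By analyticity it suffices, as before, to check the boundary-value identity
\begin{align*}
\tilde{Q}_1(w_1, w_2+i0) - T_1(w_1, w_2+i0) = |1-w_2|^{c+d+1} e^{i\pi(c+d+1)} P_2(w_1, w_2+i0)
\end{align*}
for $w_1 \in \C \setminus [0, \infty)$ and $w_2 \in (0,1)$, which follows from the same chain of simplifications ending with the change of variables $s = (v-w_2)/(1-w_2)$ that collapses the Pochhammer integral around $\{w_2, 1\}$ onto $P_2$. Crucially, this intermediate Pochhammer contour around $\{w_2, 1\}$ does not enclose $w_1$, so the function $P_2$ that arises is literally the one defined in (\ref{P2def}).

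Having established the boundary identity, I would extend it to $(w_1, w_2) \in \mathcal{D}_1$ by analytic continuation in $w_2$: both sides are single-valued analytic functions on $\mathcal{D}_1$ and extend to multiple-valued analytic functions on $\C \setminus \{0,1,w_1,\infty\}$, so agreement on $(0,1)$ for fixed $w_1 \in \C \setminus [0,\infty)$ is enough. Multiplying by $\rho_c(w_2)$ and substituting back into the identity from Proposition \ref{FQprop} yields (\ref{FSQidentity}).

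The main obstacle is really a bookkeeping one: one must check that every semicircle and line segment in the computation of Proposition \ref{FPprop} survives unchanged when $w_1$ moves from outside to inside the outer contour, and that no new residue or branch-cut crossing occurs from the factor $(v-w_1)^b$. I would handle this by choosing $\epsilon > 0$ small enough that $w_1$ is bounded away from the arcs $L^j_{w_2\pm\epsilon}$ and $S^\pm_{w_2}$, so that $(v-w_1)^b$ is a smooth, single-valued, nonvanishing factor throughout the local deformations, and by noting that the enveloping contour is simply pushed through $w_1$ at no cost in the definition of $\tilde{Q}_1$ and $T_1$ (its only effect is the overall constant $(e^{2\pi i a}-1)/(e^{2\pi i(a+b)}-1)$ already present in both definitions, exactly as in Proposition \ref{FQprop}).
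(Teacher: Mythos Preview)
Your approach is essentially the same as the paper's: reduce via Proposition \ref{FQprop} to the identity $\tilde{Q}_1 = T_1 + (w_2-1)^{c+d+1}P_2$ on $\mathcal{D}_1$, then establish this by the contour decomposition of Proposition \ref{FPprop} and the change of variables $s=(v-w_2)/(1-w_2)$. The paper carries this out by specializing to $0<w_1<w_2<1$ with boundary values $w_1-i0$, $w_2+i0$ and redoing the explicit phase bookkeeping, whereas you argue by analogy with Proposition \ref{FPprop}; one small inaccuracy in your write-up is the claim that the ``phase accounting in powers of $e^{2\pi i a},e^{2\pi i d},e^{2\pi i c}$'' is unchanged---when $w_1$ sits in the component of $0$ the loop around $0$ contributes $e^{2\pi i(a+b)}$ rather than $e^{2\pi i a}$, but since this modification occurs identically in $\tilde{Q}_1$ and $T_1$ (and is compensated by the common prefactor $(e^{2\pi i a}-1)/(e^{2\pi i(a+b)}-1)$) it cancels in the difference, and your conclusion stands.
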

\begin{proof}
In view of (\ref{FQidentity}), it is enough to show that
\begin{align}\label{QSidentity} 
 \tilde{Q}_1(w_1, w_2) = T_1(w_1, w_2) + (w_2-1)^{c+d+1} P_2(w_1, w_2), \qquad (w_1, w_2) \in \mathcal{D}_1.
\end{align}
If $f(w_1, w_2)$ is a function of $w_1$ and $w_2$, we use the notation $f_*(w_1, w_2) := f(w_1-i0, w_2+i0)$.
By analyticity, equation (\ref{QSidentity}) will follow if we can show that the following identity holds for $0 < w_1 < w_2 < 1$:
\begin{align}\label{QSminusidentity}
& \tilde{Q}_{1*}(w_1,w_2) = T_{1*}(w_1,w_2) + |1-w_2|^{c+d+1} e^{i\pi(c+d+1)} 
P_{2*}(w_1,w_2).
\end{align}

Let $0 < w_1 < w_2 < 1$ and let $0 < \epsilon < \frac{1}{2}\min\{w_1, w_2-w_1, 1-w_2\}$. Then
\begin{align*}
\tilde{Q}_{1*}(w_1,w_2) = &\; \frac{e^{2\pi i a} - 1}{e^{2\pi i (a+b)} -1} \bigg\{\int_{L_{w_2-\epsilon}^1} 
+ e^{2\pi i (a+b)} \int_{L_{w_2-\epsilon}^2+S_{w_2}^-+L_{w_2+\epsilon}^3}
+ e^{2\pi i(a+b+c+d)}\int_{L_{w_2+\epsilon}^4} 
	\\
&
- e^{2\pi i(a+b+d)}\int_{S_{w_2}^-+L_{w_2-\epsilon}^2} 
+ e^{2\pi i d} \int_{-L_{w_2-\epsilon}^1+ S_{w_2}^-}
- e^{2\pi i (c+d)} \int_{L_{w_2+\epsilon}^4}
	\\
& - \int_{L_{w_2 + \epsilon}^3} - \int_{S_{w_2}^-}\bigg\}
v^{a} (v-w_1)^b (w_2-v)^{c}  (1-v)^{d} dv,  
\end{align*}	
and
\begin{align*}
T_{1*}(&w_1,w_2) = \frac{(e^{2\pi i a} - 1)(e^{2\pi i d} -1)}{(e^{2\pi i (a+b)} -1)(e^{2\pi i(c+d)} -1)}
\bigg\{\int_{L_{w_2-\epsilon}^1} + e^{2\pi i (a+b)}\int_{L_{w_2-\epsilon}^2 + S_{w_2}^- + L_{w_2+\epsilon}^3}
	\\
& + e^{2\pi i (a+b+c+d)}\int_{L_{w_2+\epsilon}^4 + S_{w_2}^+ - L_{w_2-\epsilon}^2}
- e^{2\pi i (c+d)} \int_{L_{w_2-\epsilon}^1 + S_{w_2}^+ + L_{w_2+\epsilon}^4} 
+ \int_{-L_{w_2+\epsilon}^3-S_{w_2}^-}\bigg\}
	\\
& \times v^{a} (v-w_1)^b (w_2-v)^{c} (1-v)^{d} dv.
\end{align*} 
It follows that
\begin{align*}
\tilde{Q}_{1*}(w_1,w_2) - T_{1*}(w_1, w_2) 
= &\; \frac{e^{2\pi i a} -1}{1 - e^{-2i\pi(c+d)} } \bigg\{
(1- e^{-2i\pi c})\int_{L_{w_2+\epsilon}^3} 
	\\
&+ e^{2\pi i d}(e^{2i\pi c} - 1) \int_{L_{w_2+\epsilon}^4 } 
+ (1 - e^{2\pi i d} ) \int_{S_{w_2}^+ + S_{w_2}^- } \bigg\}
	\\
&\times v^{a} (v-w_1)^b (w_2-v)^{c}  (1-v)^{d} dv.
\end{align*}	
Using the identity
$$(w_2-v)^c  = \begin{cases} e^{-i\pi c} (v-w_2)^c, & v \in S_{w_2}^+ \cup L_{w_2+\epsilon}^4, 	\\
e^{i\pi c} (v- w_2 )^c, & v \in S_{w_2}^- \cup L_{w_2+\epsilon}^3, \end{cases}$$
we can write this as
\begin{align*}
& \tilde{Q}_{1*}(w_1,w_2) - T_{1*}(w_1, w_2) 
= \frac{(e^{2\pi i a} -1)e^{-i\pi c}}{1- e^{-2i\pi(c+d)} } \bigg\{
(e^{2i\pi c} -1)\int_{L_{w_2+\epsilon}^3} 
+ e^{2\pi i d}(e^{2i\pi c} - 1) \int_{L_{w_2+\epsilon}^4 } 
	\\
&\qquad + (1-e^{2\pi i d}) \int_{S_{w_2}^+}
+ e^{2i\pi c} (1-e^{2\pi i d})\int_{S_{w_2}^- } \bigg\}
 v^{a} (v-w_1)^b (v-w_2)^{c}  (1-v)^{d} dv
	\\
& = \frac{(e^{2\pi i a} -1)e^{-i\pi c}}{1- e^{-2i\pi(c+d)} } 
\int_{w_2 + \epsilon}^{(w_2+,1+,w_2-,1-)} v^{a} (v-w_1)^b (v-w_2)^{c}  (1-v)^{d} dv,
\end{align*} 
where $0 < w_1 < w_2 < 1$ and $w_1$ lies exterior to the Pochhammer contour.
Performing the change of variables $s = \frac{v-w_2}{1-w_2}$, which maps the interval $(w_2,1)$ to the interval $(0,1)$, we obtain, for $0 < w_1 < w_2 < 1$ and $A  \in (0,1)$,
\begin{align*}
& \tilde{Q}_{1*}(w_1,w_2) - T_{1*}(w_1, w_2) 
= \frac{(e^{2\pi i a} -1)e^{-i\pi c}}{1- e^{-2i\pi(c+d)} }  (1-w_2)^{c + d + 1}
	\\
&\times \int_A^{(0+,1+,0-,1-)} (w_2 + s(1-w_2))^{a} (w_2 + s(1-w_2)-w_1)^b s^{c}  (1-s)^{d} ds.
\end{align*} 
Equation (\ref{QSminusidentity}) now follows from the definition (\ref{P2def}) of $P_2$.
\end{proof}

\begin{proof}[Proof of Theorem \ref{mainth4}]
By expanding the integrands in the definitions of $T_1$, $P_2$, $Q_2$ as $w_1 \to 0$ and $w_2 \to 1$, and substituting the resulting expressions into (\ref{FSQidentity}), Theorem \ref{mainth4} is obtained after a lengthy computation.
\end{proof}

\section{Two examples from SLE theory}\label{SLEsect}
In this section, in an effort to illustrate the method described above, we present two examples from SLE theory which involve Dotsenko-Fateev integrals of the form (\ref{pochhammerexpression}). 
The first example is related to the Green's function observable for two commuting SLE curves and involves an integral of the form (\ref{pochhammerexpression}) with $N = 4$ and (see (\ref{Idef}))
\begin{align*}
a_1 = a_2 = \alpha - 1, \qquad a_3 = a_4 = -\frac{\alpha}{2},
\end{align*}
where $\alpha > 1$ is a parameter. The second example is related to Schramm's formula for the same SLE system and involves an integral of the form (\ref{pochhammerexpression}) with $N = 4$ and (see (\ref{Jdef}))
$$a_1 = \alpha, \qquad a_2 = \alpha -2, \qquad a_3 = a_4 = -\frac{\alpha}{2}.$$
For each example, we derive the asymptotic estimates which are needed in order to establish that the relevant integral describes the given observable.

\subsection{Multiple SLE and Dotsenko-Fateev integrals}
Let us briefly recall the definition of multiple SLE systems and describe how Dotsenko-Fateev integrals arise when constructing observables for such systems via the screening method, see \cite{LV2018_A} for a more complete discussion. SLE$_\kappa$ curves are constructed by solving Loewner's differential equation
\begin{align} \label{LE}
\partial_t g_t(z) = \frac{2/\kappa}{g_t(z)- \xi^1_t}, \quad g_0(z) = z
\end{align}
where the driving term $\xi^1_t$ is standard Brownian motion. The curve itself is defined by $\lim_{y \to 0+} g_t^{-1}(\xi^1_t + iy)$; this is a random continuous curve growing from $0$ to $\infty$ in the upper half-plane $\mathbb{H} = \{\im z > 0\}$. If $\kappa\le 4$, the curve is simple and stays in $\mathbb{H}$ for $t > 0$. SLE curves appear as scaling limits of interfaces in various critical lattice models. It is natural to consider scaling limits of multiple interfaces simultaneously, and this leads to multiple SLE. We are interested in multiple SLE with two curves started from $\xi^1,\xi^2 \in \mathbb{R}$, respectively, and growing towards $\infty$ in $\mathbb{H}$, see \cite{D2007}. The marginal law of the SLE started from $\xi^1$ is that of a variant of SLE with an additional marked boundary point at $\xi^2$. For this variant, which is called SLE$_\kappa(2)$, the dynamics of the driving term is given by the system $d\xi^1_t = dB_t +   (2/\kappa)/(\xi^1_t - \xi^2_t)$, where $B_t$ is standard Brownian motion, $\xi^2_t := g_t(\xi^2)$, and $g_t$ solves \eqref{LE} with $\xi^1_t$ as driving term. 
An important feature is that the system can be grown in a commutative way \cite{D2007}. The extra drift term entails serious difficulties when constructing observables for such SLE systems. 

In \cite{LV2018_A}, the screening method is used to derive explicit formulas for two of the most natural SLE observables: the renormalized probability that the system passes infinitesimally near a given point in $\mathbb{H}$ (the Green's function) and the probability that the system passes to the right of a given point in $\mathbb{H}$ (Schramm's formula). The derivation in \cite{LV2018_A} proceeds as follows: First, using a CFT description of the multiple SLE system, the screening method is employed to generate explicit ``guesses'' for the given observables in terms of Dotsenko-Fateev integrals. The ``guesses'' are then shown to indeed describe the desired probabilities via a sequence of probabilistic arguments. The latter arguments rely heavily on appropriate asymptotic estimates for the relevant Dotsenko-Fateev integrals. In the remainder of this paper, we derive the estimates needed in \cite{LV2018_A}.
 
\begin{figure}
\begin{center}
\begin{overpic}[width=.55\textwidth]{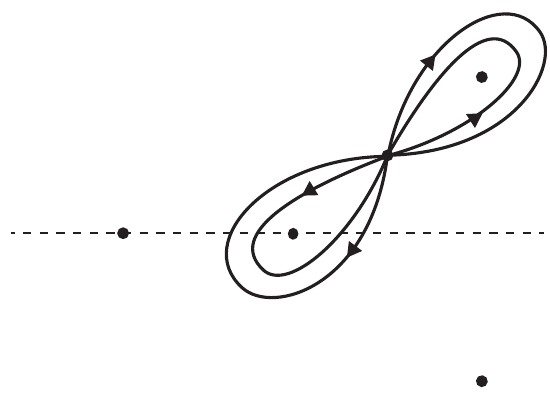}
      \put(71,40.5){\small $A$}
      \put(89.5,58.5){\small $z$}
      \put(89.5,3){\small $\bar{z}$}
      \put(21.2,25.8){\small $\xi^1$}
      \put(52,25.8){\small $\xi^2$}
      \put(103,30){\small $\re z$}
\end{overpic}
\vspace{-.2cm}
     \begin{figuretext}\label{Pochhammernew.pdf}
       The Pochhammer integration contour in (\ref{Idef}) is the composition of four loops based at the point $A = (z + \xi^2)/2$.
       \end{figuretext}
     \end{center}
\end{figure}

\subsection{Example 1: Green's function}
Set $\alpha = 8/\kappa$ and define the function $I(z,\xi^1, \xi^2)$ for $z \in \mathbb{H}$ and $-\infty < \xi^1 < \xi^2 < \infty$ by
\begin{align}\label{Idef}
I(z,\xi^1, \xi^2) = \int_A^{(z+,\xi^2+,z-,\xi^2-)} (u - z)^{\alpha -1} (u - \bar{z})^{\alpha -1} (u - \xi^1)^{-\frac{\alpha}{2}} (\xi^2 - u)^{-\frac{\alpha}{2}} du,
\end{align}
where $A = (z + \xi^2)/2$ is a basepoint and the Pochhammer integration contour is displayed in Figure \ref{Pochhammernew.pdf}. 
Moreover, define the function $\GG(z, \xi^1, \xi^2)$ for $\alpha \in (1, \infty) \setminus \Z$, $z = x + iy\in \mathbb{H}$, and $\xi^1 < \xi^2$ by
\begin{align}\label{GGdef}
\GG(z, \xi^1, \xi^2) = \frac{1}{\hat{c}}y^{\alpha + \frac{1}{\alpha} - 2} |z - \xi^1|^{1 - \alpha} |z - \xi^2|^{1 - \alpha}  \Im\big(e^{-i\pi\alpha} I(z,\xi^1,\xi^2) \big), 
\end{align}
where the constant $\hat{c} \equiv \hat{c}(\kappa)$ is given by
\begin{align}\label{hatcdef}
\hat{c} = \frac{4 \sin ^2\left(\frac{\pi  \alpha }{2}\right) \sin (\pi  \alpha ) \Gamma
   \left(1-\frac{\alpha }{2}\right) \Gamma \left(\frac{3 \alpha }{2}-1\right)}{\Gamma
   (\alpha )} \quad \text{with} \quad \alpha = \frac{8}{\kappa}.
   \end{align}
This definition of $\GG(z,\xi^1,\xi^2)$ can be extended to all $\alpha > 1$ by continuity, see \cite{LV2018_A}. It follows from (\ref{Idef}) and (\ref{GGdef}) that the product $y^{1-\frac{1}{\alpha}}  \GG(z,\xi^1, \xi^2)$ only depends on the two angles $\theta^1$ and $\theta^2$ defined by $\theta^j := \arg(z - \xi^j)$, $j = 1,2$, see \cite[Section 6.2.1]{LV2018_A}. Hence we may define the function $h$ by
\begin{align}\label{hdef}
\GG(z,\xi^1, \xi^2) = y^{\frac{1}{\alpha}-1} h(\theta^1, \theta^2), \qquad z \in \mathbb{H}, \; -\infty < \xi^1 < \xi^2 < \infty.
\end{align}
Let $\Delta \subset \R^2$ denote the triangular domain 
$$\Delta = \{(\theta^1, \theta^2) \in \R^2\, | \, 0 < \theta^1 < \theta^2 < \pi\}.$$ 
 
By applying the method of Section \ref{mainsec}-\ref{methodsec}, we can prove the following proposition which is used in \cite{LV2018_A} to derive a formula for the Green's function.

\begin{prop}[Estimates for Green's function]\label{greenprop}
Let $\alpha \geq 2$. Then the function $h(\theta^1, \theta^2)$ defined in (\ref{hdef}) is a smooth function of $(\theta^1, \theta^2) \in \Delta$ and has a continuous extension to the closure $\bar{\Delta}$ of $\Delta$. This extension satisfies
\begin{align}\label{h2ontopedge}
& h(\theta^1, \pi) = \sin^{\alpha -1}{\theta^1}, \qquad \theta^1 \in [0, \pi],
	\\
& h(\theta, \theta) = h_f(\theta), \qquad \theta \in (0, \pi),
\end{align}
where $h_f(\theta)$ is defined by
\begin{align}\nonumber
h_f(\theta) = &\; \frac{  2^{\alpha +1} \pi}{\hat{c}}
 \sin \left(\frac{\pi  \alpha }{2}\right) \sin^{2 \alpha-2}(\theta)
 	\\ \label{hfdef}
&\times  \Re\left[e^{-\frac{1}{2} i \pi  \alpha } \, _2F_1\left(1-\alpha,\alpha, 1;\frac{1}{2} (1-i \cot (\theta))\right)\right]. 
 \end{align}
Moreover, there exists a constant $C > 0$ such that
\begin{align}\label{h2estimate1}
0 \leq h(\theta^1, \theta^2) \leq C \sin^{\alpha -1}\theta^1, \qquad (\theta^1, \theta^2) \in \bar{\Delta},
\end{align}
and
\begin{align}\label{h2estimate2}
\frac{|h(\theta^1, \theta^2) - h(\theta^1, \pi)|}{\sin^{\alpha -1}\theta^1} \leq C\frac{|\pi - \theta^2|}{\sin\theta^1}, \qquad (\theta^1, \theta^2) \in \Delta.
\end{align}
\end{prop}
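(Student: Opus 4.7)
The plan is to reduce the integral $I(z,\xi^1,\xi^2)$ to the canonical form $F$ of Section~\ref{methodsec} via the M\"obius substitution $v = (u-z)/(\xi^2-z)$, which sends the enclosed points $\{z,\xi^2\}$ of the Pochhammer contour to $\{0,1\}$. Tracking the branches carefully, this would give
\[
I(z,\xi^1,\xi^2) = (\xi^2-z)^{\alpha-1} F\bigl(\alpha-1,\alpha-1,-\alpha/2,-\alpha/2;\, w_1,w_2\bigr),
\]
with $w_1 = (\bar z-z)/(\xi^2-z)$ and $w_2 = (\xi^1-z)/(\xi^2-z)$. Substituting into (\ref{GGdef}) and (\ref{hdef}) and using $|z-\xi^j| = y/\sin\theta^j$ together with $\arg(\xi^2-z)=\theta^2-\pi$, I would derive the angle-only representation
\[
h(\theta^1,\theta^2) = \frac{\sin^{\alpha-1}\theta^1}{\hat c}\,\Im\Bigl[e^{-i\pi\alpha}\,e^{i(\alpha-1)(\theta^2-\pi)}\,F(\alpha-1,\alpha-1,-\alpha/2,-\alpha/2;\,w_1,w_2)\Bigr].
\]
The real-analyticity of the Pochhammer integral in its parameters, valid whenever $0,1,w_1,w_2$ are distinct, then gives smoothness of $h$ on $\Delta$.

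For the edge $\theta^2=\pi$ (where $\xi^2\to+\infty$ and $w_1,w_2\to 0$ with $|w_1/w_2|=2\sin\theta^1$ bounded), I would apply Theorem~\ref{mainth3} (or its analog from the remark when $2\sin\theta^1\ge 1$). The leading coefficient $C^{(1)}_{00}$ simplifies via the explicit formula for $H$ and the factorization
\[
-1+e^{3\pi i\alpha}-e^{2\pi i\alpha}+e^{-i\pi\alpha}=(e^{i\pi\alpha}-1)\,e^{-i\pi\alpha}\,(e^{3\pi i\alpha}-1),
\]
leaving $\Im[e^{-i\pi\alpha}C^{(1)}_{00}]=4\sin^2(\pi\alpha/2)\sin(\pi\alpha)\Gamma(1-\alpha/2)\Gamma(3\alpha/2-1)/\Gamma(\alpha)=\hat c$, establishing (\ref{h2ontopedge}). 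For the diagonal $\theta^1=\theta^2$ (where $\xi^1\to\xi^2$ and $w_2\to 1$ with $w_1$ fixed), I would apply Theorem~\ref{mainth1} to obtain $F \sim A_0^{(1)}(w_1)\rho_c(w_2) + A_0^{(2)}(w_1)\rho_c(w_2)(w_2-1)^{1-\alpha}+\cdots$. On the diagonal one computes $\arg(w_2-1)=-\theta^2$ and $\arg(1-w_1)=2(\pi-\theta^2)$ (up to a branch shift at $\theta^2=\pi/2$), and these phases combine with $e^{-i\pi\alpha}e^{i(\alpha-1)(\theta^2-\pi)}\rho_c(w_2)$ and the explicit form of $A_0^{(2)}$ to make the singular term a real multiple of $|w_2-1|^{1-\alpha}$, whose imaginary part vanishes. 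The regular term involves $G(\alpha-1,\alpha-1,-\alpha;w_1)$, which by (\ref{G2F1}) is a ${}_2F_1$ evaluated at $1/w_1$; after normalizing $z=i$ one has $\xi^2=-\cot\theta^2$ and hence the identity $1/w_1 = (1 - i\cot\theta^2)/2$, which reproduces precisely (\ref{hfdef}).

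The remaining edge $\theta^1=0$ (i.e.\ $\xi^1\to-\infty$) I would handle by an alternative M\"obius substitution mapping $\{\xi^2,\bar z\}\to\{0,1\}$, which reduces the limit to another degenerate regime controllable by the methods of Section~\ref{methodsec}; the explicit $\sin^{\alpha-1}\theta^1$ prefactor in the formula for $h$ then forces $h(0,\theta^2)=0$, completing the extension to $\bar{\Delta}$. The upper bound in (\ref{h2estimate1}) now follows from this prefactor together with the uniform boundedness of $F$ on the compact set $\bar{\Delta}$; the lower bound $h\ge 0$ I would establish via a sign analysis after collapsing the Pochhammer contour in $I$ into real line integrals on $[\xi^1,\xi^2]$ and $[\xi^2,\infty)$ with explicit trigonometric coefficients. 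For (\ref{h2estimate2}), I would keep the next-order term in the expansion of Theorem~\ref{mainth3} near $\theta^2=\pi$, giving $F-C^{(1)}_{00}=O(|w_1|+|w_2|)$; since $|w_1|\asymp\pi-\theta^2$ and $|w_2|\asymp(\pi-\theta^2)/\sin\theta^1$, this yields the stated bound. The main obstacle I anticipate is the diagonal step: because the principal argument of $1-w_1$ is discontinuous at $\theta^2=\pi/2$, one must carefully align the branch conventions used in the expansion of Theorem~\ref{mainth1} with those of the Pochhammer integral defining $F$, in order to verify the vanishing of the singular contribution with all signs and phases correct.
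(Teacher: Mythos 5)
Your overall strategy is the same as the paper's (M\"obius reduction of $I$ to $F$, use of the Section \ref{methodsec} machinery in each degenerate regime, a phase cancellation to kill the singular part on the diagonal, explicit evaluation of the leading coefficient on the top edge), but two central steps as written do not go through. First, the branch bookkeeping: with the paper's conventions the prefactor in the representation of $h$ is not $e^{-i\pi\alpha}e^{i(\alpha-1)(\theta^2-\pi)}$ uniformly; for $\theta^2<\pi/2$ (i.e.\ $x>\xi^2$) the factor $(v-w_1)^{\alpha-1}$ crosses its cut and an extra $e^{2\pi i(\alpha-1)}$ appears, which is why (\ref{hF}) carries the piecewise factor $\sigma(\theta^2)$. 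Since your diagonal argument is a delicate alignment of arguments, this cannot be left implicit. Second, and more seriously, on the diagonal you only cancel the leading singular coefficient $A_0^{(2)}$; but for $\alpha>2$ every term $(w_2-1)^{1-\alpha+k}$ with $k<\alpha-1$ in Theorem \ref{mainth1} is singular, so killing $k=0$ does not even give continuity of $h$ up to the diagonal, let alone the uniform bound (\ref{h2estimate1}) there. The paper avoids this by working with the exact identity of Proposition \ref{FPprop} rather than the expansion, and proving in Lemma \ref{imXlemma} that the imaginary part of the \emph{entire} $P_2$-contribution vanishes identically on $\Delta$; that lemma is precisely the analytic-continuation/branch-tracking argument (continuation of $P_2$ into the set $\mathcal{E}$, checking which points cross the negative axis, then a conjugation identity for Pochhammer integrals) that you flag as ``the main obstacle'' but do not carry out. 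The same remark applies to the top edge: the uniform constants in (\ref{h2estimate1})--(\ref{h2estimate2}) come from the exact identities plus the elementary bounds of Lemma \ref{PQRSFlemma}, not from pointwise asymptotics; note also that Theorem \ref{mainth3} requires $|w_1/w_2|=2\sin\theta^1<1-\delta$, which fails on part of $\Delta$, whereas Proposition \ref{FRQprop} has no such restriction since $w_2/w_1=(1-e^{-2i\theta^1})^{-1}$ stays uniformly away from $\{0,1\}$.

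Two further gaps. Your proposed proof of $h\ge 0$ by collapsing the Pochhammer contour of $I$ onto real intervals cannot work for $\alpha\ge2$: the integrand contains $(\xi^2-u)^{-\alpha/2}$ with exponent $\le -1$, so the collapsed integral diverges at $\xi^2$ --- this is exactly the point of Remark \ref{differentexamplesremark} (the contour can be collapsed at $z$ but not at $\xi^2$), and in any case the contour encircles $z\in\mathbb{H}$, not a real point, so it cannot be pushed onto $[\xi^1,\xi^2]\cup[\xi^2,\infty)$ at all. Second, the corners of $\bar{\Delta}$ are not covered by your three edge analyses: near $(\pi,\pi)$ one has $w_1\to0$ and $w_2\to1$ simultaneously, which is the regime of Proposition \ref{FSQprop}/Theorem \ref{mainth4}, and near $(0,0)$ and $(0,\pi)$ one needs the $w_1\to 0$ identity with $|w_2|>1$ (Proposition \ref{FQprop}/Theorem \ref{mainth2}); without these regimes the continuous extension to $\bar{\Delta}$ and the uniformity of $C$ in (\ref{h2estimate1}) are not established --- this is what the paper's five-sector decomposition $S_1,\dots,S_5$ and Lemma \ref{hsectorlemma} are for. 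Finally, all the quoted results assume $\alpha,\tfrac{3\alpha}{2},2\alpha\notin\Z$; the exceptional values, in particular integer $\alpha\ge 2$, require the separate limiting arguments given at the end of the paper's proof of Lemma \ref{hboundarylemma2}, which your sketch does not address.
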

\begin{proof}
See Section \ref{greensec}.
\end{proof}

\subsection{Example 2: Schramm's formula}
Let $\alpha > 1$ and define the function $\mathcal{M}(z, \xi)$ by
\begin{align}\label{Mdef}
\mathcal{M}(z, \xi) = &\; y^{\alpha - 2} z^{-\frac{\alpha}{2}}(z- \xi)^{-\frac{\alpha}{2}}\bar{z}^{1-\frac{\alpha}{2}}(\bar{z} - \xi)^{1-\frac{\alpha}{2}}J(z, \xi), \qquad z \in   \mathbb{H}, \ \xi > 0,
\end{align}
where $J(z, \xi)$ is the integral defined by
\begin{align}\label{Jdef}
J(z, \xi) = \int_{\bar{z}}^z(u-z)^{\alpha}(u- \bar{z})^{\alpha - 2} u^{-\frac{\alpha}{2}}(u - \xi)^{-\frac{\alpha}{2}} du, \qquad z \in \mathbb{H}, \ \xi > 0,
\end{align}
with the integration contour from $\bar{z}$ to $z$ passing to the right of $\xi$, see Figure \ref{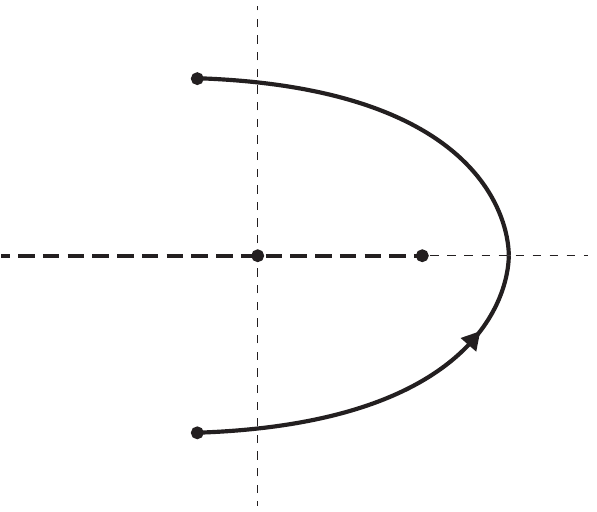}. 
\begin{figure}
\bigskip\medskip
\begin{center}
\begin{overpic}[width=.5\textwidth]{Jcontour.pdf}
      \put(28,71.5){\small $z$}
      \put(28,12){\small $\bar{z}$}
      \put(39.5,38){\small $0$}
      \put(69.5,37.5){\small $\xi$}
      \put(103,41.6){\small $\re z$}
      \put(39,89){\small $\im z$}
\end{overpic}
     \begin{figuretext}\label{Jcontour.pdf}
       The integration contour used in the definition (\ref{Mdef}) of $\mathcal{M}(z, \xi)$ is a path from $\bar{z}$ to $z$ which passes to the right of $\xi$. 
       \end{figuretext}
     \end{center}
\end{figure}
Moreover, define the function $P(z, \xi)$ by
\begin{align}\label{Pdef}
P(z, \xi) = \frac{1}{c_\alpha} 
\int_x^\infty \re \mathcal{M}(x' +i y, \xi) dx', \qquad z \in \mathbb{H}, \  \xi > 0.
\end{align}
where the normalization constant $c_\alpha \in \R$ is given by
\begin{align}\label{calphadef}
c_\alpha & = -\frac{2 \pi ^{3/2} \Gamma \left(\frac{\alpha -1}{2}\right) \Gamma \left(\frac{3 \alpha
   }{2}-1\right)}{\Gamma \left(\frac{\alpha }{2}\right)^2 \Gamma (\alpha )}.
\end{align}

We will prove the following proposition which establishes the properties of $P$ needed for the proofs in \cite{LV2018_A}.

\begin{prop}[Estimates for Schramm's formula]\label{schrammprop}
For each $\alpha > 1$, the function $P(z, \xi)$ defined in \eqref{Pdef} is a well-defined smooth function of $(z, \xi) \in \mathbb{H} \times (0, \infty)$ which satisfies
\begin{subequations}\label{Peverywhereinprop}
\begin{align}\label{Peverywherea1}
&  |P(z, \xi)| \leq C(\arg z)^{\alpha -1}, \qquad z \in \mathbb{H}, \ \xi > 0,
  	\\ \label{Peverywhereb1}
 & |P(z, \xi) - 1| \leq C(\pi - \arg z)^{\alpha -1}, \qquad z \in \mathbb{H}, \ \xi > 0.
\end{align}
\end{subequations}
\end{prop}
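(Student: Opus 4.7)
My plan is to establish Proposition \ref{schrammprop} in three steps: first show that $P$ is smooth and compute its limits as $x \to \pm\infty$, then prove the two one-sided estimates using the Dotsenko--Fateev framework developed in Sections \ref{mainsec}--\ref{methodsec}.

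First I would reinterpret the integral $J(z,\xi)$ in (\ref{Jdef}) as a Dotsenko--Fateev integral of the form (\ref{pochhammerexpression}) with $N=4$ and the four singular points $\{0,\xi,z,\bar z\}$; the contour from $\bar z$ to $z$ passing right of $\xi$ can be opened to a Pochhammer contour enclosing $\{z,\bar z\}$ (or equivalently another pair) after subtracting off residue-free closed loops, so the asymptotic machinery applies. Given smoothness of the integrand in $\mathcal{M}(x'+iy,\xi)$ for $(z,\xi) \in \mathbb{H}\times(0,\infty)$, the only issue in defining $P$ is the convergence at $x'\to +\infty$. Rescaling $u = x'\tilde u$ in $J$ and reading off the leading exponent gives $\mathcal{M}(x'+iy,\xi) = O(x'^{1-\alpha})$ at leading order; a further application of Theorem \ref{mainth3} (with the points $z,\bar z$ playing the role of small parameters after the rescaling) yields subleading cancellation in the real part sufficient to make $\int_x^\infty \re\mathcal{M}\,dx'$ absolutely convergent. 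Differentiation under the integral sign then gives smoothness of $P$, and dominated convergence gives $P(x+iy,\xi) \to 0$ as $x\to +\infty$. Computing the full integral $c_\alpha^{-1}\int_{-\infty}^\infty \re\mathcal{M}\,dx'$ by exchanging the orders of integration and collapsing the resulting Pochhammer contour onto $[\bar z,z]$ produces a product of beta-type integrals that, with $c_\alpha$ chosen as in (\ref{calphadef}), equals $1$; hence $P(x+iy,\xi) \to 1$ as $x \to -\infty$.

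For bound (\ref{Peverywherea1}), I would use a Möbius transformation that sends $\{0,\xi\}\to\{0,1\}$ to bring $J$ into the normalized form (\ref{pochhammerexpression2}) with $w_1,w_2$ being the images of $z,\bar z$, and then apply Theorem \ref{mainth3} in the regime where both $w_1$ and $w_2$ approach $0$ (which encodes $\arg z \to 0$). The explicit expansion shows that the leading term in $\re\mathcal{M}$ which survives the analytic continuation through the real axis carries a factor $(\arg z)^{\alpha-1}$ (coming from the gap between the first and third families of exponents in Theorem \ref{mainth3}); combining this pointwise estimate along the integration contour $\{x'+iy : x' \geq x\}$ with the decay established in step 1 gives the desired uniform bound. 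For bound (\ref{Peverywhereb1}), I would use $1 - P(z,\xi) = c_\alpha^{-1}\int_{-\infty}^x \re\mathcal{M}(x'+iy,\xi)\,dx'$ and apply the analogous asymptotic from Theorem \ref{mainth4} (corresponding to $z$ near the negative real axis, i.e., far from both $0$ and $\xi$), which yields a factor $(\pi-\arg z)^{\alpha-1}$ in $\re\mathcal{M}$.

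The main obstacle is the third paragraph: the domain of integration in $x'$ spans a range of arguments $\arg(x'+iy)$, so a crude pointwise bound would produce a logarithmic loss or a smaller exponent than $(\arg z)^{\alpha-1}$ after integration. To avoid this I expect one must split $[x,\infty)$ into a ``tip'' region $x' \in [x, 2|z|]$ where $\arg(x'+iy) \asymp \arg z$ and the full Theorem \ref{mainth3} expansion applies, and a ``far'' region $x' > 2|z|$ where the decay estimate from step 1 dominates and contributes an admissible remainder. Carefully matching the two estimates, and using that the real part of $\mathcal{M}$ (not just its modulus) enjoys additional cancellation coming from the phases $e^{i\pi c}$ in Theorem \ref{mainth3}, is what recovers the sharp exponent $\alpha-1$ in (\ref{Peverywherea1}), and similarly for (\ref{Peverywhereb1}).
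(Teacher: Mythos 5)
Your outline diverges from the paper's proof at the decisive step, and the divergence is exactly where the gap sits. The paper's route to the sharp exponents is not a pointwise estimate of $\re \mathcal{M}$ along the horizontal ray $\{x'+iy: x'\geq x\}$ at all: it first proves the closedness relation $\partial_y\mathcal{M}_1=-\partial_x\mathcal{M}_2$ (Lemma \ref{M1yM2xclaim}, an integration-by-parts identity showing $\im\bar\partial\mathcal{M}=0$), which makes $\int\re[\mathcal{M}\,dz]$ path-independent. This single fact does three jobs: it gives smoothness of $P$ via (\ref{Pcontourdef}) without justifying differentiation under an improper integral; it yields the exact boundary values $P=0$ on $(0,\infty)$, $P=1$ on $(-\infty,0)$ together with the normalization $c_\alpha=-\re\int_{S_r}\mathcal{M}\,dz$ (evaluated by dominated convergence as $r\downarrow 0$ using $J(0,\xi)$); and, crucially, it allows the contour to be deformed to the circular arc $\{re^{i\varphi}:0\leq\varphi\leq\theta\}$, on which the single uniform bound $|\mathcal{M}(re^{i\varphi},\xi)|\leq Cr^{-1}\sin^{\alpha-2}\varphi$ (a consequence of $|J(z,\xi)|\leq C|z-\xi|^{\alpha-1}$, proved by elementary parametrization, with the $(z-\xi)$ powers in the prefactor cancelling exactly) gives $|P|\leq C\int_0^\theta\sin^{\alpha-2}\varphi\,d\varphi\leq C\theta^{\alpha-1}$ at once, and symmetrically for $(\pi-\theta)^{\alpha-1}$. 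You never introduce this closed-form/path-independence property, and your substitute — horizontal-line integration split into a ``tip'' and a ``far'' region, with the sharp exponent recovered from ``additional cancellation coming from the phases'' in Theorem \ref{mainth3} — is precisely the step you flag as the main obstacle and leave unresolved. As you note yourself, the modulus bound loses a power (it gives $\theta^{\alpha-2}$, or a logarithm near $x'\approx\xi$), and repairing it would require uniform, sharp pointwise bounds on $\re\mathcal{M}$ near $0$, near $\xi$, and at infinity, i.e.\ essentially reproving the paper's direct estimates (Lemmas \ref{Jestimateslemma}, \ref{reJzerolemma}, \ref{MnearRclaim}) in stronger form; that work is not sketched in a way one could check.

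Two further concrete problems. First, your bound (\ref{Peverywhereb1}) rests on $\int_{-\infty}^{\infty}\re\mathcal{M}(x'+iy,\xi)\,dx'=c_\alpha$ for every $y>0$, which you only assert via an unexecuted Fubini/beta-integral computation (the inner contour depends on the outer variable, so the exchange is not routine); in the paper this identity is a consequence of path independence plus the $r\downarrow 0$ limit onto a small semicircle around the origin. Second, your plan leans on Theorems \ref{mainth3}--\ref{mainth4}, whose hypotheses exclude integer values of the relevant exponent combinations; since here $a_1=\alpha$, $a_2=\alpha-2$, $a_3=a_4=-\alpha/2$, a dense set of $\alpha>1$ would need separate treatment, whereas the proposition is claimed for every $\alpha>1$ and the paper (see Remark \ref{differentexamplesremark}) deliberately proves it by direct estimates valid uniformly in $\alpha>1$, avoiding the Dotsenko--Fateev machinery altogether for this example.
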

\begin{proof}
See Section \ref{schrammsec}.
\end{proof}

\section{Proof of Proposition \ref{greenprop}}\label{greensec}
By applying the method developed in Section \ref{mainsec}-\ref{methodsec}, we can determine the behavior of the Dotsenko-Fateev integral in (\ref{Idef}). This will lead to asymptotic formulas for the behavior of $h(\theta^1, \theta^2)$ near the boundary of $\Delta$ from which Proposition \ref{greenprop} will follow.

Let $F(w_1, w_2) \equiv F(a,b,c,d; w_1, w_2)$ be the function defined in (\ref{Fdef}) with $a,b,c,d$ given by
\begin{align}\label{abcddef}
a = b = \alpha - 1, \qquad c = d = -\frac{\alpha}{2},
\end{align}
i.e., for $w_1, w_2 \in \mathcal{D}_0$,
\begin{align}\label{Fdefalpha}
F(w_1, w_2) 
& = \int_A^{(0+,1+,0-,1-)} v^{\alpha -1}(v-w_1)^{\alpha -1} (v-w_2)^{-\frac{\alpha}{2}} (1-v)^{-\frac{\alpha}{2}} dv, 
\end{align}
where $A \in (0,1)$ is a basepoint and $w_1, w_2$ are assumed to lie outside the contour. 
The Pochhammer contour in (\ref{Idef}) encloses the variable points $z$ and $\xi^2$. In order to easily apply the results from Section \ref{mainsec}-\ref{methodsec}, we first need to express $h$ in terms of the integral $F$ whose contour encloses the fixed points $0$ and $1$. 
This can be achieved by applying a linear fractional transformation which maps $z$ and $\xi^2$ to $0$ and $1$, respectively.

\begin{lemma}[Representation for $h$]
For each non-integer $\alpha > 1$, the function $h$ defined in (\ref{hdef}) admits the representation
\begin{align}\label{hF}
h(\theta^1, \theta^2; \alpha) = \frac{\sin^{\alpha-1}\theta^1}{\hat{c}} \im\Big[\sigma(\theta_2) (-e^{i\theta^2})^{\alpha-1} F(w_1, w_2)\Big], \qquad (\theta^1, \theta^2) \in \Delta,
\end{align}
where $w_1 \equiv w_1(\theta^1, \theta^2)$ and $w_2\equiv w_2(\theta^1, \theta^2)$ are given by
\begin{align}\label{w1w2def}
w_1 = 1 - e^{-2i\theta^2}, \qquad w_2 = \frac{1 - e^{-2i\theta^2}}{1 - e^{-2i\theta^1}} = \frac{\sin\theta^2}{\sin\theta^1}e^{-i(\theta^2 - \theta^1)},
\end{align}
the constant $\hat{c}$ is defined in (\ref{hatcdef}), and
\begin{align}\label{sigmadef}
\sigma(\theta^2) = \begin{cases} e^{-i\pi \alpha}, & \theta^2 \geq \frac{\pi}{2}, \\
e^{i\pi \alpha}, & \theta^2 < \frac{\pi}{2}.
\end{cases}
\end{align}
\end{lemma}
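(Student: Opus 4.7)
The plan is to reduce $I(z,\xi^1,\xi^2)$ in $(\ref{Idef})$ to the canonical integral $F(w_1,w_2)$ of $(\ref{Fdefalpha})$ by applying the affine change of variables $v = (u-z)/(\xi^2-z)$. This map sends $\{z,\xi^2\} \mapsto \{0,1\}$, sends the basepoint $A = (z+\xi^2)/2$ to $v = 1/2$, and carries the Pochhammer contour enclosing $z, \xi^2$ (with $\bar z, \xi^1$ outside) to the standard Pochhammer contour around $0, 1$ (with $w_1, w_2$ outside). A short computation using $z - \xi^j = (y/\sin\theta^j)e^{i\theta^j}$ verifies that $\bar z \mapsto w_1$ and $\xi^1 \mapsto w_2$ with the explicit expressions in $(\ref{w1w2def})$.

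Under this substitution, $u-z = (\xi^2-z)v$, $u-\bar z = (\xi^2-z)(v-w_1)$, $u-\xi^1 = (\xi^2-z)(v-w_2)$, $\xi^2-u = (\xi^2-z)(1-v)$, and $du = (\xi^2-z)dv$, so collecting powers produces the outside factor $(\xi^2-z)^{\alpha-1}$, and one naively writes $I = (\xi^2-z)^{\alpha-1}F(w_1,w_2)$. The delicate step is to check, factor by factor, whether the principal branch at $u = A$ prescribed in $(\ref{Idef})$ agrees with the principal branch at $v = 1/2$ in the definition of $F$ times the principal branch of the appropriate power of $\xi^2 - z$. For $(u-z)^{\alpha-1}$ and $(\xi^2-u)^{-\alpha/2}$ no correction appears because $v = 1/2$ and $1-v = 1/2$ are positive real at the basepoint. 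For $(u-\xi^1)^{-\alpha/2}$ one also obtains no correction, via the elementary bound $\arg(A-\xi^1) \leq \arg(z-\xi^1) = \theta^1 \leq \theta^2$, which follows from $A-\xi^1 = (z-\xi^1)/2 + (\xi^2-\xi^1)/2$ and the positivity of $\xi^2-\xi^1$; this forces the sum of the principal arguments of $\xi^2-z$ and $v-w_2$ to lie in $(-\pi,\pi]$.

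The only nontrivial branch shift comes from $(u-\bar z)^{\alpha-1}$. Since $v-w_1 = e^{-2i\theta^2} - 1/2$ has imaginary part $-\sin(2\theta^2)$, its principal argument lies in the lower (resp.\ upper) half plane when $\theta^2 < \pi/2$ (resp.\ $\theta^2 > \pi/2$). A direct check shows that for $\theta^2 < \pi/2$ the sum of the principal arguments of $\xi^2-z$ and $v-w_1$ lies in $(-2\pi, -\pi/2)$, while the principal argument of $A-\bar z = (\xi^2-x)/2 + 3iy/2$ lies in $(\pi/2, \pi)$; the discrepancy is exactly $2\pi$, producing a phase factor $e^{2\pi i(\alpha-1)} = e^{2\pi i\alpha}$. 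For $\theta^2 > \pi/2$ both quantities lie inside a common interval of length less than $2\pi$, so they coincide and no factor appears. Hence $I = \mu(\theta^2)(\xi^2-z)^{\alpha-1}F(w_1,w_2)$ with $\mu(\theta^2) = 1$ for $\theta^2 \geq \pi/2$ and $\mu(\theta^2) = e^{2\pi i\alpha}$ for $\theta^2 < \pi/2$.

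Substituting into $(\ref{GGdef})$ and then into $(\ref{hdef})$, and using $|z-\xi^j|^{1-\alpha} = y^{1-\alpha}\sin^{\alpha-1}\theta^j$ together with the principal-branch factorization $(\xi^2-z)^{\alpha-1} = (y/\sin\theta^2)^{\alpha-1}(-e^{i\theta^2})^{\alpha-1}$ (valid because $-e^{i\theta^2} = e^{i(\theta^2-\pi)}$ in principal branch for $\theta^2 \in (0,\pi)$), all powers of $y$ and one factor of $\sin^{\alpha-1}\theta^2$ cancel, leaving the prefactor $\sin^{\alpha-1}\theta^1/\hat{c}$. A final check gives $e^{-i\pi\alpha}\mu(\theta^2) = e^{-i\pi\alpha}$ for $\theta^2 \geq \pi/2$ and $e^{i\pi\alpha}$ for $\theta^2 < \pi/2$, which is exactly $\sigma(\theta^2)$ from $(\ref{sigmadef})$, so $(\ref{hF})$ follows. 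The main obstacle is the branch bookkeeping at the basepoint: one must verify, factor-by-factor, that the only nontrivial shift is the $2\pi$ jump in the $(u-\bar z)^{\alpha-1}$ factor as $\theta^2$ crosses $\pi/2$, and this is exactly what produces the piecewise phase $\sigma(\theta^2)$.
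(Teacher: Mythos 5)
Your proposal is correct and follows essentially the same route as the paper: the affine substitution $v=(u-z)/(\xi^2-z)$, the identification of the single branch correction $e^{2\pi i(\alpha-1)}$ coming from the $(u-\bar z)^{\alpha-1}$ factor when $\theta^2<\pi/2$ (equivalently $x>\xi^2$), and the final assembly producing the phase $\sigma(\theta^2)$. Your factor-by-factor check of the principal arguments at the basepoint is a slightly more explicit version of the paper's branch-cut-crossing argument, but it is the same computation.
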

\begin{proof}
Introducing the new variable $v = \frac{u-z}{\xi^2 - z}$ in (\ref{Idef}), we find
\begin{align}\nonumber
I(z, \xi^1, \xi^2) = & \int_A^{(0+,1+,0-,1-)} (v(\xi^2 -z))^{\alpha -1} (z-\bar{z} + v(\xi^2 -z))^{\alpha -1} 
	\\\nonumber
&\times (z + v(\xi^2 -z) - \xi^1)^{-\frac{\alpha}{2}} ((\xi^2 - z)(1-v))^{-\frac{\alpha}{2}} (\xi^2 -z) dv
	\\ \label{Izxi1}
= &\;  (\xi^2 -z)^{\alpha - 1} F(w_1, w_2) \times \begin{cases}
1, & x \leq \xi^2,\\
e^{2i\pi(\alpha-1)}, & x > \xi^2,
\end{cases}	
\end{align}
where $\frac{\xi + z}{z-\xi}$ and $\frac{z-\bar{z}}{z-\xi}$ are not enclosed by the contour, and the variables
\begin{align*}
 w_1 = \frac{z-\bar{z}}{z-\xi^2} =  \frac{2i}{\cot\theta^2 + i}, \qquad w_2 = \frac{\xi^1 - z}{\xi^2-z} = \frac{\cot\theta^1 + i}{\cot\theta^2 + i},	
\end{align*}
can be expressed as in (\ref{w1w2def}). 
The extra factor of $e^{2i\pi(\alpha-1)}$ in (\ref{Izxi1}) which is present for $x > \xi^2$ arises from the factor $(z-\bar{z} + v(\xi^2 -z))^{\alpha -1}$ as follows. Let $v$ belong to the contour in (\ref{Izxi1}). Then the complex number $z-\bar{z} + v(\xi^2 -z)$ lies in the upper half-plane. If $\frac{\pi}{2} \leq \theta^2 < \pi$ (i.e. if $x \leq \xi^2$), then $v - w_1$ also lies in the upper half-plane, but if $0 < \theta^2 < \pi/2$ (i.e. if $x > \xi^2$), then $v - w_1$ has crossed the negative real axis into the lower half-plane. The factor $e^{2\pi i(\alpha -1)}$ is inserted to compensate for this crossing of the branch cut. 
Equations (\ref{GGdef}), (\ref{hdef}), and (\ref{Izxi1}) give
\begin{align}\nonumber
 h(\theta^1, \theta^2) 
& = \frac{1}{\hat{c}} y^{\alpha - 1} |z - \xi^1|^{1 - \alpha} |z - \xi^2|^{1 - \alpha} |z - \xi^1|^{1 - \alpha} |z - \xi^2|^{1 - \alpha}  \Im\big[e^{-i\pi\alpha} I(z,\xi^1,\xi^2) \big]
	\\ \label{h2expression}
& =  \frac{1}{\hat{c}} \sin^{\alpha -1}(\theta^1)\sin^{\alpha -1}(\theta^2) 
\im\big[\sigma(\theta^2) (-\cot\theta^2 -i)^{\alpha -1} F(w_1, w_2)\big],
\end{align}
where $\sigma$ is given by (\ref{sigmadef}). The representation (\ref{hF}) follows.
\end{proof}

\begin{figure}
\begin{center}
      \begin{overpic}[width=.7\textwidth]{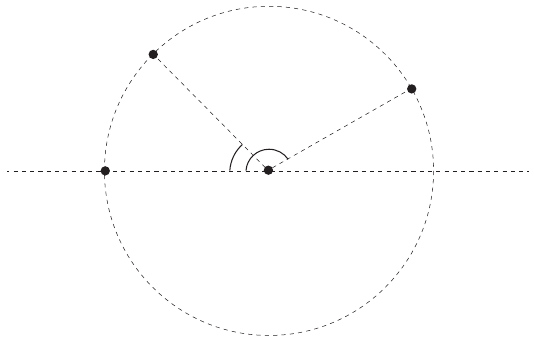}
     \put(17,29){\small $0$}
     \put(49.4,28){\small $1$}
     \put(50,38){\small $2\theta^2$}
     \put(38,34.5){\small $2\theta^1$}
     \put(23,55){\small $\frac{w_1}{w_2}$}
     \put(79,48){\small $w_1$}
     \end{overpic}
      \begin{figuretext}\label{woncircle.pdf}
       The complex numbers $w_1 = 1 - e^{-2i\theta^2}$ and $w_1/w_2 = 1 - e^{-2i\theta^1}$ lie on the circle of radius one centered at $1$.
       \end{figuretext}
     \end{center}
\end{figure}

\begin{remark}\upshape
For $(\theta^1, \theta^2) \in \Delta$, $w_1$ and $w_1/w_2$ lie on the circle of radius one centered at $1$ (see Figure \ref{woncircle.pdf}), while $w_2$ lies in the open lower half-plane, i.e., $\im w_2 < 0$.
\end{remark}

\begin{remark}\upshape
The value of $F(w_1, w_2)$ in (\ref{hF}) is, strictly speaking, not well-defined by (\ref{Fdefalpha}) for $\theta^2 = \pi/2$, because in this case $w_1 = 2 \notin \mathcal{D}_0$. However, by analytic continuation, the function $F$ in (\ref{hF}) extends to a multiple-valued function of $w_1, w_2 \in \C \setminus \{0,1\}$. Equation (\ref{hF}) then extends continuously across the line $\theta^2 = \pi/2$.
\end{remark}

Given $\delta > 0$ and $c > 0$, we define the open subsets $S_j$, $j = 1, \dots, 5$, of $\Delta$ by (see Figure \ref{sectors12fig}-\ref{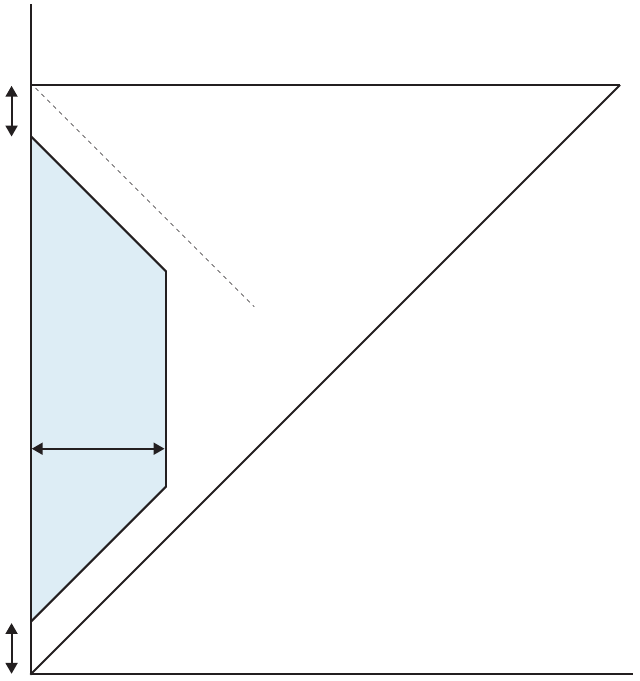})
\begin{align*}
  S_1 = &\; \{(\theta^1, \theta^2) \in \Delta \, | \, 0 < \theta^2 - \theta^1 < c\sqrt{2}, \; \theta^1 > \delta, \; \theta^2 < \pi - \delta\},
	\\
  S_2 = &\; \Big\{(\theta^1, \theta^2) \in \Delta \, \Big| \, \theta^2 < c, \; \arctan\frac{\theta^1}{\theta^2} < \frac{\pi}{4} - \delta\Big\} 
  	\\
&  \cup 
  \Big\{(\theta^1, \theta^2) \in \Delta \, \Big| \, \theta^2 > \pi - c, \; \arctan\frac{\theta^1}{\pi - \theta^2} < \frac{\pi}{4} - \delta\Big\},
	\\	  
  S_3 =&\; \Big\{(\theta^1, \theta^2) \in \Delta \, \Big| \, \theta^2 > \pi - c, \; \arctan\frac{\pi - \theta^2}{\theta^1} < \frac{\pi}{4} - \delta, \; \arctan\frac{\pi - \theta^2}{\pi - \theta^1} < \frac{\pi}{4} - \delta\Big\},
	\\	  
  S_4 = &\;\Big\{(\theta^1, \theta^2) \in \Delta \, \Big| \, \theta^1 + \theta^2 < c\sqrt{2}, \; \delta <\arctan\frac{\theta^1}{\theta^2} < \frac{\pi}{4}\Big\}
  	\\
&  \cup \Big\{(\theta^1, \theta^2) \in \Delta \, \Big| \, \theta^2 - \theta^1 >\pi - c\sqrt{2}, \; \delta <\arctan\frac{\pi - \theta^2}{\theta^1} < \frac{\pi}{2} - \delta\Big\}
  	\\
&  \cup  \Big\{(\theta^1, \theta^2) \in \Delta \, \Big| \, \theta^1+\theta^2 > 2\pi - c\sqrt{2}, \; \delta <\arctan\frac{\pi - \theta^2}{\pi - \theta^1} < \frac{\pi}{4}\Big\},
  	\\	  
  S_5 =&\; \Big\{(\theta^1, \theta^2) \in \Delta \, \Big| \, \theta^1 < c, \; \theta^2 - \theta^1 > \delta, \; \theta^1 + \theta^2 < \pi - \delta\Big\}.
\end{align*}
The asymptotics of $h(\theta^1, \theta^2)$ as $(\theta^1, \theta^2)$ approaches the boundary of the triangle $\Delta$, can be described in terms of the five asymptotic sectors $\{S_j\}_1^5$. Indeed, as the next lemma shows, the first four sectors $S_j$, $j=1,\dots, 4$, correspond to the asymptotic regions treated in Proposition \ref{FPprop}-\ref{FSQprop}, respectively, while the sector $S_5$ corresponds to the region where $w_2 \to \infty$.

\begin{figure}
\begin{center}
\begin{overpic}[width=.4\textwidth]{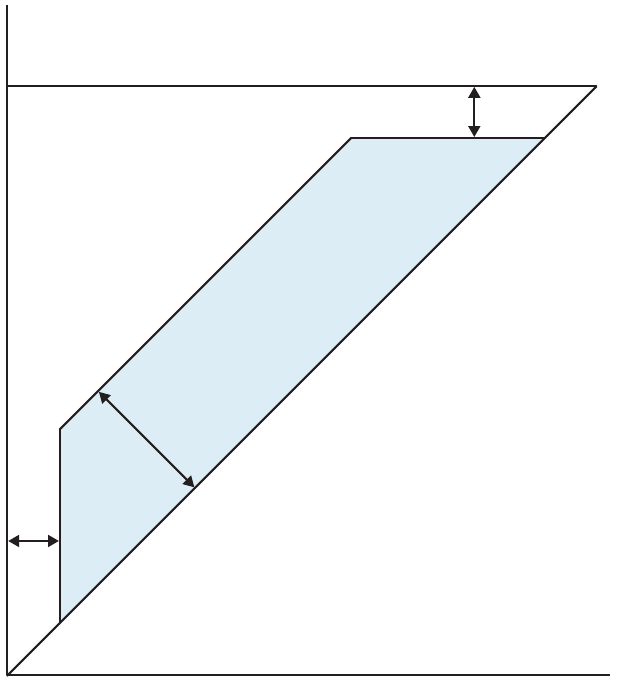}
      \put(93,-1){\small $\theta^1$}
      \put(-.5,103){\small $\theta^2$}
      \put(3.6,22.5){\small $\delta$}
      \put(65,82){\small $\delta$}
      \put(22,36){\small $c$}
      \put(-4,-1){\small $0$}
      \put(-4,86.5){\small $\pi$}
      \put(35,52){\small $S_1$}
      \end{overpic} \qquad\qquad
      \begin{overpic}[width=.413\textwidth]{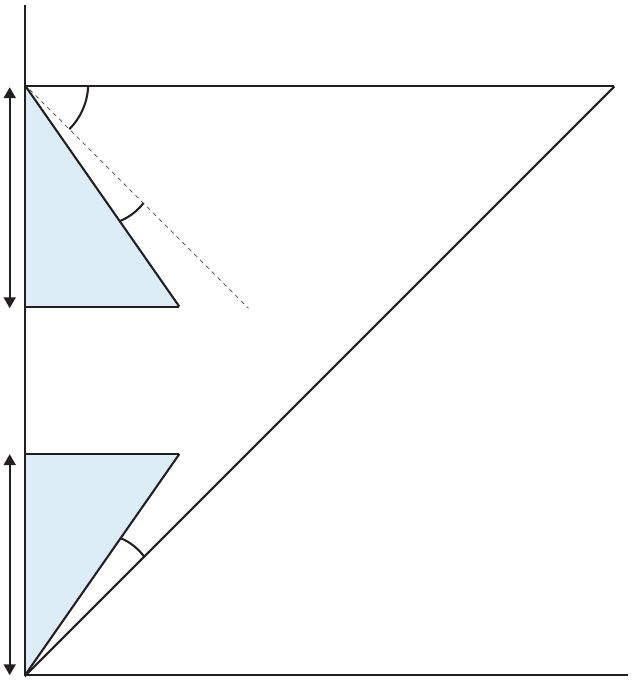}
      \put(95,-1){\small $\theta^1$}
      \put(2,102){\small $\theta^2$}
      \put(14,82){\small $\frac{\pi}{4}$}
      \put(21,21){\small $\delta$}
      \put(21,64){\small $\delta$}
      \put(-3,16){\small $c$}
      \put(-3,70){\small $c$}
      \put(8,24){\small $S_2$}
      \put(8,62){\small $S_2$}
      \end{overpic}
     \begin{figuretext}\label{sectors12fig}
       The asymptotic sectors $S_1$ and $S_2$. 
       \end{figuretext}
     \end{center}
\end{figure}

\begin{figure}
\begin{center}
\begin{overpic}[width=.41\textwidth]{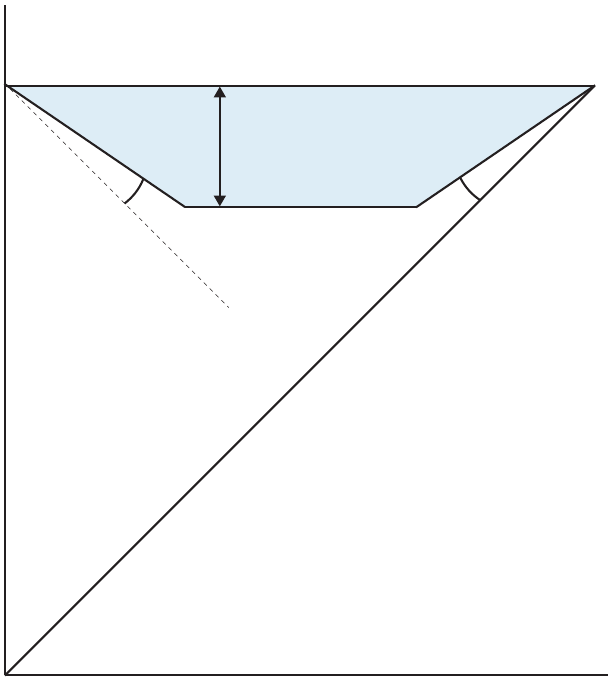}
      \put(93,-1){\small $\theta^1$}
      \put(-.5,103){\small $\theta^2$}
      \put(28,78){\small $c$}
      \put(21,68){\small $\delta$}
      \put(65,68){\small $\delta$}
      \put(-4,-1){\small $0$}
      \put(-4,86.5){\small $\pi$}
      \put(41,77){\small $S_3$}
      \end{overpic} \qquad\qquad
      \begin{overpic}[width=.41\textwidth]{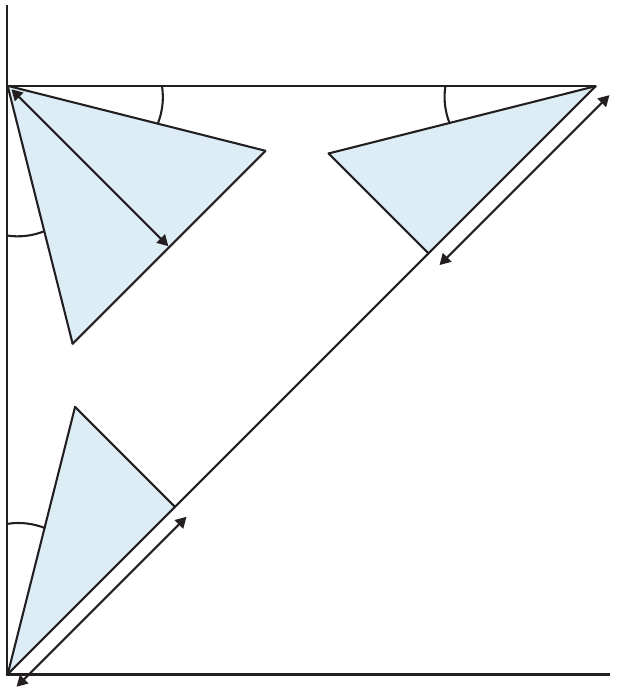}
     \put(93,-1){\small $\theta^1$}
      \put(-.5,103){\small $\theta^2$}
      \put(11,72){\small $c$}
      \put(75,71){\small $c$}
      \put(16,12){\small $c$}
      \put(3,61.5){\small $\delta$}
      \put(3,27.5){\small $\delta$}
      \put(59,83){\small $\delta$}
      \put(25,83){\small $\delta$}
      \put(-3,3){\small $0$}
      \put(-4,86.5){\small $\pi$}
      \put(21,74){\small $S_4$}
      \put(60,74){\small $S_4$}
      \put(12,25){\small $S_4$}
     \end{overpic}
     \begin{figuretext}\label{sectors34fig}
       The asymptotic sectors $S_3$ and $S_4$. 
       \end{figuretext}
     \end{center}
\end{figure}

\begin{figure}
\begin{center}
      \begin{overpic}[width=.43\textwidth]{sector5.pdf}
     \put(96,-1){\small $\theta^1$}
      \put(3,102){\small $\theta^2$}
      \put(13,30){\small $c$}
      \put(-3,82){\small $\delta$}
      \put(-3,3){\small $\delta$}
      \put(12,45){\small $S_5$}
     \end{overpic}
     \begin{figuretext}\label{sector5.pdf}
       The asymptotic sector $S_5$. 
       \end{figuretext}
     \end{center}
\end{figure}

If $w \in \C$ and $A$ is a subset of $\C$, we write $\dist(w, A)$ for the Euclidean distance from $w$ to $A$; we write $\dist(w, A \cup \{\infty\}) > \epsilon$ to indicate that $\dist(w,A) > \epsilon$ and $|w| < 1/\epsilon$.

\begin{lemma}\label{w1w2lemma}
Let $\delta > 0$ and let $w_1, w_2$ be given by (\ref{w1w2def}).
Then there exist constants $c > 0$ and $\epsilon > 0$ such that the following estimates hold:
\begin{enumerate}[$(1)$]
\item $\dist(w_1, \{0, 1, \infty\}) > \epsilon$ and $|w_2 - 1| < 1-\epsilon$ for all $(\theta^1, \theta^2) \in S_1$,

\item $|w_1| < 1-\epsilon$ and $|w_2| > 1+\epsilon$ for all $(\theta^1, \theta^2) \in S_2$,

\item $|w_1| < 1-\epsilon$ and $|w_2| < 1-\epsilon$ for all $(\theta^1, \theta^2) \in S_3$,

\item $|w_1| < 1-\epsilon$ and $\dist(w_2, \{0, \infty\})> \epsilon$ for all $(\theta^1, \theta^2) \in S_4$,

\item $\dist(w_1, \{0,1, \infty\}) > \epsilon$ and $|w_2| > 1+ \epsilon$ for all $(\theta^1, \theta^2) \in S_5$.

\end{enumerate}
\end{lemma}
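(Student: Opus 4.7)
My plan is to reduce the entire lemma to four explicit trigonometric formulas for the sizes of $w_1$, $w_1 - 1$, $w_2$, and $w_2 - 1$, and then treat each sector as a routine case check. Starting from (\ref{w1w2def}), I would compute the identities
\begin{align*}
|w_1| = 2\sin\theta^2, \qquad |w_1 - 1| = 1, \qquad |w_2| = \frac{\sin\theta^2}{\sin\theta^1}, \qquad |w_2 - 1| = \frac{|\sin(\theta^2-\theta^1)|}{\sin\theta^1},
\end{align*}
where the first comes from $|1-e^{-2i\theta^2}|^2 = 4\sin^2\theta^2$, the second from the observation that $w_1$ traces the circle of radius $1$ centered at $1$, the third is immediate, and the last follows by writing $w_2 - 1 = (e^{-2i\theta^1}-e^{-2i\theta^2})/(1-e^{-2i\theta^1})$ and using $|e^{-2i\alpha}-e^{-2i\beta}| = 2|\sin(\beta-\alpha)|$. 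With these formulas in hand, each assertion of the lemma becomes a one-line trigonometric inequality.

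With $\delta>0$ fixed, I will choose $c$ small (in terms of $\delta$) and then $\epsilon$ small (in terms of both). For instance, in $S_1$ I use $\sin\theta^1 \geq \sin\delta$ and $|w_2-1|\leq \sin(c\sqrt 2)/\sin\delta$, which is $<1-\epsilon$ once $c$ is small enough; the distance bound for $w_1$ follows from $2\sin\delta \leq |w_1|\leq 2$ and $|w_1-1|=1$. In $S_2$, the condition $\arctan(\theta^1/\theta^2)<\pi/4-\delta$ gives $\theta^1/\theta^2 \leq \tan(\pi/4-\delta)<1$ (and analogously near $\pi$), so that $|w_2|$ exceeds $1$ by a $\delta$-uniform amount using $\sin x/\sin y \approx x/y$ for small angles, while $|w_1|<2\sin c$ handles the first bound. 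Sectors $S_3$ and $S_4$ are handled by the same template: the constraints place $(\pi-\theta^2)/\theta^1$, $(\pi-\theta^2)/(\pi-\theta^1)$, or $\theta^1/\theta^2$ into intervals bounded away from $0$ and from the critical value $1$, and the formulas above then pin $|w_2|$ into the required range while the smallness of $\pi-\theta^2$ (resp.~$\theta^2$) keeps $|w_1|$ small. Finally $S_5$ is immediate: $\theta^2 \in (\delta,\pi-\delta)$ forces $\sin\theta^2 \geq \sin\delta$ so $w_1$ stays away from $\{0,1,\infty\}$, and $\sin\theta^1 \leq \sin c$ drives $|w_2|$ above $1+\epsilon$.

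The only mild subtlety is in $S_3$, where both $\theta^1$ and $\pi-\theta^1$ could in principle be small, causing $\sin\theta^1$ to be small and $|w_2|$ to be large. This is precisely what the \emph{two} constraints $\arctan((\pi-\theta^2)/\theta^1)<\pi/4-\delta$ and $\arctan((\pi-\theta^2)/(\pi-\theta^1))<\pi/4-\delta$ are designed to rule out: they force both $\theta^1 \geq (\pi-\theta^2)\cot(\pi/4-\delta)$ and $\pi-\theta^1\geq(\pi-\theta^2)\cot(\pi/4-\delta)$, so by concavity of $\sin$ on $(0,\pi)$ one has $\sin\theta^1 \geq \sin[(\pi-\theta^2)\cot(\pi/4-\delta)]$, and small-angle expansion then yields $|w_2|\leq \tan(\pi/4-\delta)+O(c)<1-\epsilon$ for $c$ small. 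This is the main (and essentially only) place where the geometry of the sector must be read carefully; everything else is direct substitution.
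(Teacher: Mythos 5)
Your proof is correct: the four modulus identities $|w_1|=2\sin\theta^2$, $|w_1-1|=1$, $|w_2|=\sin\theta^2/\sin\theta^1$, $|w_2-1|=\sin(\theta^2-\theta^1)/\sin\theta^1$ are exactly right (the last matches the paper's formula $w_2=1+\frac{\sin(\theta^2-\theta^1)}{\sin\theta^1}e^{-i\theta^2}$), and your sector-by-sector case check, including the concavity argument handling $\sin\theta^1$ in $S_3$, is sound. This is precisely the direct verification from the definition (\ref{w1w2def}) that the paper's one-line proof alludes to, just written out in detail.
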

\begin{proof} The proof follows easily from the definition (\ref{w1w2def}) of $w_1$ and $w_2$.
\end{proof}

Assume that $\alpha \geq 2$ is such that $\frac{3\alpha}{2}, 2\alpha \notin\Z$; the cases when $\frac{3\alpha}{2}$ and/or $2\alpha$ is an integer will be considered separately. Then $a,b,c,d \in \R \setminus \Z$ and $a+b,c+d,a+b+c \notin \Z$. In fact, since $a+b = 2\alpha - 2 \geq 0$, it can be seen from Lemma \ref{w1w2lemma} and Theorem \ref{mainth2} that the function $F(w_1, w_2)$ in (\ref{hF}) is bounded in the sector $S_2$. On the other hand, since $c+d+1 = 1-\alpha < -1$, the sum in Theorem \ref{mainth1} which involves the coefficients $A_k^{(2)}$ is, in general, singular as $w_2 \to 1$. However, it turns out that the contribution from this sum to $h$ vanishes identically because of the taking of the imaginary part in (\ref{hF}). In order to see this, we need to consider the function $P_2$ of Proposition \ref{FPprop} from which the coefficients $A_k^{(2)}$ originated in more detail.

\begin{lemma}\label{imXlemma}
Let $P_2(w_1, w_2)$ denote the function defined in (\ref{P2def}) with $a,b,c,d$ given by (\ref{abcddef}) and define $X: \Delta \to \C$ by
\begin{align}\label{Xdef}
X(\theta^1, \theta^2) = \sigma(\theta^2) (-e^{i\theta^2})^{\alpha -1} e^{-\frac{i\pi \alpha}{2}}(w_2 -1)^{1 - \alpha}
P_2(w_1, w_2),
\end{align}
where the variables $w_1$ and  $w_2$ are given by (\ref{w1w2def}).
Then
\begin{align}\label{imP2zero}
\im X(\theta^1, \theta^2) = 0 \quad \text{in $\Delta$}.
\end{align}
\end{lemma}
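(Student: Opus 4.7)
The plan is to reduce $(w_2-1)^{1-\alpha}P_2(w_1,w_2)$ to an explicit complex phase times a manifestly real positive integral, by tracing it back to the original integral $I(z,\xi^1,\xi^2)$ in \eqref{Idef}. The decisive observation is that for $u\in\R$ one has $(u-z)(u-\bar z)=|u-z|^2>0$, so a Pochhammer contour integral around the real points $\xi^1$ and $\xi^2$ of the integrand of $I$ collapses to a manifestly real integral over $[\xi^1,\xi^2]$. Once this is established, $\im X$ vanishes because the accumulated $\theta^2$-dependent phases conspire to a real number.

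I would first undo the change of variables $v=w_2+s(1-w_2)$ used in the proof of Proposition \ref{FPprop} to define $P_2$ in \eqref{P2def}. This rewrites $(w_2-1)^{1-\alpha}P_2$ as an explicit constant times a Pochhammer contour integral around $w_2$ and $1$ of $v^{\alpha-1}(v-w_1)^{\alpha-1}(v-w_2)^{-\alpha/2}(1-v)^{-\alpha/2}$. With the parameters \eqref{abcddef}, the prefactor $(e^{2\pi ia}-1)e^{i\pi d}/(1-e^{2\pi i(c+d)})$ simplifies to $e^{3i\pi\alpha/2}$, and $(w_2-1)^{1-\alpha}(1-w_2)^{\alpha-1}=e^{i\pi(\alpha-1)}$ in the principal branch (because $\im w_2<0$ on $\Delta$, so $1-w_2=e^{i\pi}(w_2-1)$). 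Next I would undo the linear fractional transformation $v=(u-z)/(\xi^2-z)$ from \eqref{Izxi1}, which maps the $v$-Pochhammer contour around $w_2,1$ to a $u$-Pochhammer contour around $\xi^1,\xi^2$, produces the Jacobian factor $(\xi^2-z)^{1-\alpha}$, converts the integrand to $(u-z)^{\alpha-1}(u-\bar z)^{\alpha-1}(u-\xi^1)^{-\alpha/2}(\xi^2-u)^{-\alpha/2}$, and generates an extra branch phase $e^{2i\pi(\alpha-1)}$ when $\theta^2<\pi/2$, exactly as in \eqref{Izxi1}. Since the $u$-integrand is analytic on $\C\setminus(\R\cup\{z,\bar z\})$, the $u$-Pochhammer contour collapses onto the interval $[\xi^1,\xi^2]$ with the standard combinatorial factor $-(1-e^{-i\pi\alpha})^2$ coming from the monodromy $e^{-i\pi\alpha}$ at each endpoint.

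The payoff is that on $(\xi^1,\xi^2)$ the collapsed integrand is explicitly real positive: in the principal branch, $\overline{(u-z)^{\alpha-1}}=(u-\bar z)^{\alpha-1}$ for real $u$, so $(u-z)^{\alpha-1}(u-\bar z)^{\alpha-1}=|u-z|^{2(\alpha-1)}$, while $(u-\xi^1)^{-\alpha/2}$ and $(\xi^2-u)^{-\alpha/2}$ are themselves positive. Substituting the resulting expression for $(w_2-1)^{1-\alpha}P_2$ into \eqref{Xdef} and using the identity $(-e^{i\theta^2})^{\alpha-1}(\xi^2-z)^{1-\alpha}=|\xi^2-z|^{1-\alpha}$ (valid because both bases have principal argument $\theta^2-\pi$), the accumulated phase factors conspire to give $X=C_\alpha\,|\xi^2-z|^{1-\alpha}\,R$ with $R>0$ and $C_\alpha$ an $\alpha$-dependent real constant, whence $\im X=0$. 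The main obstacle is the branch-cut bookkeeping at this final step: one must verify carefully that the piecewise definition \eqref{sigmadef} of $\sigma(\theta^2)$ precisely cancels the branch jump picked up by $(v-w_1)^{\alpha-1}$ as $w_1$ crosses the positive real axis at $w_1=2$ when $\theta^2$ traverses $\pi/2$, so that the real-valued identity for $X$ holds uniformly on $\Delta$.
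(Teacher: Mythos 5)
Your guiding heuristic --- that the integrand of $I$ restricted to the real segment between $\xi^1$ and $\xi^2$ is positive --- is the right one, but the step you use to cash it in fails in exactly the regime where the lemma is needed. After undoing the two changes of variables you arrive at a Pochhammer contour encircling $\xi^1$ and $\xi^2$, and you then collapse it onto $[\xi^1,\xi^2]$ ``with the standard combinatorial factor $-(1-e^{-i\pi\alpha})^2$''. That collapse requires the endpoint exponents to exceed $-1$; here both endpoints carry the exponent $-\alpha/2$, and the lemma is used under the standing assumption $\alpha\ge 2$ (see the paragraph preceding it and Proposition \ref{greenprop}), so the collapsed integral diverges at both $\xi^1$ and $\xi^2$ and the monodromy-factor identity is simply not available. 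This is precisely the point of Remark \ref{differentexamplesremark}: for $\alpha\ge 2$ the contour in \eqref{Idef} cannot be collapsed at $\xi^2$, which is why the whole analysis is carried out with Pochhammer contours in the first place. The paper's proof implements your ``realness'' idea without any collapse: it keeps the Pochhammer contour and uses the conjugation identity $\overline{\int_A^{(0+,1+,0-,1-)}g(s)\,s^c(1-s)^d\,ds}=e^{-2\pi i(c+d)}\int_A^{(0+,1+,0-,1-)}\overline{g(\bar s)}\,s^c(1-s)^d\,ds$ for $g$ real-analytic; with $c+d=-\alpha$ the resulting phase cancels the prefactor $e^{i\pi\alpha}$ exactly, so $\im X=0$ even though the Pochhammer integral itself is \emph{not} real.

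A second, smaller but still substantive gap: you treat the identification of $(w_2-1)^{1-\alpha}P_2$ with ``an explicit constant times a Pochhammer contour integral around $w_2$ and $1$'' as a formality, but \eqref{P2def} defines $P_2$ only for $\re w_2\in(0,1)$ with $\im w_2>0$ small, whereas on the set of points relevant here one has $\im w_2<0$. Evaluating the analytic continuation there is where most of the real work lies: one follows a path taking $w_2$ around $1$ and checks when the points $A-1-\frac{1}{w_2-1}$ and $A-1+\frac{w_1-1}{w_2-1}$ cross the negative real axis; this produces an extra factor $e^{2\pi i a}$ always and an extra $e^{2\pi i b}$ only when $\theta^2\ge\pi/2$, and it is exactly this dichotomy that is compensated by the piecewise factor $\sigma(\theta^2)$ in \eqref{Xdef}. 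You flag this as ``branch-cut bookkeeping'' to be verified later, but without it the overall phase constant in your final formula is not pinned down --- and the assertion $\im X=0$ is precisely a statement about that phase.
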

\begin{proof}
From the definition (\ref{P2def}) of $P_2$ we see that for $w_1 \in \C \setminus [0, \infty)$ and $w_2 \in (0,1)$ we have
\begin{align}\nonumber
P_{2}&(w_1, w_2+i0) = \frac{(e^{2\pi ia} -1)e^{\pi i d}}{1 - e^{2\pi i(c+d)}}  |1-w_2|^{a+b}
	\\ \label{P2X}
&\times \int_A^{(0+,1+,0-,1-)} \Big(s- 1 - \frac{1}{w_2 -1}\Big)^a \Big(s - 1 + \frac{w_1 -1}{w_2-1}\Big)^b s^{c} (1-s)^{d} ds.
\end{align}
By definition, the value of $P_2(w_1,w_2)$ at a general point $(w_1, w_2) \in \mathcal{D}_1$  is determined by analytic continuation of (\ref{P2X}) within the connected set $\mathcal{D}_1 \subset \C^2$. The branches of the complex powers in (\ref{P2X}) are fixed by requiring that the principal branch is used initially at the basepoint $s = A \in (0,1)$; for definiteness, let us choose $A = 1/2$. This means that whenever the points 
\begin{align}\label{Awpoints}
A - 1 - \frac{1}{w_2 -1}\quad \text{and} \quad A - 1 + \frac{w_1 -1}{w_2-1},
\end{align}
cross the negative real axis during the analytic continuation, extra factors of $e^{\pm 2\pi i a}$ and $e^{\pm 2\pi i b}$, respectively, have to be inserted in (\ref{P2X}).

In order to evaluate the function $X$ in (\ref{Xdef}), we need the value of $P_2$ at points $(w_1, w_2) \in \mathcal{E}$, where $\mathcal{E}$ denotes the subset of $\C^2$ characterized by (\ref{w1w2def}), i.e.,
$$\mathcal{E} = \Big\{(w_1,w_2) = \Big(1 - e^{-2i\theta^2}, \frac{\sin\theta^2}{\sin\theta^1}e^{-i(\theta^2 - \theta^1)}\Big) \,\Big| \, (\theta^1, \theta^2) \subset \Delta\Big\}.$$
If $w_1$ and $w_2$ are given by (\ref{w1w2def}), then
$$\cot \theta^2 < \cot \theta^1, \qquad
\frac{1}{w_2-1} = \frac{\cot\theta^2 + i}{\cot\theta^1 - \cot\theta^2}, \qquad
-\frac{w_1-1}{w_2 -1} = \frac{\cot\theta^2 - i}{\cot\theta^1 - \cot\theta^2}.$$ 
Hence, we have, for all $(w_1, w_2) \in \mathcal{E}$,
\begin{align}\label{imaginaryparts}
\im\Big(A-1-  \frac{1}{w_2 -1}\Big) <  0, \qquad \im\Big(A-1+ \frac{w_1-1}{w_2 -1}\Big) > 0.
\end{align}
This shows that neither of the points in (\ref{Awpoints}) crosses the negative real axis as long as $(w_1,w_2)$ remains within $\mathcal{E}$. We can therefore find a formula for $P_2$ valid in $\mathcal{E}$ as follows. 

Let $(w_1, w_2)$ be a point in $\mathcal{E}$ corresponding to $(\theta^1, \theta^2)$ via (\ref{w1w2def}). Then
\begin{align}\label{w2minus1}
w_2  =1 + \frac{\sin(\theta^2 -\theta^1)}{\sin\theta^1} e^{-i\theta^2}.
\end{align}
Let $0 < \epsilon < |w_1-1|$ be small and let $(\tilde{w}_1(t), \tilde{w}_2(t))$, $t \in [0,1]$, be the path in $\mathcal{D}_1$ defined by $\tilde{w}_1(t) = w_1$ for all $t$, while the path $\tilde{w}_2(t)$ starts at $1-\epsilon + i0$, proceeds clockwise around the small circle of radius $\epsilon$ centered at $1$ until it reaches the point $1 + \epsilon e^{-i\theta^2}$, and then proceeds along the straight line segment $[1+\epsilon e^{-i\theta^2},w_2]$ until it reaches $w_2$. 

As $\tilde{w}_2$ moves along the arc from $1-\epsilon + i0$ to $1 + \epsilon e^{-i\theta^2}$, the point $A - 1 - \frac{1}{\tilde{w}_2(t) -1}$ crosses the negative real axis from the upper into the lower half-plane once (this adds a factor of $e^{2\pi i a}$ to (\ref{P2X})), and, provided that $\im w_1 \leq 0$ (i.e. $\theta^2 \geq \pi/2$), $A - 1 + \frac{\tilde{w}_1(t) -1}{\tilde{w}_2(t)-1}$ also crosses the negative real axis from the upper into the lower half-plane once (this adds a factor of $e^{2\pi i b}$ to (\ref{P2X})). If $\im w_1 > 0$, then $A - 1 + \frac{\tilde{w}_1(t) -1}{\tilde{w}_2(t)-1}$ does not cross the negative real axis. By varying $\theta^1$ in (\ref{w2minus1}), we see that the part of the path for which $\tilde{w}_2$ belongs to the segment $[1+\epsilon e^{-i\theta^2},w_2]$ lies in $\mathcal{E}$; hence the analytic continuation along this part adds no more factors to (\ref{P2X}). We end up with the following formula for $P_2$ on $\mathcal{E}$:
\begin{align}\nonumber
P_{2}(w_1, w_2) = &\; \frac{(e^{2\pi ia} -1)e^{\pi i d}}{1 - e^{2\pi i(c+d)}}
 e^{\pi i (a-b)} (w_2-1)^{a+b} 
	\\ \nonumber
&\times \int_A^{(0+,1+,0-,1-)} \Big(s- 1 - \frac{1}{w_2 -1}\Big)^a \Big(s - 1 + \frac{w_1 -1}{w_2-1}\Big)^b s^{c} (1-s)^{d} ds
	\\\nonumber
& \times \begin{cases} e^{2\pi i b}, \quad & \im w_1 \leq 0, \\ 1, & \im w_1 > 0, \end{cases} \   (w_1, w_2) \in \mathcal{E},
\end{align}
where $1 + \frac{1}{w_2 -1}$ and $1 - \frac{w_1 -1}{w_2 -1}$ lie exterior to the contour.
Substituting this formula into (\ref{Xdef}) and simplifying, we find
\begin{align*}
X(&\theta^2, \theta^2) =  (-e^{i\theta^2})^{\alpha -1} (w_2-1)^{\alpha -1} e^{2\pi i \alpha}
	\\ \nonumber
&\times \int_A^{(0+,1+,0-,1-)} \Big(s- 1 - \frac{1}{w_2 -1}\Big)^{\alpha-1} \Big(s - 1 + \frac{w_1 -1}{w_2-1}\Big)^{\alpha-1} s^{-\frac{\alpha}{2}} (1-s)^{-\frac{\alpha}{2}} ds,
\end{align*}
where $w_1, w_2$ are given by (\ref{w1w2def}). But
$$(-e^{i\theta^2})^{\alpha -1} (w_2-1)^{\alpha -1} 
= (\sin^{\alpha -1}\theta^2) (\cot\theta^1 - \cot\theta^2)^{\alpha -1} e^{-\pi i(\alpha-1)},$$
and, by (\ref{imaginaryparts}),
\begin{align*}
& \Big(s-1-  \frac{1}{w_2 -1}\Big)^{\alpha -1} \Big(s-1+ \frac{w_1-1}{w_2 -1}\Big)^{\alpha -1} 	
 = \Big(\frac{1 + ((1-s)\cot\theta^1 + s\cot\theta^2)^2}{(\cot\theta^1 - \cot\theta^2)^2}\Big)^{\alpha -1}.
\end{align*}
Hence
\begin{align}\nonumber
X(\theta^2, \theta^2) = & - \sin^{\alpha -1}(\theta^2) (\cot\theta^2 - \cot\theta^1)^{\alpha -1} e^{\pi i \alpha}
	\\  \label{X3}
&\times \int_A^{(0+,1+,0-,1-)} \Big(\frac{1 + ((1-s)\cot\theta^1 + s\cot\theta^2)^2}{(\cot\theta^1 - \cot\theta^2)^2}\Big)^{\alpha -1} s^{-\frac{\alpha}{2}} (1-s)^{-\frac{\alpha}{2}} ds.
\end{align}
If $g(s)$ is an analytic function, then the general identity $\overline{\int_\gamma g(w) dw} = \int_{\bar{\gamma}} \overline{g(\bar{v})} dv$ implies 
\begin{align*}
& \overline{\int_A^{(0+,1+,0-,1-)} g(s)  s^c(1-s)^d ds}
 = \int_A^{(0-,1-,0+,1+)} \overline{g(\bar{s})} s^c (1-s)^d  ds
	\\
& = \frac{-1+e^{-2\pi i c} - e^{-2\pi i(c+d)} + e^{-2\pi i d}}{-1+e^{2\pi i c} - e^{2\pi i(c+d)} + e^{2\pi i d}} \int_A^{(0+,1+,0-,1-)} \overline{g(\bar{s})} s^c (1-s)^d  ds
	\\
& = e^{-2\pi i (c+d)} \int_A^{(0+,1+,0-,1-)} \overline{g(\bar{s})} s^c (1-s)^d  ds.
\end{align*}
Using this identity to compute the imaginary part of (\ref{X3}) we arrive at
\begin{align*}
 \im X(\theta^1, \theta^2) =  & - \sin^{\alpha -1}(\theta^2) (\cot\theta^2 - \cot\theta^1)^{\alpha -1} 
\int_A^{(0+,1+,0-,1-)} \big(e^{\pi i \alpha} -
 e^{-\pi i \alpha} e^{-2i\pi(-\frac{\alpha}{2} - \frac{\alpha}{2})} 
\big) 
	\\
&\times \Big(\frac{1 + ((1-s)\cot\theta^1 + s\cot\theta^2)^2}{(\cot\theta^1 - \cot\theta^2)^2}\Big)^{\alpha -1} s^{-\frac{\alpha}{2}} (1-s)^{-\frac{\alpha}{2}} ds = 0.
\end{align*}
This completes the proof of the lemma. 
\end{proof}

Using the identities of Proposition \ref{FPprop}-\ref{FSQprop} to replace $F$ in the expression (\ref{hF}) for $h$, and using that the contribution from $P_2$ vanishes due to Lemma \ref{imXlemma}, we arrive at the next lemma, which provides four representations for $h$ which are suitable for determining the behavior of $h$ for $(\theta^1, \theta^2) \in S_j$, $j = 1, \dots, 4$, respectively. For $(\theta^1, \theta^2) \in S_5$, we will use the original representation (\ref{hF}).

\begin{lemma}\label{hsectorlemma}
Suppose $\alpha \geq 2$ satisfies $\frac{3\alpha}{2}, 2\alpha \notin\Z$. Then, for all $(\theta^1, \theta^2) \in \Delta$, 
\begin{align}\label{hsector1}
 h(\theta^1, \theta^2) 
= &\; 
\frac{1}{\hat{c}} \sin^{\alpha -1}(\theta^1) \im\Big[\sigma(\theta^2) (-e^{i\theta^2})^{\alpha -1} e^{-\frac{i\pi \alpha}{2}}P_1(w_1, w_2)\Big],
	\\\label{hsector2}
 h(\theta^1, \theta^2) = &\; \frac{1}{\hat{c}} \sin^{\alpha -1}(\theta^1)
\im\Big[
\sigma(\theta^2) (-e^{i\theta^2})^{\alpha -1} 
\big(Q_1(w_1, w_2) + w_1^{2\alpha - 1} Q_2(w_1, w_2)
\big)
\Big],
	\\\nonumber
 h(\theta^1, \theta^2) 
=&\;  \frac{1}{\hat{c}}\sin^{\alpha-1}(\theta^1) \im\Big[\sigma(\theta_2) (-e^{i\theta^2})^{\alpha-1} 
\big(R_1(w_1, w_2) + w_2^{\frac{\alpha}{2}} R_{2}(w_1, w_2) 
	\\ \label{hsector3}
& + w_1^{2\alpha - 1} Q_2(w_1, w_2)\big)\Big],
	\\ \nonumber
 h(\theta^1, \theta^2) 
=&\;  \frac{1}{\hat{c}}\sin^{\alpha-1}(\theta^1) \im\Big[\sigma(\theta_2) (-e^{i\theta^2})^{\alpha-1} 
(e^{-\frac{i\pi \alpha}{2}}T_1(w_1, w_2) 
	\\\label{hsector4}
& + w_1^{2\alpha -1} Q_2(w_1, w_2))\Big],
\end{align}
where $w_1, w_2$ are given by (\ref{w1w2def}).
\end{lemma}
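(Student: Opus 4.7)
The plan is to obtain each of the four identities by inserting, into the base representation (\ref{hF}) for $h$, the appropriate factorization of $F(w_1,w_2)$ supplied by Propositions \ref{FPprop}--\ref{FSQprop}, then simplifying the resulting constants using the specialization $a=b=\alpha-1$, $c=d=-\alpha/2$. With this choice, $a+b+1 = 2\alpha-1$, $c+d+1 = 1-\alpha$, and $a+c+1 = \alpha/2$, all of which appear explicitly in the claimed formulas. Moreover, since the remark following (\ref{w1w2def}) records that $\im w_2 < 0$ throughout $\Delta$, the formula (\ref{rhodef}) gives $\rho_c(w_2) = \rho_{-\alpha/2}(w_2) = e^{-i\pi\alpha/2}$, which accounts for the factors of $e^{-i\pi\alpha/2}$ appearing in (\ref{hsector1}) and (\ref{hsector4}).

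For (\ref{hsector2}) and (\ref{hsector3}), no simplification beyond bookkeeping is required: I would substitute (\ref{FQidentity}) into (\ref{hF}) to obtain (\ref{hsector2}), and substitute (\ref{FRQidentity}) into (\ref{hF}) to obtain (\ref{hsector3}). For (\ref{hsector1}), I substitute (\ref{FPidentity}) with the above simplification of $\rho_c(w_2)$; this yields the $P_1$ term displayed in (\ref{hsector1}) together with an additional $P_2$ contribution of the form
\begin{equation*}
\frac{\sin^{\alpha-1}\theta^1}{\hat c}\,\im\Bigl[\sigma(\theta^2)\,(-e^{i\theta^2})^{\alpha-1}\,e^{-i\pi\alpha/2}\,(w_2-1)^{1-\alpha}\,P_2(w_1,w_2)\Bigr].
\end{equation*}
The quantity inside $\im[\,\cdot\,]$ is precisely $X(\theta^1,\theta^2)$ as defined in (\ref{Xdef}), so Lemma \ref{imXlemma} kills this term and (\ref{hsector1}) follows. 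The derivation of (\ref{hsector4}) proceeds identically: substituting (\ref{FSQidentity}) into (\ref{hF}) produces the $T_1$ and $Q_2$ contributions shown in (\ref{hsector4}), plus the same $X$ contribution as above, which again vanishes by Lemma \ref{imXlemma}.

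The only mildly delicate point is verifying that the hypotheses of Propositions \ref{FPprop}--\ref{FSQprop} hold in the present situation: one needs $a,d\notin\Z$ (i.e.\ $\alpha\notin\Z$), $a+b = 2\alpha-2\notin\Z$, $c+d = -\alpha\notin\Z$, and $a+b+c = 3\alpha/2-2\notin\Z$. These are ensured by the standing assumptions $\alpha\geq 2$ with $\tfrac{3\alpha}{2},\,2\alpha\notin\Z$ (which in particular forces $\alpha\notin\Z$). No further obstacle arises, since the identities of Propositions \ref{FPprop}--\ref{FSQprop} and Lemma \ref{imXlemma} are already available and the remaining argument is pure substitution.
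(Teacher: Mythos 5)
Your proposal is correct and follows exactly the route the paper intends: substitute the decompositions of Propositions \ref{FPprop}--\ref{FSQprop} into (\ref{hF}) with $a=b=\alpha-1$, $c=d=-\tfrac{\alpha}{2}$ (so $\rho_c(w_2)=e^{-i\pi\alpha/2}$ since $\im w_2<0$, $a+b+1=2\alpha-1$, $a+c+1=\tfrac{\alpha}{2}$, $c+d+1=1-\alpha$), and observe that the $P_2$ contributions in the first and fourth cases are precisely $\im X$, which vanishes by Lemma \ref{imXlemma}; your verification of the non-integrality hypotheses is also the right bookkeeping. The only cosmetic point you might add is that at $\theta^2=\pi/2$ (where $w_1=2$ leaves $\mathcal{D}_0$) the identities extend by the same continuity/analytic-continuation remark the paper already makes for (\ref{hF}).
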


In order to prove Proposition \ref{greenprop}, we only need leading and subleading estimates on $F$, so we shall be content with this level of precision. The required bounds on the functions $P_1, Q_j, R_j, T_1, F$ are then collected in the next lemma.

\begin{lemma}\label{PQRSFlemma}
Suppose $a,b \geq 0$ and $c,d \leq 0$ satisfy $a,d,a+b, c+d,a+b+c \notin \Z$. Let $\epsilon > 0$. Then the following estimates hold:
\begin{enumerate}[$(a)$]\setlength{\itemsep}{0cm}

\item $|P_1(w_1, w_2)| \leq C$ and $|P_1(w_1, w_2) - P_1(w_1, 1)| \leq C|w_2 - 1|$ uniformly for all $(w_1, w_2) \in \C^2$ such that $\dist(w_1, \{0, 1, \infty\}) > \epsilon$ and $|w_2 - 1|< 1-\epsilon$.

\item $|Q_1(w_1, w_2)| \leq C|w_2|^c$ uniformly for all $(w_1, w_2) \in \C^2$ such that $|w_1| < 1-\epsilon$ and $\dist(w_2, \{0, 1\}) > \epsilon$.

\item $|Q_2(w_1, w_2)| \leq C|w_1|^c$ uniformly for all $(w_1, w_2) \in \C^2 \setminus \{(0,0)\}$ such that $|w_1|<1-\epsilon$ and $\dist(\frac{w_2}{w_1}, \{0,1\}) > \epsilon$.

\item $|R_1(w_1, w_2)| \leq C$ and $|R_1(w_1, w_2) - R_1(0, 0)| \leq C(|w_1| + |w_2|)$ uniformly for all $(w_1, w_2) \in \C^2$ such that $|w_1| < 1-\epsilon$ and $|w_2| < 1-\epsilon$.

\item $|R_2(w_1, w_2)| \leq C|w_2|^b$  uniformly for all $(w_1, w_2) \in \C^2 \setminus \{(0,0)\}$ such that $\dist(\frac{w_2}{w_1}, \{0,1\}) > \epsilon$ and $|w_2| < 1-\epsilon$.

\item $|T_1(w_1, w_2)| \leq C$  uniformly for all $(w_1, w_2) \in \C^2$ such that 
$|w_1|<1-\epsilon$ and $\dist(w_2, \{0,\infty\}) > \epsilon$.

\item $|F(w_1, w_2)| \leq C|w_2|^c$ uniformly for all $(w_1, w_2) \in \C^2$ such that 
$\dist(w_1, \{0,1, \infty\}) > \epsilon$ and $|w_2| > 1+ \epsilon$.
\end{enumerate}
\end{lemma}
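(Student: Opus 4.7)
The plan is to treat all seven bounds by a single strategy: in each case, rewrite the function as an integral over a Pochhammer contour $\Gamma=\Gamma(\epsilon)$ that depends only on $\epsilon$ and on the combinatorial inside/outside specification appearing in the definition of that function. The $\epsilon$-conditions in each item are precisely what is needed to choose such a $\Gamma$ of length $O(1)$ and at distance $\geq \epsilon/3$ from every relevant singularity. For instance, in (a), the separation $\dist(w_1,\{0,1,\infty\})>\epsilon$ together with $|w_2-1|<1-\epsilon$ permits a Pochhammer contour enclosing $\{0\}$ and $\{1,w_2\}$ (with $1$ and $w_2$ in the same inner component) and staying uniformly away from $w_1$; for (d) one picks a contour enclosing $\{0,w_1,w_2\}$ in one component and $\{1\}$ in the other; for (f) one component encloses $\{0,w_1\}$ and the other encloses $\{1,w_2\}$. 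Once such a $\Gamma$ is fixed, every estimate reduces to a pointwise bound on the integrand.

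For the plain boundedness bounds (a), (d), (f), each factor ($v^a$, $(v-w_1)^b$, $(w_2-v)^c$, $(1-v)^d$, or its analog) is uniformly bounded on $\Gamma$ under the stated hypotheses, and the first claim in each item follows. The Lipschitz estimate in (a) follows from the expansion $(w_2-v)^c-(1-v)^c=(1-v)^c[(1-(w_2-1)/(1-v))^c-1]=O(|w_2-1|\cdot|1-v|^{c-1})$ uniformly on $\Gamma$, which uses $|1-v|\geq \epsilon/3$. Equivalently, one differentiates under the integral and observes that the extra factor $(w_2-v)^{c-1}$ remains bounded on $\Gamma$. The Lipschitz estimate in (d) is obtained analogously by expanding both $(v-w_1)^b - v^b$ and $(v-w_2)^c - v^c$ linearly, or by differentiating in each of $w_1,w_2$ and invoking the mean value theorem along a segment in $\{|w_1|,|w_2|<1-\epsilon\}$ joining $(0,0)$ to $(w_1,w_2)$.

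The weighted estimates (b), (c), (e), (g) are obtained by isolating the advertised power of $w_1$ or $w_2$ via factoring. In (b) and (g) the relevant factor is $(v-w_2)^c$, and one splits into two regimes: for $|w_2|$ bounded above, $|w_2|^c$ is bounded away from zero (since $c\leq 0$), so the plain integrand bound suffices; for $|w_2|$ large one writes $(v-w_2)^c=(-w_2)^c(1-v/w_2)^c\rho_c(w_2)$ with $(1-v/w_2)^c$ uniformly bounded. For (c) and (e) one uses the defining formulas (\ref{Q2def}) and (\ref{R2def}) in the rescaled variables $s=v/w_1$ and $s=v/w_2$, in which $\Gamma$ is a Pochhammer contour around $\{0,1\}$. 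The critical factor becomes $(sw_1-w_2)^c=|w_1|^c|s-w_2/w_1|^c\cdot(\text{phase})$ (respectively $(sw_2-w_1)^b=|w_2|^b|s-w_1/w_2|^b\cdot(\text{phase})$), and the hypothesis $\dist(w_2/w_1,\{0,1\})>\epsilon$ (respectively its analog) keeps the $|s-\cdot|^c$ (resp. $|s-\cdot|^b$) factor bounded on a fixed $s$-contour, while $(1-sw_1)^d$ is uniformly bounded since $|w_1|<1-\epsilon$.

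The main obstacle is to justify that, as $(w_1,w_2)$ ranges throughout the full region in (c) (resp. (e)), the $s$-contour after rescaling can in fact be chosen as a single contour depending only on $\epsilon$; recall that $Q_2$ and $R_2$ were defined by explicit integrals only on thin strips and then extended by analytic continuation. This is a topological deformation check: one verifies that the four points $\{0,1,w_2/w_1,1/w_1\}$ (resp. their analogs for $R_2$) remain in the prescribed components throughout the region, so that the Pochhammer contour can be deformed continuously as the parameters vary and the rescaled formula remains valid. Once this is in place the pointwise bounds of the previous paragraph produce (c) and (e), and together with the arguments in paragraphs two and three they yield all seven items of the lemma.
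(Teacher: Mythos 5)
Your overall strategy---fix a Pochhammer contour $\Gamma(\epsilon)$ adapted to the inside/outside specification of each function, bound the integrand pointwise on $\Gamma$, factor out the advertised power of $w_1$ or $w_2$, and justify the rescaled formulas for $Q_2$ and $R_2$ by a continuous deformation argument---is exactly what the paper intends: its own proof of this lemma is the single sentence that the estimates follow directly from the definitions. Your treatment of (c), (d), (g), and your identification of the continuation/deformation step behind (c) and (e) as the one point needing explicit topological bookkeeping, are correct and are the natural elaboration of that one-line proof.

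There is, however, one claim that is stated too broadly: that the $\epsilon$-conditions in \emph{each} item allow $\Gamma$ to be kept at distance $\geq \epsilon/3$ from every relevant singularity. In items (a), (b), (f) (and, via $w_1\to 1$, in (e)) the stated hypotheses do not prevent the excluded branch point from approaching an enclosed one. For instance, in (a) take $w_1=\tfrac12+\tfrac{3}{10}i$ and $w_2=w_1+i\delta$ with $\delta$ tiny: then $\dist(w_1,\{0,1,\infty\})>\epsilon$ and $|w_2-1|<1-\epsilon$ both hold, yet any contour with $w_2$ inside and $w_1$ outside must pass between them, so it cannot stay uniformly away from $w_1$; the same collision occurs between $w_2$ and $w_1$ in (b) and (f), and between $w_1/w_2$ (exponent $b$) and the excluded point $1/w_2$ (exponent $d$) in (e) as $w_1\to 1$. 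Near such a pinch the local contribution scales like $|w_1-w_2|^{b+c+1}$ (resp. $|w_1-1|^{b+d+1}$), so under the bare hypotheses $a,b\geq 0$, $c,d\leq 0$ the pointwise contour bound alone cannot deliver a uniform constant, and for exponents with $b+c+1<0$ the claimed bound is in general not even available. To make your argument airtight you should either restrict these items to regions where the colliding points stay $\epsilon$-separated, or supplement the contour bound by a local estimate of the pinched portion using the sign of $b+c+1$ (resp. $b+d+1$). Note that this imprecision is inherited from the lemma's literal formulation rather than introduced by you, and it is invisible in the application: there $b+c+1=b+d+1=\tfrac{\alpha}{2}>0$, and the sectors of Lemma \ref{w1w2lemma} in fact keep the relevant points apart (e.g.\ in $S_1$ one has $|w_1-1|=1$ while $|w_2-1|<1-\epsilon$).
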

\begin{proof}
The estimates follow directly from the definitions of the functions $P_1, Q_j, R_j, T_1$, and $F$. 
\end{proof}

\begin{remark}\upshape
   If $(w_1,w_2)$ lies on a branch cut, the bounds in Lemma \ref{PQRSFlemma} should be interpreted as saying that both the left and right boundary values obey the bounds. 
\end{remark}

We are now in a position to prove Proposition \ref{greenprop}. Indeed, since $h$ clearly is smooth in the interior of $\Delta$ and the parameter $\delta > 0$ which defines the sectors $S_j$ is arbitrary, Proposition \ref{greenprop} follows from the following result.

\begin{lemma}\label{hboundarylemma2}
Let $\alpha \geq 2$. Then the function $h(\theta^1, \theta^2)$ defined in (\ref{hdef}) satisfies the following estimates:
\begin{align}\label{hboundsa}
& |h(\theta^1, \theta^2)| \leq C\sin^{\alpha -1}\theta^1, \qquad  (\theta^1, \theta^2) \in \cup_{j=1}^5 S_j,
	\\ \label{hboundsb}
& \frac{|h(\theta^1, \theta^2) - h_f(\theta^2)|}{\sin^{\alpha -1}\theta^1} 
\leq C\frac{|\theta^2 - \theta^1|}{\sin\theta^1}, \qquad  (\theta^1, \theta^2) \in S_1,
	\\\label{hboundsc}
& \frac{|h(\theta^1, \theta^2) - \sin^{\alpha -1}{\theta^1}|}{\sin^{\alpha -1}\theta^1} \leq C\frac{\sin\theta^2}{\sin\theta^1},  \qquad  (\theta^1, \theta^2) \in S_3,
\end{align}
where $h_f(\theta)$ is defined in (\ref{hfdef}).
\end{lemma}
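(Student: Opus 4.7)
The plan is to treat the five sectors $S_1,\dots,S_5$ one at a time, in each case combining the representation for $h$ from Lemma \ref{hsectorlemma} (respectively (\ref{hF}) in $S_5$) with the uniform bounds of Lemma \ref{PQRSFlemma}, exploiting the asymptotic regime identified in Lemma \ref{w1w2lemma}. The basic identities $|w_1|=2\sin\theta^2$ and $|w_2|=\sin\theta^2/\sin\theta^1$ following from (\ref{w1w2def}) will be used throughout to translate estimates on $w_1,w_2$ into estimates on $\theta^1,\theta^2$. For (\ref{hboundsa}), it suffices to check that in each sector the bracket multiplying $\frac{1}{\hat c}\sin^{\alpha-1}\theta^1$ is uniformly bounded: in $S_1$ this is $|P_1|\leq C$; in $S_2$ and $S_4$ it reduces to $|Q_1|,|T_1|\leq C$ together with $|w_1^{2\alpha-1}Q_2|\leq C|w_1|^{3\alpha/2-1}$; in $S_3$ to $|R_1|\leq C$, $|w_2^{\alpha/2}R_2|\leq C|w_2|^{3\alpha/2-1}$, and $|w_1^{2\alpha-1}Q_2|\leq C|w_1|^{3\alpha/2-1}$; and in $S_5$ to $|F|\leq C|w_2|^{-\alpha/2}\leq C$. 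The hypothesis $\alpha\geq 2$ enters precisely to ensure $3\alpha/2-1\geq 2>0$.

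For (\ref{hboundsb}), observe that $h_f(\theta^2)=h(\theta^2,\theta^2)$, so since $w_1$ depends only on $\theta^2$ and $w_2(\theta^2,\theta^2)=1$, the representation (\ref{hsector1}) gives
\[
h_f(\theta^2) = \tfrac{\sin^{\alpha-1}\theta^2}{\hat c}\,\im\bigl[\sigma(\theta^2)(-e^{i\theta^2})^{\alpha-1}e^{-i\pi\alpha/2}P_1(w_1,1)\bigr].
\]
At fixed $\theta^2$, the constant $\sigma(\theta^2)$ is unambiguous and one may split
\begin{align*}
h(\theta^1,\theta^2)-h_f(\theta^2) = &\;\tfrac{\sin^{\alpha-1}\theta^1}{\hat c}\im\bigl[\sigma(\theta^2)(-e^{i\theta^2})^{\alpha-1}e^{-i\pi\alpha/2}\bigl(P_1(w_1,w_2)-P_1(w_1,1)\bigr)\bigr]
\\
& + \tfrac{\sin^{\alpha-1}\theta^1-\sin^{\alpha-1}\theta^2}{\hat c}\im\bigl[\sigma(\theta^2)(-e^{i\theta^2})^{\alpha-1}e^{-i\pi\alpha/2}P_1(w_1,1)\bigr].
\end{align*}
The first term is $O(\sin^{\alpha-1}\theta^1|w_2-1|)=O(\sin^{\alpha-1}\theta^1\sin(\theta^2-\theta^1)/\sin\theta^1)$ by the Lipschitz bound of Lemma \ref{PQRSFlemma}(a), and the second is $O(|\theta^2-\theta^1|\sin^{\alpha-2}\theta^1)$ by the mean value theorem applied to $s\mapsto\sin^{\alpha-1}s$ on the compact set $[\delta,\pi-\delta]$. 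Dividing through by $\sin^{\alpha-1}\theta^1$ yields (\ref{hboundsb}).

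For (\ref{hboundsc}) in $S_3$, the terms $w_2^{\alpha/2}R_2$ and $w_1^{2\alpha-1}Q_2$ in (\ref{hsector3}) are each $O(|w_j|^{3\alpha/2-1})\leq C|w_j|$ for $\alpha\geq 2$, and the Lipschitz bound $|R_1(w_1,w_2)-R_1(0,0)|\leq C(|w_1|+|w_2|)$ gives the same. Using $|w_1|+|w_2|\leq C\sin\theta^2/\sin\theta^1$, the residual task is to show
\[
\tfrac{1}{\hat c}\im\bigl[\sigma(\theta^2)(-e^{i\theta^2})^{\alpha-1}R_1(0,0)\bigr] = 1+O(\sin\theta^2), \qquad (\theta^1,\theta^2)\in S_3.
\]
From the definition of $R_1$, the formula for $H$, the factorization $-1+e^{3\pi i\alpha}-e^{2\pi i\alpha}+e^{-\pi i\alpha}=(e^{3\pi i\alpha}-1)(1-e^{-\pi i\alpha})$, and $(e^{2\pi i\alpha}-1)(1-e^{-\pi i\alpha})=-4e^{i\pi\alpha/2}\sin(\pi\alpha)\sin(\pi\alpha/2)$, one computes
\[
R_1(0,0) = -4e^{i\pi\alpha/2}\sin(\pi\alpha)\sin(\pi\alpha/2)\,\tfrac{\Gamma(3\alpha/2-1)\Gamma(1-\alpha/2)}{\Gamma(\alpha)}.
\]
Since $\sigma(\theta^2)=e^{-i\pi\alpha}$ for $\theta^2\geq\pi/2$ (which holds in $S_3$) and $(-e^{i\theta^2})^{\alpha-1}=e^{i(\theta^2-\pi)(\alpha-1)}=1+O(|\theta^2-\pi|)=1+O(\sin\theta^2)$, direct comparison with (\ref{hatcdef}) gives $\frac{1}{\hat c}\im[e^{-i\pi\alpha}R_1(0,0)]=1$ and hence the claim.

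The main obstacle is this last algebraic identification in $S_3$: verifying that the intricate exponential prefactor in $R_1(0,0)$ collapses, after taking imaginary part and multiplying by $e^{-i\pi\alpha}$, to exactly $\hat c$ as written in (\ref{hatcdef}). Throughout the analysis we implicitly assume $\alpha,\tfrac{3\alpha}{2},2\alpha\notin\Z$ so that the representations of Lemma \ref{hsectorlemma} apply directly; the remaining values $\alpha\geq 2$ follow by a continuity argument, using that $h$ and both sides of (\ref{hboundsa})--(\ref{hboundsc}) extend continuously in $\alpha$.
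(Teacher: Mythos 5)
Your overall route coincides with the paper's: sector-by-sector use of Lemma \ref{hsectorlemma} (and of (\ref{hF}) in $S_5$) combined with Lemmas \ref{w1w2lemma} and \ref{PQRSFlemma}, and an explicit evaluation of $R_1(0,0)$ to identify the limiting constant in $S_3$ --- that last computation, including the collapse to $\hat c$, is correct. But there are two genuine gaps. First, in $S_1$ you simply ``observe that $h_f(\theta^2)=h(\theta^2,\theta^2)$''. This is not an observation: the diagonal lies on $\partial\Delta$, so $h(\theta^2,\theta^2)$ is not defined a priori, and the whole content of (\ref{hboundsb}) is that the limit of $h$ along $S_1$ is the \emph{explicit} function $h_f$ of (\ref{hfdef}). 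Your argument only proves the estimate with the unidentified quantity $\frac{\sin^{\alpha-1}\theta^2}{\hat c}\im\bigl[\sigma(\theta^2)(-e^{i\theta^2})^{\alpha-1}e^{-i\pi\alpha/2}P_1(w_1,1)\bigr]$ in place of $h_f(\theta^2)$. To finish you must evaluate $P_1(w_1,1)$ in closed form (the paper obtains $P_1(w_1,1)=2\pi i\,(e^{i\pi\alpha}-1)(-w_1)^{\alpha-1}\,{}_2F_1(1-\alpha,\alpha;1;1/w_1)$ from the hypergeometric representation (\ref{tildeG2F1})) and check, using $1/w_1=\tfrac12(1-i\cot\theta^2)$, that the result is exactly (\ref{hfdef}). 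You carried out the analogous identification for $R_1(0,0)$ in $S_3$ but skipped the harder one in $S_1$, which is a concrete missing step.

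Second, your treatment of the excluded parameter values ($\alpha$, $\tfrac{3\alpha}{2}$ or $2\alpha$ an integer) by ``continuity in $\alpha$'' does not work as stated. The inequalities you prove for generic $\alpha$ carry constants coming from Lemma \ref{PQRSFlemma}, i.e.\ from the prefactors of $Q_1,Q_2,R_1,R_2,T_1$, and these prefactors (for instance $(e^{2\pi ia}-1)/(e^{2\pi i(a+b)}-1)$ with $a+b=2\alpha-2$, or the analogous factor with $a+b+c=\tfrac{3\alpha}{2}-2$) blow up as $\alpha$ approaches an excluded value; only the full sum $F$ stays finite, not the individual terms you estimate. Hence $C=C(\alpha)$ is not locally bounded near those values and one cannot pass to the limit in the inequalities. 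The paper handles these cases by a different argument: since $a+b>0$ and $a+b+c>0$ for $\alpha\geq 2$, the integrand in (\ref{Fdefalpha}) is nonsingular at $v=0$, so part of the contour can be collapsed and the relevant asymptotics obtained by direct Taylor expansion; and for integer $\alpha$ one takes the limit $\alpha\to n$ of the identities in Lemma \ref{hsectorlemma}, using crucially that the singular $P_2$-contribution vanishes by Lemma \ref{imXlemma}. You need either such a substitute argument or a proof that your constants can be chosen uniformly in $\alpha$ near the excluded values.
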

\begin{proof}
Let us first assume that $\alpha \geq 2$ satisfies $\frac{3\alpha}{2}, 2\alpha \notin\Z$.
Equation (\ref{hsector1}), Lemma \ref{w1w2lemma} (1), and Lemma \ref{PQRSFlemma} $(a)$ show that (\ref{hboundsa}) holds in $S_1$. 
Also, by Lemma \ref{PQRSFlemma} $(a)$, the following estimate is valid in $S_1$:
\begin{align}\label{hhsinC}
\frac{|h(\theta^1, \theta^2) - h(\theta^2, \theta^2)|}{\sin^{\alpha -1}\theta^1} 
\leq C|w_2 - 1| = C\frac{\sin(\theta^2 - \theta^1)}{\sin\theta^1}
\leq C\frac{|\theta^2 - \theta^1|}{\sin\theta^1},
\end{align}
where
$$h(\theta^2, \theta^2) 
= \frac{1}{\hat{c}} \sin^{\alpha -1}(\theta^2) \im\Big[\sigma(\theta^2) (-e^{i\theta^2})^{\alpha -1} e^{-\frac{i\pi \alpha}{2}}P_1(w_1, 1)\Big].$$
For $a,b,c,d$ given by (\ref{abcddef}), we have (cf. (\ref{tildeG2F1}))
\begin{align*}
P_1(w_1, 1) = &\; \frac{e^{2\pi i d} - 1}{e^{2\pi i(c+d)} -1} \int_A^{(0+,1+,0-,1-)} v^{a} (v-w_1)^b (1-v)^{c+d} dv
	\\
= &\; 2 i \pi  (-1+e^{i \pi  \alpha }) (-w_1)^{\alpha -1} \,
   _2F_1\left(1-\alpha ,\alpha ;1;\frac{1}{w_1}\right).
\end{align*}
It follows that $h(\theta^1, \theta^1) = h_f(\theta^1)$ where $h_f$ is given by (\ref{hfdef}). Equation (\ref{hboundsb}) then follows from (\ref{hhsinC}).

Using the fact that $\dist(\frac{w_2}{w_1}, \{0,1\}) > \epsilon$ for all $(\theta^1, \theta^2) \in \Delta$, Lemma \ref{w1w2lemma} (2) and Lemma \ref{PQRSFlemma} $(b)$ and $(c)$ imply
$$|Q_1 + w_1^{2\alpha - 1} Q_2| \leq C|w_2|^{-\frac{\alpha}{2}} + C |w_1|^{\frac{3\alpha}{2} - 1}
\leq C, \qquad (\theta^1, \theta^2) \in S_2.$$
Hence equation (\ref{hsector2}) shows that (\ref{hboundsa}) holds in $S_2$. 

Similarly, Lemma \ref{w1w2lemma} (3), and Lemma \ref{PQRSFlemma} $(c)$, $(d)$, and $(e)$ show that
$$|R_1 + w_2^{\frac{\alpha}{2}} R_2 + w_1^{2\alpha - 1} Q_2| 
\leq C + C |w_2|^{\frac{3\alpha}{2} - 1} + C |w_1|^{\frac{3\alpha}{2} - 1}
\leq C, \qquad (\theta^1, \theta^2) \in S_3.$$
Hence equation (\ref{hsector3}) implies that (\ref{hboundsa}) holds in $S_3$. 
Also, by Lemma \ref{PQRSFlemma} $(d)$, since $\alpha \geq 2$, the following estimate is valid in $S_3$:
\begin{align*}
\frac{|h(\theta^1, \theta^2) - h(\theta^1, \pi)|}{\sin^{\alpha -1}\theta^1} & \leq C(|w_1| + |w_2|)  
+ C |w_2|^{\frac{3\alpha}{2} - 1} + C |w_1|^{\frac{3\alpha}{2} - 1}
\leq C(|w_1| + |w_2|)
	\\
& \leq C|\pi - \theta^2| + C\Big|\frac{\sin\theta^2}{\sin\theta^1}\Big|
\leq C\Big|\frac{\sin\theta^2}{\sin\theta^1}\Big|,
\end{align*}
where
$$h(\theta^1, \pi) = \frac{\sin^{\alpha-1}\theta^1}{\hat{c}} \im\big[e^{-i\pi \alpha} R_1(0, 0)\big].$$
For $a,b,c,d$ given by (\ref{abcddef}), we have
\begin{align*}\nonumber
R_1(0,0) 
= & \frac{e^{2\pi i a} - 1}{e^{2\pi i (a+b + c)} -1}  
 \int_A^{(0+,1+,0-,1-)} v^{a + b + c} (1-v)^{d} dv
 	\\ \nonumber
= &\; -\frac{(e^{2 i \pi  a}-1)(e^{2 i \pi  d}-1) \Gamma (d+1) \Gamma(a+b+c+1)}{\Gamma (a+b+c+d+2)}
	\\ 
= &\; -\frac{(e^{2 i \pi \alpha }-1) (e^{- i \pi  \alpha}-1) 
\Gamma(1- \frac{\alpha}{2}) \Gamma(\frac{3\alpha}{2} -1)}{\Gamma(\alpha)}.
\end{align*} 
Taking the definition (\ref{hatcdef}) of $\hat{c}$ into account, it follows that $h(\theta^1, \pi)= \sin^{\alpha-1}\theta^1$. This proves (\ref{hboundsc}).

Lemma \ref{w1w2lemma} (4), and Lemma \ref{PQRSFlemma} $(c)$ and $(f)$ show that
$$|e^{-\frac{i\pi \alpha}{2}}T_1 + w_1^{2\alpha -1} Q_2| 
\leq C + C|w_1|^{-\frac{\alpha}{2}} \leq C, \qquad (\theta^1, \theta^2) \in S_4.
$$
Hence equation (\ref{hsector4}) implies that (\ref{hboundsa}) holds in $S_4$. 

Lemma \ref{w1w2lemma} (5), and Lemma \ref{PQRSFlemma} $(g)$ show that
$$|F| \leq C|w_2|^{\frac{3\alpha}{2}-1} \leq C, \qquad (\theta^1, \theta^2) \in S_5.$$
Hence equation (\ref{hF}) shows that (\ref{hboundsa}) holds in $S_5$. 
This completes the proof of the lemma in the case when $\frac{3\alpha}{2}$ and $2\alpha$ are not integers.

Assume now that $\frac{3\alpha}{2}$ and/or $2\alpha$ is an integer. Then some of the functions in Lemma \ref{PQRSFlemma} degenerate, so a slightly different argument is required. We do not give complete details, but outline the relevant steps.

Suppose first that $\alpha \notin\Z$ but $\frac{3\alpha}{2}$ or $2\alpha$ is an integer. Then the limit $w_2 \to 1$ can still be treated as before, because $c+d = -\alpha$ is not an integer. However, the limits involving $w_1 \to 0$ or $w_2 \to 0$ cannot be treated in the same way in general, because $a+b = 2\alpha - 2$ and/or $a+b+c = \frac{3\alpha}{2} - 2$ is an integer. However, since $\alpha \geq 2$, we have $a+b>0$ and $a+b+c >0$. Hence the integral (\ref{Fdefalpha}) defining $F$ is nonsingular at $v = 0$ (also in the limit as $w_1$ and $w_2$ approach zero). Hence, we can derive the leading behavior of $F$ in these regimes using the following alternative approach: First, we collapse the two loops of the Pochhammer contour enclosing the origin down to the interval $[0,A]$. Then we find the leading-order asymptotics by Taylor expanding the integrand as $w_1$ and/or $w_2$ approaches zero.

Assume finally that $\alpha = n \geq 2$ is an integer. This case is considered in \cite{LV2018_A}, where an expression for $h(\theta^1, \theta^2; n)$ is derived by taking the limit of the defining equation (\ref{Fdefalpha}) for $F$ as $\alpha \to n$. In order to prove (\ref{hboundsa})-(\ref{hboundsc}) in this case, we compute the limits as $\alpha \to n$ of each of the four equations in Lemma \ref{hsectorlemma}. This gives four analogous equations valid for $\alpha = n$. As above, it follows from these equations that $h$ satisfies (\ref{hboundsa})-(\ref{hboundsc}). The crucial point is that the singular contribution from $P_2$ vanishes as a consequence of (\ref{imP2zero}). 
\end{proof}

\section{Proof of Proposition \ref{schrammprop}}\label{schrammsec}

\begin{lemma}\label{Jsmoothlemma}
The function $J(z, \xi)$  defined in (\ref{Jdef}) is a well-defined smooth function of $(z, \xi) \in \mathbb{H} \times (0, \infty)$.
\end{lemma}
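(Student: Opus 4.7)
The plan is to verify three things in sequence: (i) the improper integral defining $J$ converges absolutely for every $(z,\xi)\in\mathbb{H}\times(0,\infty)$; (ii) its value depends only on the homotopy class of the contour (among paths from $\bar z$ to $z$ in $\mathbb{C}\setminus\{0,\xi\}$ that pass to the right of $\xi$); and (iii) the resulting function is smooth in $(z,\xi)$. Items (i) and (ii) are routine once the branches of the four fractional powers are fixed by a suitable choice of branch cuts emanating from $z$, $\bar z$, $0$, $\xi$ (arranged so as to miss the contour). Convergence at the endpoints is automatic: at $u=z$ the exponent $\alpha>1$ makes the integrand vanish, and at $u=\bar z$ the exponent $\alpha-2>-1$ still gives an absolutely integrable singularity, while $u=0$ and $u=\xi$ are avoided by the contour. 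Cauchy's theorem then yields independence of the contour within the stated homotopy class.

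The main step is (iii), and the cleanest route is to reduce to an integral over a \emph{fixed} contour via the affine substitution $u=\bar z+(z-\bar z)w$, which sends $\bar z\mapsto 0$ and $z\mapsto 1$. A short calculation gives $w_\xi:=(\xi-\bar z)/(z-\bar z)=\tfrac12-i(\xi-\Re z)/(2\,\Im z)$, so the image $w_\xi$ of $\xi$ always lies on the vertical line $\Re w=\tfrac12$; similarly, the image $w_0$ of $u=0$ lies on the same line. Hence, for $(z,\xi)$ varying in a sufficiently small neighborhood of a base point, both $w_0$ and $w_\xi$ stay in a small region of this line, and one can fix once and for all a contour $\gamma$ from $0$ to $1$ in the $w$-plane that avoids $w_0$ and passes on the prescribed side of $w_\xi$. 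After the substitution, the integral takes the form
\[
J(z,\xi)=c(z,\bar z)\,(z-\bar z)^{2\alpha-1}\int_\gamma (w-1)^{\alpha}\,w^{\alpha-2}\,\Phi(w;z,\xi)\,dw,
\]
where $c(z,\bar z)$ is a locally constant phase produced by the branch bookkeeping and
\[
\Phi(w;z,\xi)=\bigl(\bar z+(z-\bar z)w\bigr)^{-\alpha/2}\bigl(\bar z-\xi+(z-\bar z)w\bigr)^{-\alpha/2}
\]
is jointly smooth in all three variables and, together with every $(z,\xi)$-derivative, uniformly bounded on $\gamma$.

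With this representation, smoothness follows by differentiation under the integral sign: the integrable but non-smooth factor $(w-1)^\alpha w^{\alpha-2}$ no longer depends on $(z,\xi)$, and all $(z,\xi)$-derivatives of the remaining factors are continuous and uniformly bounded on $\gamma$. Iterating shows that $J\in C^\infty(\mathbb{H}\times(0,\infty))$.

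The main obstacle is the branch bookkeeping: one has to verify that the choice of branch for $(u-z)^\alpha$ and $(u-\bar z)^{\alpha-2}$ along the original contour corresponds, after the substitution, to a consistent single-valued choice of $(w-1)^\alpha$ and $w^{\alpha-2}$ along $\gamma$, and that the resulting phase $c(z,\bar z)$ is a smooth function of $(z,\bar z)$ in a neighborhood of the base point. Once that is done the smoothness statement follows essentially for free from the factorised representation above.
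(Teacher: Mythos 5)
Your proposal is correct, but it takes a genuinely different route from the paper. The paper splits into two cases: for integer $\alpha>1$ it notes that the integral can be evaluated explicitly in terms of powers and logarithms, and for noninteger $\alpha$ it rewrites $J$ as the Pochhammer double-loop integral (\ref{Jdef2}) with prefactor $(1-e^{2\pi i\alpha})^{-2}$, so that the contour stays away from all four branch points and differentiation under the integral sign in $z$, $\bar z$, $\xi$ is immediate. You instead keep the simple path from $\bar z$ to $z$ but transplant it to a locally fixed contour via the affine substitution $u=\bar z+(z-\bar z)w$, which isolates the parameter-independent, integrable singular factor $(w-1)^{\alpha}w^{\alpha-2}$ and leaves a factor $\Phi(w;z,\xi)$ that is smooth and bounded (with all derivatives) on the fixed contour; smoothness then follows from standard differentiation under the integral sign with an integrable dominating function. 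What each approach buys: yours treats all $\alpha>1$ uniformly, with no degeneration at integer $\alpha$ and no case split, at the price of the branch bookkeeping you flag and of having to control endpoint singularities via domination (your observation that $w_0$ and $w_\xi$ lie on the line $\re w=\tfrac12$, so a fixed admissible contour exists locally, is the key geometric point and is correct); the paper's Pochhammer representation makes the smoothness argument essentially trivial and, moreover, is reused later (cf. Remark \ref{differentexamplesremark}), but it requires $\alpha\notin\Z$ and a separate argument for integer $\alpha$. The one step you leave as a verification -- that the continued branches of $(u-z)^{\alpha}$ and $(u-\bar z)^{\alpha-2}$ along the path correspond, after substitution, to single-valued branches of $(w-1)^{\alpha}$ and $w^{\alpha-2}$ times a phase $c(z,\bar z)$ -- is indeed routine here because $z-\bar z=2iy$ has constant argument $\pi/2$, so the phase is in fact constant; spelling this out would close the argument completely.
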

\begin{proof}
Since $\alpha > 1$, the integral defining $J(z, \xi)$ is convergent for each  $z \in \mathbb{H}$ and each $\xi > 0$.
To prove the smoothness of $J$, we first assume that $\alpha > 1$ is an integer. In this case the integral in (\ref{Jdef}) can be computed explicitly in terms of logarithms and powers of $z$, $\bar{z}$, $z - \xi$, and $\bar{z} - \xi$ (see \cite{LV2018_A} for the case $\alpha = 2$). Hence $J(z, \xi)$ is smooth for $(z, \xi) \in \mathbb{H} \times (0, \infty)$.

Assume $\alpha > 1$ is not an integer. Then, fixing a basepoint $A > \xi$, we can rewrite the expression (\ref{Jdef}) for $J(z, \xi)$ as 
 \begin{align}\nonumber
J(z, \xi) = & \; \frac{1}{(1 - e^{2i\pi \alpha})^2} \int_A^{(z+,\bar{z}+,z-,\bar{z}-)} (u-z)^{\alpha}(u- \bar{z})^{\alpha - 2} u^{-\frac{\alpha}{2}}(u - \xi)^{-\frac{\alpha}{2}} du, 
	\\ \label{Jdef2}
& \hspace{8cm} z \in  \mathbb{H}, \  \xi > 0,
\end{align}
where the integration contour is the composition of four loops $\{l_j\}_1^4$ based at $A$ (see Figure   \ref{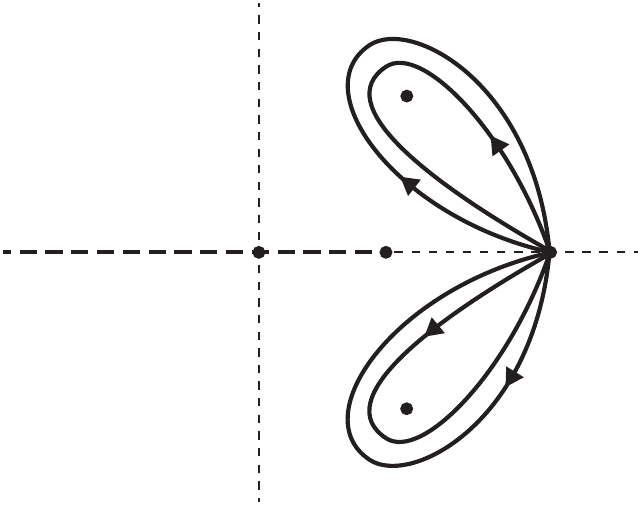}) and the integrand is evaluated using analytic continuation along the contour. More precisely, the loop $l_1$ encircles $z$ once in the counterclockwise direction, $l_2$ encircles $\bar{z}$ once in the counterclockwise direction, $l_3$ encircles $z$ once in the clockwise direction, and $l_4$ encircles $\bar{z}$ once in the clockwise direction. On the first half of $l_1$, the principal branch is used, but as the contour $l_1$ encircles $z$ in the counterclockwise direction, the power $(u-z)^{\alpha}$ in the integrand picks up an additional factor of $e^{2i\pi \alpha}$ with respect to the principal branch; then, as $l_2$ encircles $\bar{z}$ in the counterclockwise direction, the power $(u-\bar{z})^{\alpha - 2}$ in the integrand picks up the factor $e^{2i\pi(\alpha-2)}$ and so on. Collapsing the contour onto a single path from $\bar{z}$ to $z$ and collecting the exponential factors, we see that (\ref{Jdef2}) reduces to (\ref{Jdef}). Since the contour in (\ref{Jdef2}) avoids the branch points, the integral in (\ref{Jdef2}) can be differentiated an unlimited number of times with respect to $z, \bar{z}$, and $\xi$. This completes the proof of the lemma.
\end{proof}

\begin{figure}
\begin{center}
\bigskip
\begin{overpic}[width=.60\textwidth]{Jfigureeight.pdf}
      \put(73.5,54.5){\small $l_1$}
      \put(69,24){\small $l_2$}
      \put(59.5,48){\small $l_3$}
      \put(81.7,17){\small $l_4$}
      \put(86.5,35.5){\small $A$}
      \put(65,63){\small $z$}
      \put(65,14){\small $\bar{z}$}
      \put(37.3,35.5){\small $0$}
      \put(59,35){\small $\xi$}
      \put(103,38.6){\small $\re z$}
      \put(37,82){\small $\im z$}
\end{overpic}
     \begin{figuretext}\label{Jfigureeight.pdf}
       The integration contour in (\ref{Jdef2}) is the composition of the four loops $l_j$, $j = 1, \dots, 4$, based at the point $A > \xi$.
       \end{figuretext}
     \end{center}
\end{figure}

\begin{remark}[A main difference between the two examples]\label{differentexamplesremark}\upshape
The integrals relevant for Example 2 have better convergence properties than those of Example 1. Indeed, the Pochhammer integral expression (\ref{Jdef2}) for the function $J(z,\xi)$ relevant for Example 2 is well-defined for any $\alpha \in \R \setminus \Z$. If $\alpha > 1$ is not an integer, the expression (\ref{Jdef}) for $J$ can be recovered from (\ref{Jdef2}) by collapsing the contour down to a simple path from $\bar{z}$ to $z$. The integral in (\ref{Jdef}) converges because the exponents of the factors $(u-z)^\alpha$ and $(u-\bar{z})^{\alpha -2}$ are both $> -1$ when $\alpha > 1$. 

On the other hand, the Pochhammer contour appearing in the definition (\ref{Idef}) of the function $I(z,\xi^1, \xi^2)$ relevant for Example 1 encircles the points $z$ and $\xi^2$ and the corresponding integrand involves the factors $(u-z)^{\alpha -1}$ and $(\xi^2 - u)^{-\alpha/2}$. Thus, this contour can only be collapsed down to simple path from $z$ to $\xi^2$ if $0 < \alpha < 2$. Since Proposition \ref{greenprop} is stated under the assumption that $\alpha \geq 2$, this means that the analysis of Example 1 requires the use of a Pochhammer contour (in fact, the contour can be collapsed at $z$, but not at $\xi^2$). 
\end{remark}
 
Remark \ref{differentexamplesremark} suggests that it should be easier to prove Proposition \ref{schrammprop} than Proposition \ref{greenprop}. This is indeed the case. 
It is of course still possible to proceed as in Example 1 and use the full machinery of Section \ref{mainsec}-\ref{methodsec} to prove Proposition \ref{schrammprop}. However, in order to complement the discussion of Example 1, we will in what follows present a more elementary proof of Proposition \ref{schrammprop} which relies on direct estimates. The takeaway is that in some simpler cases one has the choice of either using the machinery of Section \ref{mainsec}-\ref{methodsec} or a more naive approach based on direct estimates.

\begin{lemma}\label{Jestimateslemma}
The function $J(z, \xi)$  defined in (\ref{Jdef}) satisfies the following estimates:
\begin{subequations}
\begin{align}\label{Jeverywhere}
& |J(z, \xi)| \leq	C |z - \xi|^{\alpha -1}, && z \in \mathbb{H}, \  \xi > 0,
	\\ \label{Jatplusinfinity}
& |J(z, \xi)| \leq C |x|^{-\frac{\alpha}{2}}|x - \xi|^{-\frac{\alpha}{2}} y^{2\alpha - 1}, && x > \xi, \  y > 0, \  \xi > 0.
	\\\label{reJeverywhere}
& |\re J(z, \xi)| \leq C y |z|^{\alpha - 2}, &&  |z| \geq 2\xi, \  z \in \mathbb{H}, \  \xi > 0,
\end{align}
\end{subequations}
where $z = x + iy$. 
\end{lemma}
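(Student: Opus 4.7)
I would prove the three bounds by direct contour estimation, choosing a suitable contour from $\bar z$ to $z$ in each case and using the homogeneity $J(\lambda z,\lambda\xi) = \lambda^{\alpha-1} J(z,\xi)$ for $\lambda > 0$ (immediate from $u\mapsto\lambda u$ in \eqref{Jdef}) as an auxiliary tool. The hypothesis $\alpha > 1$ guarantees that all angular integrals that arise converge.

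For \eqref{Jatplusinfinity}, I would take the vertical segment $u = \bar z + 2iyt$, $t \in [0,1]$, which stays in $\mathbb{C}\setminus[0,\xi]$ (because $x > \xi$) and so passes to the right of $\xi$. Along it $|u|\ge x$, $|u-\xi|\ge x-\xi$, $|u-z|=2y(1-t)$ and $|u-\bar z|=2yt$, which reduces the estimate to a finite Beta integral $\int_0^1 (1-t)^\alpha t^{\alpha-2}\,dt$.

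For \eqref{Jeverywhere}, I would apply the homogeneity with $\lambda = 1/|z-\xi|$ to reduce the problem to the uniform bound $|J(\tilde z,\tilde\xi)| \le C$ on the two-parameter family $|\tilde z - \tilde\xi|=1$, $\tilde z \in \mathbb{H}$, $\tilde\xi > 0$, parametrized by $\tilde z = \tilde\xi + e^{i\psi}$ with $(\psi,\tilde\xi) \in (0,\pi)\times(0,\infty)$. On any compact subset this follows from Lemma \ref{Jsmoothlemma}. The three non-compact limits are handled as follows: as $\psi \to 0$ or $\psi\to\pi$ the contour from $\bar{\tilde z}$ to $\tilde z$ shrinks, and a vertical-segment Beta-integral estimate yields $|J| = O((\sin\psi)^{2\alpha-1})\to 0$; as $\tilde\xi \to 0$, $J(\tilde z,\tilde\xi)\to J(e^{i\psi},0)$, which is bounded by an elementary estimate on the unit-circle arc from $e^{-i\psi}$ to $e^{i\psi}$; as $\tilde\xi \to \infty$, the arc $u = \tilde\xi + e^{i\phi}$, $\phi\in[-\psi,\psi]$, of radius $1$ centered at $\tilde\xi$ gives $|J| = O(\tilde\xi^{-\alpha/2})\to 0$ (using the lower bound $|u|\gtrsim\tilde\xi$ on this arc).

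For \eqref{reJeverywhere}, I would use the arc $u = |z| e^{i\phi}$, $\phi \in [-\theta,\theta]$ passing through $\phi = 0$ (where $\theta = \arg z$); since $|z|\ge 2\xi$ this arc passes through $|z| > \xi$ and keeps distance $\ge |z|/2$ from $\xi$. Using $e^{i\phi}-e^{\pm i\theta} = \pm 2\sin((\theta\mp\phi)/2)\exp(i((\phi\pm\theta)/2\mp\pi/2))$ together with the factorization $(e^{i\phi}-\eta)^{-\alpha/2} = (e^{i\phi})^{-\alpha/2}(1-\eta e^{-i\phi})^{-\alpha/2}$ (valid on the principal branch, as $\re(1-\eta e^{-i\phi})\ge 1/2$), the many phase factors collapse to a single $e^{i(\theta-\pi)}$ out front, yielding
\[
J(z,\xi) \,=\, i\,2^{2\alpha-2}|z|^{\alpha-1}\, e^{i(\theta-\pi)}\, M(\theta,\eta),\qquad M = \int_{-\theta}^{\theta} A(\phi)\, W(\phi)\, d\phi,
\]
with $\eta = \xi/|z|\in[0,1/2]$, $A(\phi) = \sin^{\alpha}((\theta-\phi)/2)\sin^{\alpha-2}((\theta+\phi)/2)$, and $W(\phi) = (1-\eta e^{-i\phi})^{-\alpha/2}$. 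Since $W(-\phi) = \overline{W(\phi)}$, symmetrizing in $\phi\mapsto-\phi$ gives $\im M = \tfrac12\int [A(\phi)-A(-\phi)]\,\im W(\phi)\,d\phi$, and the trigonometric identity
\[
A(\phi) - A(-\phi) \,=\, -\bigl[\tfrac12(\cos\phi-\cos\theta)\bigr]^{\alpha-2}\sin\theta\sin\phi,
\]
obtained from $\sin^2((\theta-\phi)/2)-\sin^2((\theta+\phi)/2) = -\sin\theta\sin\phi$, exposes an explicit factor $\sin\theta$; together with $|\im W|\le C$ and $\int_0^\theta(\cos\phi-\cos\theta)^{\alpha-2}\sin\phi\,d\phi = (1-\cos\theta)^{\alpha-1}/(\alpha-1)\le C$ (substitution $s=\cos\phi-\cos\theta$), this gives $|\im M|\le C\sin\theta$. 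A parallel decomposition gives $|\re M|\le C$ uniformly. Taking real parts, $|\re J|\le 2^{2\alpha-2}|z|^{\alpha-1}(|\sin\theta\re M| + |\cos\theta\im M|)\le C|z|^{\alpha-1}\sin\theta = Cy|z|^{\alpha-2}$. The hard step is precisely this last one: the naive bound $|\re J|\le|J|\le C|z-\xi|^{\alpha-1}\asymp C|z|^{\alpha-1}$ is off by the factor $y/|z|\le 1$, and that gain is visible only because the specific arc $|u|=|z|$ makes the $\phi\mapsto-\phi$ symmetry manifest via the antisymmetry of $\im W$ together with the above trigonometric identity.
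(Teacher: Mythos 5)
Your estimates for \eqref{Jatplusinfinity} coincide with the paper's (vertical segment, Beta integral), and your proof of \eqref{reJeverywhere} is correct but takes a different route: the paper first derives the exact reflection identity $\re J(z,\xi) = -2iy\int_{\bar z}^{z}(u-z)^{\alpha-2}(u-\bar z)^{\alpha-2}u^{-\frac{\alpha}{2}}(u-\xi)^{-\frac{\alpha}{2}}(u-x)\,du$ from $\overline{\int_{\bar z}^z f(u)\,du}=-\int_{\bar z}^z\overline{f(\bar u)}\,du$, which makes the factor $y$ manifest, and then estimates on the arc $|u|=|z|$; you instead extract the factor $\sin\theta$ on the same arc by explicit phase cancellation and the $\phi\mapsto-\phi$ symmetrization. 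Both exploit the same conjugation symmetry and your computation (I checked the phase bookkeeping, the identity $A(\phi)-A(-\phi)=-[\tfrac12(\cos\phi-\cos\theta)]^{\alpha-2}\sin\theta\sin\phi$, and the antiderivative giving $(1-\cos\theta)^{\alpha-1}/(\alpha-1)$) is sound.

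The genuine gap is in \eqref{Jeverywhere}, at the boundary piece $\psi\to\pi$ of your compactness scheme. For $\psi>\pi/2$ one has $\re\tilde z=\tilde\xi+\cos\psi<\tilde\xi$, so the vertical segment from $\bar{\tilde z}$ to $\tilde z$ does \emph{not} pass to the right of $\tilde\xi$ (it meets the cut $(-\infty,\tilde\xi]$), and any admissible contour must detour around $\tilde\xi$; hence the contour does not shrink as $\psi\to\pi$, and the claimed bound $|J|=O((\sin\psi)^{2\alpha-1})$ is false: $J(\tilde z,\tilde\xi)$ tends to the loop integral $J(\tilde\xi-1,\tilde\xi)$ of \eqref{Jdefextension}, which is generically nonzero (e.g.\ $J(0,\xi)=i\xi^{\alpha-1}\int_{-\pi}^{\pi}(1+e^{i\varphi})^{\frac{3\alpha}{2}-2}e^{i\varphi(1-\frac{\alpha}{2})}\,d\varphi\neq0$, a value the paper uses later). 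Boundedness near $\{\psi=\pi\}$ is true, but your argument does not establish it, and your case split also leaves out the corner $(\psi,\tilde\xi)\to(\pi,1)$, where $\tilde z\to0$ hits the branch point of $u^{-\frac{\alpha}{2}}$ and the singularities of three factors collide. The paper's proof covers all of this with one uniform estimate, which you could substitute after your rescaling to $|z-\xi|=1$: parametrize the contour as the arc $u=\xi+re^{i\varphi}$, $|\varphi|\le\theta$, $r=|z-\xi|$ (giving \eqref{Jxiplusreitheta}), use the inequality $|\xi+re^{i\varphi}|\ge\tfrac r2|e^{i\varphi}+1|$ valid for all $\xi,r>0$, and note that the resulting angular integral is bounded uniformly in $\theta\in(0,\pi)$ because the combined exponent at the colliding singularity $\varphi=\pm\pi$ is $\tfrac{3\alpha}{2}-2>-1$. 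With that replacement (or any equivalent uniform treatment of $\psi$ near $\pi$ and the corners) your proof of \eqref{Jeverywhere} would be complete.
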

\begin{proof}
To prove (\ref{Jeverywhere}), we let $z = \xi + re^{i\theta}$ and choose the following parametrization of the integration contour in (\ref{Jdef}) (see Figure \ref{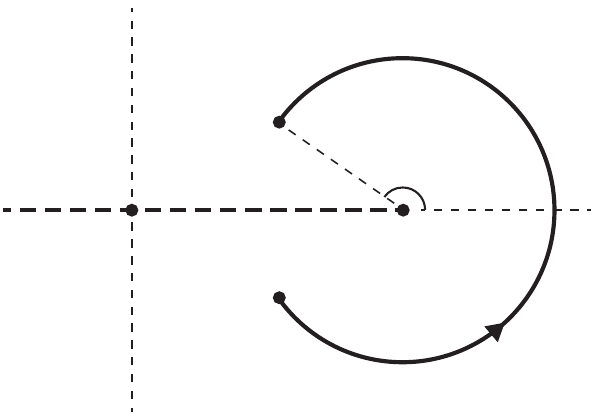}):
\begin{align}\label{circlearoundxi}
u = \xi + re^{i\varphi}, \qquad -\theta \leq \varphi \leq \theta.
\end{align}
This yields after simplification
\begin{align}\nonumber
J(\xi + re^{i\theta}, \xi) & = i r^{\frac{3\alpha}{2} - 1} \int_{-\theta}^\theta (e^{i\varphi} - e^{i\theta})^\alpha (e^{i\varphi} - e^{-i\theta})^{\alpha -2} (\xi + re^{i\varphi})^{-\frac{\alpha}{2}} e^{i\varphi(1 -\frac{\alpha}{2})} d\varphi,
	\\ \label{Jxiplusreitheta}
&\hspace{6cm} r > 0, \  \theta \in (0, \pi), \  \xi > 0.	
\end{align}
It follows that
$$|J(\xi + re^{i\theta}, \xi)| \leq r^{\frac{3\alpha}{2} -1}\int_{-\theta}^\theta |e^{i\varphi} - e^{i\theta}|^{\alpha} |e^{i\varphi} - e^{-i\theta}|^{\alpha -2} |\xi + re^{i\varphi}|^{-\frac{\alpha}{2}} d\varphi.$$
Since
\begin{align*}
|\xi + re^{i\varphi}| \geq \frac{|re^{i\varphi} + r|}{2},
\end{align*}
for all $r > 0$, $\varphi\in (-\pi, \pi)$, and $\xi > 0$, we obtain the estimate
\begin{align*}
|J(\xi + re^{i\theta}, \xi)|  \leq C r^{\alpha -1}\int_{-\theta}^\theta |e^{i\varphi} - e^{i\theta}|^{\alpha} |e^{i\varphi} - e^{-i\theta}|^{\alpha -2} |e^{i\varphi} +1|^{-\frac{\alpha}{2}} d\varphi.
\end{align*}
The integral remains bounded as $\theta \uparrow \pi$, because $\frac{3\alpha}{2} - 2 > -1$. 
Thus we arrive at
$$|J(\xi + re^{i\theta}, \xi)|  \leq  Cr^{\alpha -1}, \qquad r > 0, \  \theta \in (0, \pi), \  \xi > 0,$$
which is (\ref{Jeverywhere}).
\begin{figure}
\begin{center}
\begin{overpic}[width=.55\textwidth]{Jcircle.pdf}
      \put(69,39){\small $\theta$}
      \put(57,43.5){\small $r$}
      \put(43,48){\small $z$}
      \put(43,18.5){\small $\bar{z}$}
      \put(19,30){\small $0$}
      \put(66.5,29.5){\small $\xi$}
      \put(102,33.3){\small $\re z$}
      \put(18,72){\small $\im z$}
\end{overpic}
     \begin{figuretext}\label{Jcircle.pdf}
       The contour from $\bar{z} = \xi + re^{-i\theta}$ to $z = \xi + re^{i\theta}$ defined in equation (\ref{circlearoundxi}).
       \end{figuretext}
     \end{center}
\end{figure}

To prove (\ref{Jatplusinfinity}), we let $z = x + iy$ in (\ref{Jdef}) and use the parametrization $u = x + is$, $-y \leq s \leq y$, of the contour from $\bar{z}$ to $z$. Assuming that $x > \xi$, this yields
\begin{align}\label{Jrep}
J(z, \xi) & = \int_{-y}^y (is - iy)^{\alpha}(is + iy)^{\alpha - 2} (x + is)^{-\frac{\alpha}{2}}(x - \xi + is)^{-\frac{\alpha}{2}} ids.
\end{align}
It follows that
\begin{align*}
|J(z, \xi)|  \leq & \int_{-y}^y |s - y|^{\alpha}|s + y|^{\alpha - 2} |x + is|^{-\frac{\alpha}{2}}|x - \xi + is|^{-\frac{\alpha}{2}} ds
	\\
\leq &\; |x|^{-\frac{\alpha}{2}}|x - \xi|^{-\frac{\alpha}{2}} (2y)^{\alpha} \int_{-y}^y |s + y|^{\alpha - 2} ds
	\\
\leq &\; C |x|^{-\frac{\alpha}{2}}|x - \xi|^{-\frac{\alpha}{2}} y^{2\alpha - 1}, \qquad x > \xi, \  y > 0, \  \xi > 0.
\end{align*}
This proves (\ref{Jatplusinfinity}).

To prove (\ref{reJeverywhere}), we note that if $f(u)$ is an analytic function, then
\begin{align}\label{intschwartz}
\overline{\int_{\bar{z}}^z f(u) du} = - \int_{\bar{z}}^z \overline{f(\bar{u})} du.
\end{align}
Hence
\begin{align*}
2\re J(z, \xi) = & \int_{\bar{z}}^z (u - z)^\alpha (u - \bar{z})^{\alpha -2} u^{-\frac{\alpha}{2}} (u - \xi)^{-\frac{\alpha}{2}} du
	\\
& - \int_{\bar{z}}^z (u - z)^{\alpha-2} (u - \bar{z})^{\alpha} u^{-\frac{\alpha}{2}} (u - \xi)^{-\frac{\alpha}{2}} du.
\end{align*}
Since
$$(u - z)^2 - (u - \bar{z})^2 = -4iy(u - x),$$
this can be written as
\begin{align}\label{reJrepresentation}
\re J(z, \xi) = -2iy \int_{\bar{z}}^z (u - z)^{\alpha - 2}(u - \bar{z})^{\alpha - 2} u^{-\frac{\alpha}{2}} (u - \xi)^{-\frac{\alpha}{2}} (u - x) du.
\end{align}
Assuming that $z = re^{i\theta}$ satisfies $|z| > \xi$ and adopting the parametrization 
$$u = re^{i\varphi}, \qquad -\theta \leq \varphi \leq \theta,$$
of the integration contour from $\bar{z}$ to $z$, we arrive at
\begin{align*}
\re J(re^{i\theta}, \xi) = &\; 2y r^{\frac{3\alpha}{2} - 2} \int_{-\theta}^\theta (e^{i\varphi} - e^{i\theta})^{\alpha - 2}(e^{i\varphi} - e^{-i\theta})^{\alpha - 2} (re^{i\varphi} - \xi)^{-\frac{\alpha}{2}} 
	\\
& \times (e^{i\varphi} - \cos\theta) e^{i\varphi(1 -\frac{\alpha}{2})} d\varphi.
\end{align*}
In view of the estimate
$$|re^{i\varphi} - \xi| \geq \frac{r}{2}$$
valid for all $r \geq 2\xi$, $\varphi \in (0, \pi)$, and $\xi > 0$, this implies
\begin{align*}
|\re J(re^{i\theta}, \xi)| \leq &\; C y r^{\alpha - 2} \int_{-\theta}^\theta |e^{i\varphi} - e^{i\theta}|^{\alpha - 2}|e^{i\varphi} - e^{-i\theta}|^{\alpha - 2}  |e^{i\varphi} - \cos\theta| d\varphi
	\\
& \hspace{5cm}  r \geq 2\xi, \  \theta \in (0, \pi), \  \xi > 0.
\end{align*}
Since $2 \alpha -3 > -1$, the integral remains bounded as $\theta \uparrow \pi$.
This proves (\ref{reJeverywhere}).
\end{proof}

We next extend the function $J(z, \xi)$ continuously to all of $\bar{\mathbb{H}} \times (0, \infty)$. 
As suggested by (\ref{Jdef}), we define $J(x, \xi)$ for $x \in \R$ by 
\begin{align}\label{Jdefextension}
J(x, \xi) = \begin{cases} 0, & x \geq \xi, 
	\\
\int_{L_x} (u- x)^{2\alpha - 2} u^{-\frac{\alpha}{2}}(u - \xi)^{-\frac{\alpha}{2}} du, & x < \xi,
\end{cases}
\end{align}
where the integration contour $L_x$ is a loop starting and ending at $x$ which avoids the branch cut along $(-\infty, \xi)$ and which encloses $\xi$ in the counterclockwise direction, see Figure \ref{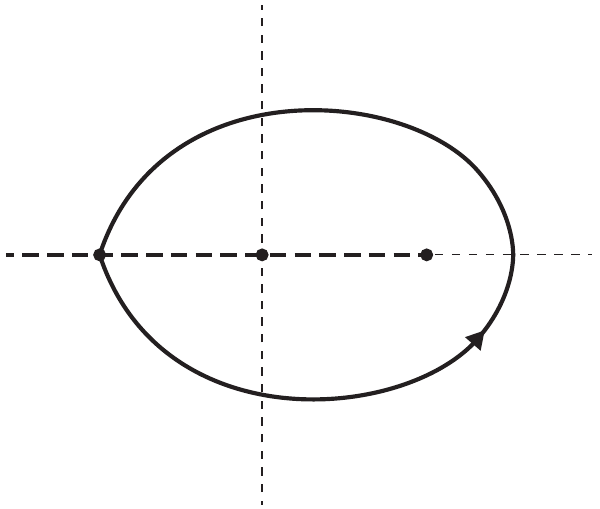}.

\begin{figure}
\begin{center}
\begin{overpic}[width=.55\textwidth]{Jloop.pdf}
      \put(81.5,25){\small $L_x$}
      \put(13,38.5){\small $x$}
      \put(40.5,38){\small $0$}
      \put(70,37.5){\small $\xi$}
      \put(102,41.3){\small $\re z$}
      \put(40,87){\small $\im z$}
\end{overpic}
     \begin{figuretext}\label{Jloop.pdf}
       The integration contour $L_x$ in (\ref{Jdefextension}) is a loop which encloses $\xi$ in the counterclockwise direction. 
             \end{figuretext}
     \end{center}
\end{figure}

\begin{lemma}\label{Jcontinuouslemma}
For each  $\xi > 0$, the function $J(z, \xi)$  defined by (\ref{Jdef}) and (\ref{Jdefextension}) is a continuous function of $z  \in \bar{\mathbb{H}}$.
\end{lemma}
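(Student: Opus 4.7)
Interior smoothness on $\mathbb{H} \times (0,\infty)$ is already granted by Lemma \ref{Jsmoothlemma}, so the task reduces to showing $\lim_{z \to x,\, z \in \mathbb{H}} J(z,\xi) = J(x,\xi)$ for each $x \in \R$; I would split the argument according to the position of $x$ relative to $\xi$. For the easy case $x \geq \xi$, we have $J(x,\xi) = 0$ by definition: when $x = \xi$ the bound (\ref{Jeverywhere}) gives $|J(z,\xi)| \leq C|z-\xi|^{\alpha-1} \to 0$ since $\alpha > 1$, and when $x > \xi$ the bound (\ref{Jatplusinfinity}) gives $|J(z,\xi)| \leq Cy^{2\alpha-1}$ uniformly for $z = x' + iy'$ in a small neighborhood of $x$ kept away from $\xi$, which tends to zero because $2\alpha - 1 > 0$.

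\textbf{The main case $x < \xi$.} Here I would deform the contour from $\bar z$ to $z$ so that it hugs the loop $L_x$ defining $J(x,\xi)$. Using Cauchy's theorem in the complement of the branch cuts of $u^{-\alpha/2}$, $(u-\xi)^{-\alpha/2}$, $(u-z)^\alpha$ and $(u-\bar z)^{\alpha-2}$ (placed, say, straight down from $\bar z$ and straight up from $z$ so as not to meet $L_x$), I would replace the original path by
\[\Gamma_z = \gamma^-_z \cup L_x \cup \gamma^+_z,\]
where $\gamma^-_z$ is a short arc from $\bar z$ to the basepoint $x$ of $L_x$ and $\gamma^+_z$ is a short arc from $x$ to $z$. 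This deformation is legitimate because the closed curve obtained by concatenating the original contour with $\Gamma_z^{-1}$ has winding number zero around $\xi$ — the ``pass to the right of $\xi$'' piece is exactly compensated by the winding of $L_x$ — and zero around every other branch point. On $\gamma^-_z$ the integrand obeys $|(u-z)^\alpha (u-\bar z)^{\alpha-2}| \leq C y^\alpha |u-\bar z|^{\alpha-2}$, and symmetrically on $\gamma^+_z$, while $u^{-\alpha/2}(u-\xi)^{-\alpha/2}$ is uniformly bounded there, so each arc integral is $O(y^{2\alpha-1}) \to 0$ — the same flavor of estimate as in the proof of Lemma \ref{Jestimateslemma}. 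On the fixed loop $L_x$, the integrand converges pointwise to $(u-x)^{2\alpha-2}u^{-\alpha/2}(u-\xi)^{-\alpha/2}$ and admits an integrable majorant coming from the bound $|(u-z)^\alpha(u-\bar z)^{\alpha-2}| \leq C(|u-x|+y)^{2\alpha-2}$ near $u=x$; dominated convergence then delivers $\int_{L_x} \to J(x,\xi)$.

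\textbf{Main obstacle.} The delicate step is the contour deformation and the associated branch bookkeeping for $x < \xi$: the branch cuts of $(u-z)^\alpha$ and $(u-\bar z)^{\alpha-2}$ must be placed so that $\Gamma_z$ and the original contour bound a region on which the integrand is single-valued, and one must verify that $(u-z)^\alpha(u-\bar z)^{\alpha-2}$ tends to the principal-branch value $(u-x)^{2\alpha-2}$ used in (\ref{Jdefextension}) as $z \to x$. Both points follow once the deformation is arranged so that neither $z$ nor $\bar z$ is encircled by $L_x$; given these checks, the small-arc estimates and dominated convergence on $L_x$ follow directly from the bounds in Lemma \ref{Jestimateslemma}.
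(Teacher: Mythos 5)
Your handling of $x \geq \xi$ coincides with the paper's (both follow from (\ref{Jeverywhere}) and (\ref{Jatplusinfinity})), and for $x \in (-\infty,0)\cup(0,\xi)$ your deformation onto a fixed loop $L_x$ plus two short connecting arcs is a workable alternative to the paper's argument, which instead keeps the circular contour $u=\xi+re^{i\varphi}$ of (\ref{Jxiplusreitheta}), rescales the angle to a fixed interval, and applies dominated convergence. The genuine gap is the point $x=0$, which your ``main case $x<\xi$'' silently includes but which your estimates do not cover: when the basepoint of $L_x$ is the branch point $u=0$, the factor $u^{-\frac{\alpha}{2}}(u-\xi)^{-\frac{\alpha}{2}}$ is \emph{not} uniformly bounded on the arcs $\gamma_z^{\pm}$ (they terminate at $0$), so your $O(y^{2\alpha-1})$ arc estimate collapses; and on $L_0$ your proposed majorant, essentially $C(|u|+y)^{2\alpha-2}\,|u|^{-\frac{\alpha}{2}}|u-\xi|^{-\frac{\alpha}{2}}$, is only integrable near $u=0$ for $\alpha<2$, whereas the lemma must hold for all $\alpha>1$. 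At $x=0$ one must play the smallness of $(u-z)^{\alpha}(u-\bar z)^{\alpha-2}$ near $u=0$ against the blow-up of $u^{-\frac{\alpha}{2}}$; this is precisely why the paper devotes a separate argument to $z=0$, splitting the contour into the five pieces of (\ref{gammajdef}), proving $|J_1|,|J_5|\leq Cr^{\frac{3\alpha}{2}-1}$ with distinct treatments of $\alpha\geq 2$ and $1<\alpha<2$, and building the explicit $L^1$ majorant (\ref{frthetaF}) on the horizontal pieces. Your proof needs an analogous dedicated argument at $x=0$.

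Two secondary points. First, on $\gamma_z^{\pm}$ the bound $|(u-z)^{\alpha}(u-\bar z)^{\alpha-2}|\leq Cy^{\alpha}|u-\bar z|^{\alpha-2}$ is not valid for tangential approach ($|u-z|$ is comparable to $|z-x|$, not $y$); the correct conclusion is an arc contribution $O(|z-x|^{2\alpha-1})$, obtained for $1<\alpha<2$ by actually integrating the singular factor $|u-\bar z|^{\alpha-2}$ along the arc, and this still tends to zero, so the idea survives. Second, for $1<\alpha<2$ your dominated-convergence majorant on $L_x$ near $u=x$ is false in general: since $z$ may approach $x$ from any direction, $\bar z$ can come arbitrarily close to (or lie on) the departing leg of the fixed loop, and then $|(u-z)^{\alpha}(u-\bar z)^{\alpha-2}|$ is not bounded by $C(|u-x|+y)^{2\alpha-2}$ (take $u$ and $\bar z$ on the same leg with $|u-\bar z|\to 0$). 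This is repairable — e.g.\ by a uniform-integrability/Vitali argument using $\int|u-\bar z|^{\alpha-2}|du|\leq C$, or by adopting the paper's device of a $z$-dependent circular contour through $z$ and $\bar z$ with a rescaled parameter — but as stated the majorant step does not go through for $1<\alpha<2$.
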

\begin{proof}
Fix $\xi > 0$. Since $\alpha > 1$, the integral in (\ref{Jdefextension}) converges for every $x < \xi$ (including $x = 0$).
Equation (\ref{Jatplusinfinity}) implies that $z \mapsto J(z, \xi)$ is continuous at each point in $(\xi, \infty)$. Moreover, equation (\ref{Jeverywhere}) implies that $J$ is continuous at $z = \xi$.

We next show that $J$ is continuous at each point in $(-\infty, 0) \cup (0, \xi)$. 
Letting $s = \frac{\pi}{\theta}\varphi$ and simplifying, we can write (\ref{Jxiplusreitheta}) as
\begin{align*}
J(\xi + re^{i\theta}, \xi) & = \int_{-\pi}^\pi g_{r,\theta}(s) ds,
\end{align*}
where
$$g_{r,\theta}(s) = ir^{\frac{3\alpha}{2} - 1} \frac{\theta}{\pi} \big(e^{\frac{i\theta s}{\pi}} - e^{i\theta}\big)^\alpha \big(e^{\frac{i\theta s}{\pi}} - e^{-i\theta}\big)^{\alpha -2} \big(\xi + re^{\frac{i\theta s}{\pi}}\big)^{-\frac{\alpha}{2}} e^{\frac{i\theta s}{\pi}(1- \frac{\alpha}{2})}.$$
Let $\epsilon > 0$. Then
\begin{align*}
|g_{r,\theta}(s)| & \leq C \big|e^{\frac{i\theta s}{\pi}} - e^{i\theta}\big|^\alpha \big|e^{\frac{i\theta s}{\pi}} - e^{-i\theta}\big|^{\alpha -2}
	\\
& \leq C 2^\alpha \max\bigg(\bigg|\theta + \frac{\theta s}{\pi}\bigg|^{\alpha -2}, \bigg|\theta + \frac{\theta s}{\pi} - 2\pi\bigg|^{\alpha -2}\bigg)
	\\
& \leq C \max\big(|\pi + s|^{\alpha -2}, |\pi - s|^{\alpha -2}\big), \qquad s \in (-\pi, \pi),
\end{align*}
for all $r \in (\epsilon, \epsilon^{-1})$ with $|r - \xi|  > \epsilon$ and all $\theta \in [\frac{\pi}{2}, \pi]$.
Since $|\pi \pm s|^{\alpha -2} \in L^1([-\pi, \pi])$, this shows that there exists a function $G(s)$ in $L^1([-\pi, \pi])$ such that
\begin{align}\label{grthetaG}
|g_{r,\theta}(s)| \leq G(s), \qquad s \in (-\pi, \pi),
\end{align}
for all $r \in (\epsilon, \epsilon^{-1})$ with $|r - \xi|  > \epsilon$ and all $\theta \in [\frac{\pi}{2}, \pi]$. 
Since $\epsilon > 0$ was arbitrary, if the point $x_0 = \xi + r_0 e^{i\pi}$ belongs to $(-\infty, 0) \cup (0, \xi)$, dominated convergence gives
$$\lim_{z \to x_0} J(z, \xi) =\lim_{\substack{\theta \uparrow \pi \\ r \to r_0}} \int_{-\pi}^\pi g_{r,\theta}(s) ds = \int_{-\pi}^{\pi} g_{r_0,\pi}(s) ds = J(x_0, \xi).$$
This shows that $J(\cdot, \xi)$ is continuous at each point in $(-\infty, 0) \cup (0, \xi)$.

It only remains to show that $J(\cdot, \xi):\bar{\mathbb{H}} \to \C$ is continuous at $z = 0$.
To prove this, let $c = \frac{\xi}{2}$ and let $z = re^{i\theta}$ with   $\theta \in [0, \pi]$ and $r \in (0, c)$. Let $\gamma$  denote the contour from  $\bar{z}$ to $z$ used in the definition (\ref{Jdef}) of $J(z, \xi)$. We write $\gamma$ as the union of five subcontours as follows (see Figure \ref{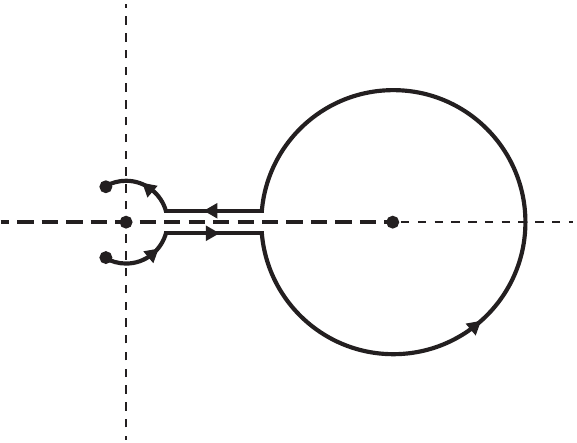}):
$$\gamma = \gamma_1 + \gamma_2 + \gamma_3 + \gamma_4 + \gamma_5,$$
where
\begin{align}\nonumber
& \gamma_1 : \{re^{i\varphi} \, | \, -\theta \leq \varphi \leq 0\}, && \gamma_2 : \{u - i0 \, | \, r \leq u \leq \xi - c \},
	\\\nonumber
& \gamma_3: \{\xi + ce^{i\varphi} \, | \, -\pi \leq \varphi \leq \pi\}, &&\gamma_4 : \{u + i0 \, | \, r \leq u \leq \xi - c \},
	\\\label{gammajdef}
& \gamma_5 : \{re^{i\varphi} \, | \, 0 \leq \varphi \leq \theta\}.
\end{align}
\begin{figure}
\begin{center}
\begin{overpic}[width=.60\textwidth]{Jcontour2.pdf}
      \put(26,28.5){\small $\gamma_1$}
      \put(35,32.4){\small $\gamma_2$}
      \put(82,16){\small $\gamma_3$}
      \put(35,43.5){\small $\gamma_4$}
      \put(26,46.5){\small $\gamma_5$}
      \put(14.5,43.5){\small $z$}
      \put(14.5,31){\small $\bar{z}$}
      \put(67,33.5){\small $\xi$}
      \put(102,37){\small $\re z$}
      \put(18,79){\small $\im z$}
\end{overpic}
     \begin{figuretext}\label{Jcontour2.pdf}
       The integration contour $\gamma$ is the composition of the five subcontours $\gamma_j$, $j = 1, \dots, 5$, defined in (\ref{gammajdef}).
       \end{figuretext}
     \end{center}
\end{figure}
Then
$$J(z, \xi) = \sum_{j=1}^5 J_j(z),$$
where
$$J_j(z) = \int_{\gamma_j} (u-z)^{\alpha}(u- \bar{z})^{\alpha - 2} u^{-\frac{\alpha}{2}}(u - \xi)^{-\frac{\alpha}{2}} du, \qquad j = 1, \dots, 5.$$

We claim that
\begin{align}\label{J1J5tozero}
  |J_j(z)| \leq Cr^{\frac{3\alpha}{2} - 1}, \qquad |z| < c, \  z \in \mathbb{H}, \  j = 1, 5.
\end{align}
Indeed, let us consider the case of $J_1(z)$. 
Since $|u - \xi| \geq c$ for $u \in \gamma_1$, we have
$$|J_1(z)| \leq \int_{\gamma_1} |u-z|^{\alpha} |u- \bar{z}|^{\alpha - 2} r^{-\frac{\alpha}{2}} c^{-\frac{\alpha}{2}} |du|.$$
Moreover, the inequalities
$$|u - \bar{z}| \leq |u - z| \leq 2r$$
are valid for $u \in \gamma_1$.
Hence, if $\alpha \geq 2$, then 
\begin{align*}
|J_1(z)|& \leq C \int_{\gamma_1} |u-z|^{2\alpha - 2} r^{-\frac{\alpha}{2}} |du|
\leq  C \int_{\gamma_1} r^{\frac{3\alpha}{2} - 2} |du| \leq C r^{\frac{3\alpha}{2} - 1}.
\end{align*}
On the other hand, if $1 < \alpha < 2$, then
\begin{align*}
|J_1(z)| & \leq C \int_{\gamma_1} (2r)^{\alpha} |u- \bar{z}|^{\alpha - 2} r^{-\frac{\alpha}{2}} |du|
 \leq C r^{\frac{\alpha}{2}} \int_0^{\theta} |r e^{-i\varphi} - r e^{-i\theta}|^{\alpha - 2} r d\varphi
 	\\
& \leq C r^{\frac{3\alpha}{2} - 1} \int_0^{\theta} \bigg|\frac{2}{\pi}(\theta - \varphi)\bigg|^{\alpha - 2} d\varphi
 \leq C r^{\frac{3\alpha}{2} - 1} \frac{\theta^{\alpha -1}}{\alpha - 1} \leq C r^{\frac{3\alpha}{2} - 1}.
 \end{align*}
This proves (\ref{J1J5tozero}) for $J_1(z)$; the proof for $J_5(z)$ is similar. 

We next show that 
\begin{align}\label{J2J4limit}
  \lim_{z \to 0} J_j(z) = J_j(0), \qquad j = 2,4.
\end{align}
To establish (\ref{J2J4limit}), we let $u = r + s$ and write
\begin{align*}
J_2(z) = \int_0^{c} f_{r,\theta}(s) ds,
\end{align*}
where 
$$f_{r,\theta}(s) = \chi_{[0, c-r]}(s) (s + r - re^{i\theta})^\alpha (s + r - re^{-i\theta})^{\alpha - 2} (s+r)^{-\frac{\alpha}{2}} |s + r - \xi|^{-\frac{\alpha}{2}} e^{\frac{\alpha i \pi}{2}}$$
and $\chi_{[0,c-r]}$ denotes the characteristic function of the interval $[0,c-r]$. 
We will show that there exists a function $F(s)$ in $L^1((0,c))$ such that
\begin{align}\label{frthetaF}
|f_{r,\theta}(s)| \leq F(s), \qquad s \in (0, c),
\end{align}
for all $r\in (0, c)$ and all $\theta \in [0, \pi]$. 
Dominated convergence then gives
$$\lim_{z \to 0} J_2(z) = \lim_{r \to 0} \int_0^{c} f_{r,\theta}(s) ds = \int_0^{c} f_{0,\theta}(s) ds
= J_2(0),$$
showing that $J_2(z)$ satisfies (\ref{J2J4limit}).

In order to prove (\ref{frthetaF}), we note that 
$$|s+ r - re^{i\theta}| = |s + r - re^{-i\theta}| \quad \text{and}  \quad |s + r - \xi| \geq c$$
for $s \in (0, c-r)$. This gives
$$|f_{r, \theta}(s)| \leq |s + r - re^{i\theta}|^{2\alpha - 2} (s+r)^{-\frac{\alpha}{2}} c^{-\frac{\alpha}{2}}.$$
Using the inequalities
$$|s + r - re^{i\theta}| \leq s + 2r, \qquad s + r \geq \frac{s + 2r}{2},$$
we find
$$|f_{r, \theta}(s)| \leq |s + 2r|^{\frac{3\alpha}{2} - 2} (2/c)^{\frac{\alpha}{2}}, \qquad s \in (0, c), \  r \in (0, c), \  \theta \in [0, \pi].$$
Since 
$$|s + 2r|^{\frac{3\alpha}{2} - 2}
\leq \begin{cases}
(3c)^{\frac{3\alpha}{2} - 2}, & \alpha \geq \frac{4}{3}, 
	\\
s^{\frac{3\alpha}{2} - 2}, & 1 < \alpha < \frac{4}{3},
\end{cases}$$
we deduce that (\ref{frthetaF}) holds with 
$$F(s) = (2/c)^{\frac{\alpha}{2}} \max((3c)^{\frac{3\alpha}{2} - 2}, s^{\frac{3\alpha}{2} - 2}), \qquad 0 < s < c.$$
This proves (\ref{J2J4limit}) for $j = 2$; the proof when $j = 4$ is similar.

Finally, since the integration contour is independent of $z$, it is easy to see that
\begin{align}\label{J3limit}
\lim_{z \to 0} J_3(z) = J_3(0).
\end{align}
Since $J(z, \xi) = \sum_{j=1}^5 J_j(z)$, the continuity of $J(z, \xi)$ at $z = 0$ follows from equations (\ref{J1J5tozero}), (\ref{J2J4limit}), and (\ref{J3limit}). This completes the proof of the lemma.
\end{proof}

\begin{lemma}\label{Jxycontinuouslemma}
For each $\xi > 0$, the partial derivatives $\partial_xJ(z, \xi)$ and $\partial_yJ(z, \xi)$ have continuous extensions to $\bar{\mathbb{H}} \setminus \{0, \xi\}$.
\end{lemma}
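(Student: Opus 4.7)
The plan is to show continuity of the Wirtinger derivatives $\partial_z J$ and $\partial_{\bar z} J$ on $\bar{\mathbb{H}} \setminus \{0, \xi\}$; then $\partial_x = \partial_z + \partial_{\bar z}$ and $\partial_y = i(\partial_z - \partial_{\bar z})$ inherit continuous extensions. Interior smoothness is already provided by Lemma \ref{Jsmoothlemma}, so the task is to verify continuity up to each boundary point $x_0 \in \R\setminus\{0,\xi\}$. For non-integer $\alpha > 1$ I would work with the Pochhammer representation (\ref{Jdef2}) of $J$: since its integration contour is bounded away from all four branch points $\{z, \bar z, 0, \xi\}$, the integrand together with its $z$- and $\bar z$-derivatives is jointly smooth in $u$ and $(z,\bar z)$ with uniform bounds on the contour, and differentiation under the integral sign is trivially justified. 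The integer case $\alpha \in \Z_{\geq 2}$ will be handled separately via the explicit closed-form evaluation mentioned in the proof of Lemma \ref{Jsmoothlemma}, which is manifestly smooth on $\bar{\mathbb{H}}\setminus\{0,\xi\}$.

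The key step is to deform the Pochhammer contour of (\ref{Jdef2}) into a contour $\Gamma_0$ that is independent of $z$ in a small neighborhood of $x_0$. Fix $\delta > 0$ with $\overline{D(x_0,2\delta)} \cap \{0,\xi\} = \emptyset$, let $U := D(x_0,\delta) \cap \bar{\mathbb{H}}$, and let $A'$ be a fixed basepoint well outside $\overline{D(x_0,2\delta)}$. I would construct $\Gamma_0$ as the composition $l_1 l_2 l_1^{-1} l_2^{-1}$, where $l_1$ is a loop based at $A'$ encircling the entire upper half-disk $D(x_0,\delta) \cap \bar{\mathbb{H}}$ with winding $+1$ (and avoiding $0, \xi$), and $l_2$ encircles the lower half-disk with winding $+1$. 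For every $z \in U \cap \mathbb{H}$, the contour $\Gamma_0$ has the same $(+1,+1,-1,-1)$ winding pattern around $(z,\bar z, z, \bar z)$ as the Pochhammer contour of (\ref{Jdef2}), so the two are homotopic in $\C \setminus \{z,\bar z,0,\xi\}$ and produce the same integral. Differentiating under the integral sign then yields
\[
  \partial_z J(z, \xi) = \frac{-\alpha}{(1-e^{2i\pi\alpha})^2}\int_{\Gamma_0}(u-z)^{\alpha-1}(u-\bar z)^{\alpha-2}u^{-\alpha/2}(u-\xi)^{-\alpha/2}\,du,
\]
together with the analogous formula for $\partial_{\bar z} J$ in which $(u-z)^\alpha(u-\bar z)^{\alpha-2}$ is replaced by $-(\alpha-2)(u-z)^\alpha(u-\bar z)^{\alpha-3}$. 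Because $\Gamma_0$ keeps a positive distance from $\{z,\bar z,0,\xi\}$ uniformly for $z\in U$, both right-hand sides are continuous in $z \in U$, in particular at $x_0$.

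The main obstacle is the contour-deformation claim in the previous paragraph, and in particular its uniformity as $z$ approaches $x_0 \in \R$: the punctures $z$ and $\bar z$ coalesce in the limit, and one must avoid having the Pochhammer loops pinched between them or forced across $0, \xi$. Choosing $l_1$ and $l_2$ to enclose full half-disks rather than thin balloons around the individual punctures dodges this pinching and yields a single $z$-independent $\Gamma_0$ that serves for every $z \in U \cap \mathbb{H}$. A direct attempt to differentiate (\ref{Jdef}) under the integral sign (with the contour from $\bar z$ to $z$) would face the same issue in disguise: for $\partial_{\bar z}$ and $1 < \alpha \leq 2$ the naive derivative of the integrand is not absolutely integrable near $u = \bar z$, and the Pochhammer detour is precisely what removes this obstruction. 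Once the construction is in place, continuity of $\partial_z J$ and $\partial_{\bar z} J$ at $x_0$ (and hence of $\partial_x J, \partial_y J$) follows at once from continuity of the integrand in the parameter $z$.
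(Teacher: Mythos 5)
There is a genuine gap, and it sits exactly where the lemma is nontrivial: at a boundary point $x_0\in\R\setminus\{0,\xi\}$ the punctures $z$ and $\bar z$ coalesce, and no fixed contour $\Gamma_0$ of the kind you describe can exist. For the identification with (\ref{Jdef2}) you need $l_1$ to wind $+1$ around $z$ and $0$ around $\bar z$ (and similarly for $l_2$ with the roles exchanged); any such loop must cross the segment $[\bar z,z]$, whose length is $2\im z\to 0$, so $\dist(\Gamma_0,\{z,\bar z\})\leq \im z$ and your key claim that ``$\Gamma_0$ keeps a positive distance from $\{z,\bar z,0,\xi\}$ uniformly for $z\in U$'' is false. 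If instead you insist that $l_1$ encircle the \emph{closed} upper half-disk (so that it is genuinely $z$-independent and bounded away from the half-disk), then its interior contains a neighborhood of the real diameter, hence contains $\bar z$ as soon as $\im z$ is small; the monodromy picked up along $l_1$ is then $e^{2\pi i\alpha}e^{2\pi i(\alpha-2)}$ rather than $e^{2\pi i\alpha}$, and the homotopy/deformation identity with the Pochhammer integral breaks down. (Also, winding numbers alone do not give homotopy in the four-times punctured plane, though that is a secondary point.) So the ``trivially justified'' differentiation under the integral sign and the continuity at $x_0$ are not established; and one cannot simply fall back on absolute bounds along a contour pinched at the real axis, since the $\partial_{\bar z}$-derivative produces the factor $(u-\bar z)^{\alpha-3}$ with $\alpha-3<-1$ for $1<\alpha<2$, which is not absolutely integrable uniformly as $z\to x_0$.

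The paper's proof avoids all of this by a different mechanism: starting from the collapsed representation (\ref{Jdef}) with $W(u)=u^{-\alpha/2}(u-\xi)^{-\alpha/2}$, it first integrates by parts, trading the exponent $\alpha-2$ for $\alpha-1$ at the cost of a $W'$ term; differentiating the resulting formula in $z$ and $\bar z$ then yields integrals over the segment from $\bar z$ to $z$ whose endpoint exponents are all $\geq\alpha-2>-1$, hence absolutely convergent, and whose continuity up to $\bar{\mathbb{H}}\setminus\{0,\xi\}$ follows by repeating the dominated-convergence argument of Lemma \ref{Jcontinuouslemma}. If you want to salvage your route, you would need either a preliminary integration by parts of this kind before differentiating, or a genuinely $z$-dependent contour together with uniform estimates that exploit the cancellations of the Pochhammer phases; as written, the argument does not go through.
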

\begin{proof}
Fix $\xi > 0$. Defining $W(u)$ by
\begin{align}\label{Wdef}
  W(u) = u^{-\frac{\alpha}{2}}(u - \xi)^{-\frac{\alpha}{2}},
\end{align} 
we can write
\begin{align*}
J(z, \xi) = &\; \int_{\bar{z}}^z (u-z)^\alpha (u- \bar{z})^{\alpha - 2} W(u) du.
\end{align*}
An integration by parts gives
\begin{align}\nonumber
J(z, \xi) = & -\frac{\alpha}{\alpha -1} \int_{\bar{z}}^z (u-z)^{\alpha -1} (u- \bar{z})^{\alpha - 1} W(u) du
	\\
& - \frac{1}{\alpha -1}\int_{\bar{z}}^z (u-z)^{\alpha } (u- \bar{z})^{\alpha - 1} W'(u) du.
\end{align}
Differentiating with respect to $z$ and $\bar{z}$, we find
\begin{align*}
J_z(z, \xi) = &\; \alpha \int_{\bar{z}}^z (u-z)^{\alpha -2} (u- \bar{z})^{\alpha - 1} W(u) du
	\\
& + \frac{\alpha}{\alpha -1} \int_{\bar{z}}^z (u-z)^{\alpha-1} (u- \bar{z})^{\alpha - 1} W'(u) du
\end{align*}
and
\begin{align*}
J_{\bar{z}}(z, \xi) = &\; \alpha \int_{\bar{z}}^z (u-z)^{\alpha -1} (u- \bar{z})^{\alpha - 2} W(u) du
	\\
& + \int_{\bar{z}}^z (u-z)^{\alpha} (u- \bar{z})^{\alpha - 2} W'(u) du.
\end{align*}
Since $\alpha > 1$, these expressions for $J_z$ and  $J_{\bar{z}}$ are well-defined for each $z \in \R \setminus \{0, \xi\}$.
Repeating the above arguments that led to the continuity of $J(z, \xi)$ at each point $z \in \R \setminus \{0, \xi\}$, we infer that this provides continuous extensions of $J_z$ and $J_{\bar{z}}$ to $\bar{\mathbb{H}} \setminus \{0, \xi\}$.
\end{proof}

\begin{lemma}\label{reJzerolemma}
For each fixed $\xi > 0$, the function $J(z, \xi)$ satisfies
\begin{align}\label{reJxzero}
\re J(x, \xi) = 0, \qquad x \in \R, \  \xi > 0.
\end{align}
Moreover,
\begin{subequations}\label{JnearR}
\begin{align}\label{JnearRa}
\begin{cases} J(x+iy, \xi) = O(1), \\ 
\re J(x+iy, \xi) = O(y), 
\end{cases} \qquad y \downarrow 0,\  x \in \R \setminus \{0, \xi\},
\end{align}
and
\begin{align}\label{JnearRb}
J(x+iy, \xi) = O(y^{2\alpha -1}), \qquad y \downarrow 0,\  x > \xi,
\end{align}
\end{subequations}
where the error terms are uniform with respect to $x$ in compact subsets of $\R \setminus \{0, \xi\}$.
\end{lemma}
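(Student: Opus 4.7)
The strategy is to exploit the representation (\ref{reJrepresentation}) for $\re J$, which carries an explicit factor of $y$ outside the integral, combined with the continuity statement of Lemma \ref{Jcontinuouslemma} and the pointwise bounds (\ref{Jeverywhere})--(\ref{Jatplusinfinity}). The three claims (\ref{reJxzero}), (\ref{JnearRa}), (\ref{JnearRb}) will be obtained in the order: (\ref{JnearR}) first, then (\ref{reJxzero}) by passing to the boundary.

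First I would establish (\ref{JnearR}). The bound $J(x+iy,\xi) = O(1)$, uniform for $x$ in a compact set $K \subset \R \setminus \{0,\xi\}$, is immediate from Lemma \ref{Jcontinuouslemma}. For $\re J(x+iy,\xi) = O(y)$, I invoke (\ref{reJrepresentation}) and reduce the claim to bounding the integral there uniformly in $x \in K$ and $y$ small. Choosing the contour from $\bar z$ to $z$ to stay in a fixed neighborhood of $K$ on which $u^{-\alpha/2}(u-\xi)^{-\alpha/2}$ is uniformly bounded, the remaining factor $(u-z)^{\alpha-2}(u-\bar z)^{\alpha-2}(u-x)$ is dominated by an $L^1$ function since $\alpha - 2 > -1$; this is made precise by the same kind of dominated convergence argument used in the proof of Lemma \ref{Jcontinuouslemma}. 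Estimate (\ref{JnearRb}) is immediate from (\ref{Jatplusinfinity}) since $2\alpha - 1 > 1$.

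Next I would prove (\ref{reJxzero}). For $x \geq \xi$, definition (\ref{Jdefextension}) gives $J(x,\xi) = 0$, and hence $\re J(x,\xi) = 0$. For $x \in (-\infty,0)\cup(0,\xi)$, continuity of $J$ at $x$ (Lemma \ref{Jcontinuouslemma}) yields $\re J(x,\xi) = \lim_{y\downarrow 0} \re J(x+iy,\xi)$, which vanishes by the $O(y)$ bound just established. The remaining point $x = 0$ requires a separate estimate, since there (\ref{reJrepresentation}) is not obviously useful because $u^{-\alpha/2}$ is singular at $u = 0$. Here I would again invoke continuity to write $\re J(0,\xi) = \lim_{z \to 0,\, z \in \mathbb{H}} \re J(z,\xi)$, apply (\ref{reJrepresentation}) with the contour $\gamma = \gamma_1 + \cdots + \gamma_5$ from the proof of Lemma \ref{Jcontinuouslemma}, and estimate the contributions piece by piece. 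A direct power-counting estimate on the semicircles $\gamma_1, \gamma_5$ (parametrizing $u = |z|e^{i\varphi}$) yields an integral of size $O(|z|^{3\alpha/2 - 2})$, and the remaining subcontours $\gamma_2, \gamma_3, \gamma_4$ contribute at worst the same order (with logarithmic corrections if $\alpha = 4/3$). Multiplying by the prefactor $y$ gives $|\re J(z,\xi)| = O(y \cdot |z|^{3\alpha/2 - 2})$, which vanishes as $z \to 0$ because $\alpha > 1$ implies $3\alpha/2 - 1 > 1/2$.

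The main obstacle is exactly the point $x = 0$: the representation (\ref{reJrepresentation}) is not directly applicable because the integrand blows up at the origin, and the required bound $\re J(0,\xi) = 0$ has to be obtained by balancing the explicit factor $y$ against a possibly diverging integral. The analysis splits into several ranges of $\alpha$ (below, at, or above $4/3$), and in each case one must verify that the total power of $y$ is strictly positive so that the limit is zero.
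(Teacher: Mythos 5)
Your argument is correct, and it uses the same three ingredients as the paper -- the representation (\ref{reJrepresentation}) with its explicit factor of $y$, the continuity statement of Lemma \ref{Jcontinuouslemma}, and the bound (\ref{Jatplusinfinity}) -- but it routes the $O(y)$ estimate differently. The paper first gets (\ref{reJxzero}) by letting $y \to 0$ in (\ref{reJrepresentation}) and then deduces $\re J(x+iy,\xi) = O(y)$ from the boundary $C^1$-regularity of Lemma \ref{Jxycontinuouslemma} (continuity of $\partial_y J$ up to $\bar{\mathbb{H}} \setminus \{0,\xi\}$ together with the vanishing of $\re J$ on $\R$), whereas you obtain the $O(y)$ bound directly by estimating the integral in (\ref{reJrepresentation}) uniformly and then recover (\ref{reJxzero}) from continuity; this bypasses Lemma \ref{Jxycontinuouslemma} entirely and, at $x=0$, supplies a power-counting argument that the paper's one-line proof leaves implicit (the alternative there is to note that the explicit formula for $J(0,\xi)$ used later in the proof of Lemma \ref{Pcontinuousclaim} is $i$ times a real number). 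Two small imprecisions, neither fatal: for $x<\xi$ the contour cannot ``stay in a fixed neighborhood of $K$'' since it must detour around $\xi$ -- what you actually need, and what your estimate uses, is only that it stays a fixed distance from $0$ and $\xi$, with the endpoint singularities $(u-z)^{\alpha-2}(u-\bar z)^{\alpha-2}(u-x)$ contributing $O(y^{2\alpha-2})$; and at $x=0$ the circle around $\xi$ contributes $O(1)$, so the correct bound is $|\re J(z,\xi)| = O\big(y\,(1+|z|^{\frac{3\alpha}{2}-2})\big)$ rather than $O(y\,|z|^{\frac{3\alpha}{2}-2})$ when $\alpha > 4/3$ (and for $\alpha\ge 2$ one must bound $|t-z|^{\alpha-2}$ by $Ct^{\alpha-2}$, not by a constant, on the segments near the origin to avoid a spurious $r^{1-\alpha/2}$) -- in all cases the product with $y$ still tends to zero since $\frac{3\alpha}{2}-1>0$, so your conclusion stands.
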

\begin{proof}
Equation (\ref{reJxzero}) follows by letting $y \to 0$ in (\ref{reJrepresentation}).
The asymptotic formulas (\ref{JnearRa}) are then a direct consequence of Lemma \ref{Jcontinuouslemma} and Lemma \ref{Jxycontinuouslemma}. 
Equation (\ref{JnearRb}) follows from (\ref{Jatplusinfinity}).
\end{proof}

In the above lemmas, we have established several properties of the function $J$. We next turn to the analysis of the functions $\mathcal{M}$ and $P$ defined in (\ref{Mdef}) and (\ref{Pdef}), respectively.

\begin{lemma}
The function $\mathcal{M}(z, \xi)$  defined in (\ref{Mdef}) satisfies the following estimates:
\begin{subequations}
\begin{align}\label{Meverywhere}
 |\mathcal{M}(z, \xi)| \leq &\;	C y^{\alpha -2}|z|^{1-\alpha}, \qquad z \in \mathbb{H}, \  \xi > 0,
	\\ \label{Matplusinfinity}
|\mathcal{M}(z, \xi)| \leq &\; C y^{3\alpha - 3}|z|^{1-\alpha}|z - \xi|^{1- \alpha} |x|^{-\frac{\alpha}{2}}|x - \xi|^{-\frac{\alpha}{2}}, \quad x > \xi, \  y > 0, \  \xi > 0,
	\\ \nonumber
 |\re \mathcal{M}(z, \xi)| \leq &\; C y^{\alpha -1}|z|^{-\alpha}|z - \xi|^{- \alpha}\Big[(|x||x - \xi| + y^2)|z|^{\alpha -2} 
	\\ \label{reMeverywhere}
& + (|x| + |\xi - x|)|z - \xi|^{\alpha -1}\Big], \qquad  |z| \geq 2\xi, \  z \in \mathbb{H}, \  \xi > 0,
\end{align}
\end{subequations}
where $z = x + iy$. In particular, for each fixed $\xi > 0$, 
\begin{subequations}\label{Minfinity}
\begin{align}\label{Minfinitya}
 \mathcal{M}(z, \xi) & = O(|x|^{1-\alpha}), \qquad  x \to \pm \infty,  \  y > 0,
	\\\label{Minfinityb}
 \re \mathcal{M}(z, \xi) & = O(|x|^{-\alpha}), \qquad 	x \to \pm \infty, \  y > 0,
\end{align}
\end{subequations}
where the error terms are uniform with respect to $y$ in compact subsets of $(0, \infty)$.
\end{lemma}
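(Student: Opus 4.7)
The plan is to first rewrite the prefactor in the definition (\ref{Mdef}) of $\mathcal{M}$ in a form that isolates the dependence on $J$. For $z\in\mathbb{H}$ and $\xi\in\R$, both $z$ and $z-\xi$ have arguments in $(0,\pi)$, so with principal branches one checks directly that $z^{-\alpha/2}\bar z^{1-\alpha/2}=\bar z\,|z|^{-\alpha}$ and $(z-\xi)^{-\alpha/2}(\bar z-\xi)^{1-\alpha/2}=(\bar z-\xi)\,|z-\xi|^{-\alpha}$. Substituting these into (\ref{Mdef}) gives the clean identity
\begin{equation*}
\mathcal{M}(z,\xi)=y^{\alpha-2}\,\bar z(\bar z-\xi)\,|z|^{-\alpha}|z-\xi|^{-\alpha}\,J(z,\xi),\qquad z\in\mathbb{H},\;\xi>0.
\end{equation*}
Taking absolute values and using $|\bar z(\bar z-\xi)|=|z||z-\xi|$, one obtains $|\mathcal{M}(z,\xi)|=y^{\alpha-2}|z|^{1-\alpha}|z-\xi|^{1-\alpha}|J(z,\xi)|$, and then (\ref{Meverywhere}) and (\ref{Matplusinfinity}) follow at once by inserting (\ref{Jeverywhere}) and (\ref{Jatplusinfinity}), respectively.

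The only genuinely delicate estimate is (\ref{reMeverywhere}), where one must gain a factor $y^{\alpha-1}$ in place of the naive $y^{\alpha-2}$. Since the scalar factor $y^{\alpha-2}|z|^{-\alpha}|z-\xi|^{-\alpha}$ in the representation above is real,
\begin{equation*}
\re\mathcal{M}(z,\xi)=y^{\alpha-2}|z|^{-\alpha}|z-\xi|^{-\alpha}\,\re\bigl[\bar z(\bar z-\xi)J(z,\xi)\bigr],
\end{equation*}
and expanding $\bar z(\bar z-\xi)=(x^2-\xi x-y^2)-iy(2x-\xi)$ yields
\begin{equation*}
\re\bigl[\bar z(\bar z-\xi)J\bigr]=(x^2-\xi x-y^2)\,\re J+y(2x-\xi)\,\im J.
\end{equation*}
The extra $y$ in the second term is explicit, while the extra $y$ in the first term is supplied by the bound $|\re J|\le Cy|z|^{\alpha-2}$ of (\ref{reJeverywhere})---this is precisely why the statement requires the hypothesis $|z|\ge 2\xi$. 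Combining this with the elementary inequalities $|x^2-\xi x-y^2|\le|x||x-\xi|+y^2$, $|2x-\xi|\le|x|+|x-\xi|$, and the crude bound $|\im J|\le|J|\le C|z-\xi|^{\alpha-1}$ coming from (\ref{Jeverywhere}), one obtains (\ref{reMeverywhere}) after multiplying through by the prefactor.

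Finally, the asymptotic formulas (\ref{Minfinitya}) and (\ref{Minfinityb}) follow from (\ref{Meverywhere}) and (\ref{reMeverywhere}) by inspection: for fixed $y,\xi>0$ the hypothesis $|z|\ge 2\xi$ holds once $|x|$ is sufficiently large, and $|z|,|z-\xi|\sim|x|$, so (\ref{Meverywhere}) gives $\mathcal{M}=O(|x|^{1-\alpha})$, while in (\ref{reMeverywhere}) the square bracket is of order $|x|^\alpha$ and the prefactor is of order $|x|^{-2\alpha}$, yielding $\re\mathcal{M}=O(|x|^{-\alpha})$, both uniformly for $y$ in compact subsets of $(0,\infty)$. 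The main obstacle---achieving the correct power of $y$ in (\ref{reMeverywhere})---is thus dissolved by the decomposition above, which channels the needed gain through $\re J$ and thereby exploits the improved estimate (\ref{reJeverywhere}).
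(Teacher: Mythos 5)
Your proposal is correct and follows essentially the same route as the paper: it rewrites $\mathcal{M}$ via the identity $\mathcal{M}=y^{\alpha-2}\bar z(\bar z-\xi)|z|^{-\alpha}|z-\xi|^{-\alpha}J$ (the paper's identity (\ref{reMidentity})), deduces (\ref{Meverywhere}) and (\ref{Matplusinfinity}) from (\ref{Jeverywhere}) and (\ref{Jatplusinfinity}), gets (\ref{reMeverywhere}) by combining (\ref{reJeverywhere}) for $\re J$ with the crude bound on $\im J$, and reads off (\ref{Minfinity}). The only difference is that you spell out the branch computation behind the identity, which the paper leaves implicit.
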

 \begin{proof}
The estimates (\ref{Meverywhere}) and (\ref{Matplusinfinity}) follow immediately from the definition of $\mathcal{M}$ together with the estimates (\ref{Jeverywhere}) and (\ref{Jatplusinfinity}) of Lemma \ref{Jestimateslemma}. 
The estimate (\ref{reMeverywhere}) follows by applying (\ref{Jeverywhere}) and (\ref{reJeverywhere}) to the identity
\begin{align}\nonumber
\re \mathcal{M}(z, \xi) = &\; 
y^{\alpha - 2} |z|^{-\alpha} |z- \xi|^{-\alpha} \re\big[\bar{z}(\bar{z} - \xi) J(z, \xi)\big]
	\\ \label{reMidentity}
= &\; y^{\alpha - 2} |z|^{-\alpha} |z- \xi|^{-\alpha} \big[(x^2 - x\xi - y^2)\re J(z, \xi) - y(\xi - 2x)\im J(z, \xi)\big].
\end{align}
 Finally, the asymptotic equations in (\ref{Minfinity}) are an immediate consequence of (\ref{Meverywhere}) and (\ref{reMeverywhere}).
\end{proof}

\begin{lemma}\label{M1yM2xclaim}
The function $\mathcal{M} = \mathcal{M}_1 + i \mathcal{M}_2$ satisfies
$$\partial_y\mathcal{M}_{1}(z, \xi) = - \partial_x \mathcal{M}_{2}(z, \xi), \qquad z \in \mathbb{H}, \  \xi > 0.$$
\end{lemma}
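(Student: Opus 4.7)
The plan is to recognize that the claimed identity $\partial_y\mathcal{M}_1 = -\partial_x \mathcal{M}_2$ is equivalent to the real $1$-form $\omega = \re(\mathcal{M}\,dz) = \mathcal{M}_1\,dx - \mathcal{M}_2\,dy$ being closed on $\mathbb{H}$, and that since $d(\mathcal{M}\,dz) = \partial_{\bar z}\mathcal{M}\,d\bar z \wedge dz = 2i\,\partial_{\bar z}\mathcal{M}\,dx\wedge dy$, taking real parts shows this in turn is equivalent to
\[
\partial_{\bar z}\mathcal{M}(z,\xi) \in \mathbb{R}\qquad\text{for every }(z,\xi) \in \mathbb{H}\times(0,\infty).
\]
This is the reformulation I would work with throughout.

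I would carry out this verification by substituting the Pochhammer representation (\ref{Jdef2}) of $J$, valid for all noninteger $\alpha > 1$, into the definition (\ref{Mdef}) of $\mathcal{M}$. Since the Pochhammer contour avoids all branch points of the integrand, one may differentiate under the integral sign with respect to $\bar z$. Applying Leibniz to the prefactor
\[
y^{\alpha-2}\,z^{-\alpha/2}(z-\xi)^{-\alpha/2}\,\bar z^{\,1-\alpha/2}(\bar z - \xi)^{1-\alpha/2},
\]
using $y = (z-\bar z)/(2i)$, and combining this with an integration by parts in $u$ based on the identity $\partial_{\bar z}[(u-\bar z)^{\alpha-2}] = -\partial_u[(u-\bar z)^{\alpha-2}]$, one obtains an explicit representation of $\partial_{\bar z}\mathcal{M}$ as a linear combination of Pochhammer integrals of the same general shape as $J$.

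These integrals can then be brought into a symmetric form by means of the algebraic identities
\[
(u-z)^2 - (u-\bar z)^2 = -4iy(u-x), \qquad (u-z)^2 + (u-\bar z)^2 = 2\bigl[(u-x)^2 - y^2\bigr],
\]
which are the same identities that were used to derive (\ref{reJrepresentation}) (and its companion formula for $\im J$) in the proof of Lemma \ref{Jestimateslemma}. Applying next the Schwarz-reflection identity $\overline{\int_\gamma g(u)\,du} = -\int_\gamma \overline{g(\bar u)}\,du$ exhibits, after tracking the phases acquired on each of the four loops of the Pochhammer contour, the resulting expression for $\partial_{\bar z}\mathcal{M}$ as being equal to its own complex conjugate; the integer-$\alpha$ case follows at the end by continuity of $\mathcal{M}$ in $\alpha$.

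The principal obstacle will be the careful bookkeeping of the phase factors $e^{2\pi i k \alpha}$ picked up on the four loops of the Pochhammer contour during both the differentiation under the integral sign and the Schwarz reflection, and verifying that they recombine in exactly the way required for the final sum to land in $\mathbb{R}$. A conceptually cleaner variant of the argument would be to exhibit directly a real-valued function $\Phi(z,\xi)$ with $\mathcal{M} = 2\,\partial_z\Phi$, which would give the claim at once; constructing such a $\Phi$ explicitly, however, seems to pass through the same circle of branch-tracking computations.
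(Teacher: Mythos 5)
Your reformulation---the claim is equivalent to $\im \partial_{\bar z}\mathcal{M} = 0$---is exactly the paper's starting point, and your toolkit (differentiation of a contour-integral representation, integration by parts in $u$, the identity $(u-z)^2-(u-\bar z)^2=-4iy(u-x)$, and the Schwarz-reflection identity (\ref{intschwartz})) is the same as in the paper's proof. Working from the Pochhammer representation (\ref{Jdef2}) instead of the collapsed contour (\ref{Jdef}) is a legitimate variant: it removes all endpoint-convergence issues when differentiating under the integral sign and makes integrals of exact $u$-derivatives vanish automatically, at the cost of a continuity-in-$\alpha$ argument for integer $\alpha$ (the paper avoids the case split by doing one preliminary integration by parts, as in (\ref{mathcalMX}), which makes the collapsed-contour computation valid for every $\alpha>1$).

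There is, however, a genuine gap at the decisive step. You assert that after Schwarz reflection the expression for $\partial_{\bar z}\mathcal{M}$ is ``equal to its own complex conjugate'' up to bookkeeping of the loop phases, and you identify that bookkeeping as the principal obstacle. But the phases are not where the difficulty lies: on the Pochhammer contour they recombine into the factor $(1-e^{2\pi i \alpha})^{2}$ and cancel against the normalization in (\ref{Jdef2}), so your computation reduces to the same collapsed-contour integrals the paper works with. Those integrals are not termwise self-conjugate: by (\ref{intschwartz}), conjugation interchanges the exponents of $(u-z)$ and $(u-\bar z)$, so $\partial_{\bar z}\mathcal{M} - \overline{\partial_{\bar z}\mathcal{M}}$ is a nontrivial linear combination of eight distinct integrals (the paper's (\ref{longbarpartialM})). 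Its vanishing is not a symmetry or phase statement; in the paper it requires further integrations by parts and the recognition that the whole combination collapses to $2\alpha y\int_{\bar z}^{z}\frac{d}{du}\big[(u-z)^{\alpha}(u-\bar z)^{\alpha}u^{-\alpha/2}(u-\xi)^{-\alpha/2}\big(\tfrac{2x-\xi}{u-z}-\tfrac{x-\xi}{u}-\tfrac{x}{u-\xi}\big)\big]\,du$, which vanishes by the fundamental theorem of calculus (and would vanish automatically over your Pochhammer contour). Your symmetrization via $(u-z)^2\pm(u-\bar z)^2$ only separates manifestly real from manifestly imaginary pieces; killing the wrong-parity pieces still requires precisely such an integration-by-parts identity, which your plan neither exhibits nor replaces, so as it stands the argument does not close.
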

\begin{proof}
Since the statement only involves derivatives with respect to $x$ and $y$, we can assume that $\xi > 0$ is fixed. We need to prove that $\im \bar{\partial} \mathcal{M} = 0$. 
In terms of the function $W(u)$ defined in (\ref{Wdef}) we can write
\begin{align*}
\mathcal{M}(z, \xi) = &\; y^{\alpha - 2}z^{-\frac{\alpha}{2}}(z - \xi)^{-\frac{\alpha}{2}}\bar{z}^{1-\frac{\alpha}{2}}(\bar{z} - \xi)^{1-\frac{\alpha}{2}}
 \int_{\bar{z}}^z (u-z)^\alpha (u- \bar{z})^{\alpha - 2} W(u) du.
\end{align*}
An integration by parts gives
\begin{align}\nonumber
& \mathcal{M}(z, \xi) = -y^{\alpha - 2}z^{-\frac{\alpha}{2}}(z - \xi)^{-\frac{\alpha}{2}}\bar{z}^{1-\frac{\alpha}{2}}(\bar{z} - \xi)^{1-\frac{\alpha}{2}} \frac{1}{\alpha - 1} 
	\\ \label{mathcalMX}
& \times \bigg(\alpha \int_{\bar{z}}^z (u-z)^{\alpha -1} (u- \bar{z})^{\alpha - 1} W(u) du
+ \int_{\bar{z}}^z (u-z)^{\alpha } (u- \bar{z})^{\alpha - 1} W'(u) du\bigg).
\end{align}
Differentiating with respect to $\bar{z}$, we find
\begin{align}\nonumber
& \bar{\partial} \mathcal{M}(z, \xi) = \bigg(\frac{i}{2}\frac{\alpha -2}{y} + \frac{1 - \frac{\alpha}{2}}{\bar{z}} + \frac{1 - \frac{\alpha}{2}}{\bar{z} - \xi}\bigg)\mathcal{M}(z, \xi) 
	\\\nonumber
& + y^{\alpha - 2}z^{-\frac{\alpha}{2}}(z - \xi)^{-\frac{\alpha}{2}}\bar{z}^{1-\frac{\alpha}{2}}(\bar{z} - \xi)^{1-\frac{\alpha}{2}}
	\\  \label{barpartialcalM}
&\times \bigg(\alpha \int_{\bar{z}}^z (u-z)^{\alpha -1} (u- \bar{z})^{\alpha - 2} W(u) du
+ \int_{\bar{z}}^z (u-z)^{\alpha } (u- \bar{z})^{\alpha - 2} W'(u) du\bigg).
\end{align}
The integrals in (\ref{barpartialcalM}) are convergent at the endpoints $z$ and $\bar{z}$ because $\alpha > 1$. 
Substituting the expression (\ref{mathcalMX}) for $\mathcal{M}$ into (\ref{barpartialcalM}) and simplifying, we obtain 
\begin{align}\nonumber
& \bar{\partial} \mathcal{M}(z, \xi) = R(z, \xi)\bigg\{ -i(\alpha -2) \big[x(x - \xi) + y^2\big]
	\\\nonumber
&\quad \times \bigg(\alpha \int_{\bar{z}}^z (u-z)^{\alpha -1} (u- \bar{z})^{\alpha - 1} W(u) du + \int_{\bar{z}}^z (u-z)^{\alpha } (u- \bar{z})^{\alpha - 1} W'(u) du\bigg)
	\\\nonumber
&+ 2(\alpha -1)y\bar{z}(\bar{z} - \xi)
	\\ \label{barpartialcalM2}
& \quad \times \bigg(\alpha \int_{\bar{z}}^z (u-z)^{\alpha -1} (u- \bar{z})^{\alpha - 2} W(u) du + \int_{\bar{z}}^z (u-z)^{\alpha } (u- \bar{z})^{\alpha - 2} W'(u) du\bigg)\bigg\},
\end{align}
where  the real-valued function $R(z, \xi)$ is given by
$$R(z, \xi) = \frac{1}{2(\alpha -1)} y^{\alpha -3} |z|^{-\alpha}|z - \xi|^{-\alpha}.$$

To establish the identity $\im \bar{\partial} \mathcal{M} = 0$ it is enough to show that
\begin{align}\label{imbarpartialM}
\frac{\bar{\partial} \mathcal{M} - \overline{\bar{\partial} \mathcal{M}}}{R} = 0.
\end{align}
Using (\ref{intschwartz}) and (\ref{barpartialcalM2}), we can write the left-hand side of (\ref{imbarpartialM}) as
\begin{align}\nonumber
& -i(\alpha -2) \big[x(x - \xi) + y^2\big]
	\\\nonumber
&\quad \times \bigg(\alpha \int_{\bar{z}}^z (u-z)^{\alpha -1} (u- \bar{z})^{\alpha - 1} W(u) du
+ \int_{\bar{z}}^z (u-z)^{\alpha } (u- \bar{z})^{\alpha - 1} W'(u) du\bigg)
	\\\nonumber
& + 2(\alpha -1)y\bar{z}(\bar{z} - \xi)
	\\\nonumber
& \quad\times \bigg(\alpha \int_{\bar{z}}^z (u-z)^{\alpha -1} (u- \bar{z})^{\alpha - 2} W(u) du
+ \int_{\bar{z}}^z (u-z)^{\alpha } (u- \bar{z})^{\alpha - 2} W'(u) du\bigg)
	\\\nonumber
& -i(\alpha -2) \big[x(x - \xi) + y^2\big]
	\\\nonumber
& \quad\times \bigg(-\alpha \int_{\bar{z}}^z (u-z)^{\alpha -1} (u- \bar{z})^{\alpha - 1} W(u) du
- \int_{\bar{z}}^z (u-z)^{\alpha -1} (u- \bar{z})^{\alpha} W'(u) du\bigg)
	\\\nonumber
& - 2(\alpha -1)yz(z - \xi)
	\\\label{longbarpartialM}
& \quad\times \bigg(-\alpha \int_{\bar{z}}^z (u-z)^{\alpha -2} (u- \bar{z})^{\alpha - 1} W(u) du
- \int_{\bar{z}}^z (u-z)^{\alpha -2} (u- \bar{z})^{\alpha } W'(u) du\bigg).	
\end{align}
The right-hand side of (\ref{longbarpartialM}) involves eight integrals. Integrating by parts in the third, fourth, and eighth of these integrals and using that the first and fifth integrals cancel, we see that the expression in (\ref{longbarpartialM}) can be written as
\begin{align}\nonumber
& i(\alpha -2) \big[x(x - \xi) + y^2\big]
	\\\nonumber
& \quad \times \bigg(-\int_{\bar{z}}^z (u-z)^{\alpha} (u- \bar{z})^{\alpha - 1} W'(u) du + \int_{\bar{z}}^z (u-z)^{\alpha-1} (u- \bar{z})^{\alpha} W'(u) du \bigg) 
	\\\nonumber
&+ 2(\alpha-1)y\bar{z}(\bar{z} - \xi)
	\\\nonumber
& \quad \times \bigg(- \alpha \int_{\bar{z}}^z (u-z)^{\alpha-2} (u- \bar{z})^{\alpha - 1} W(u) du - \alpha \int_{\bar{z}}^z (u-z)^{\alpha-1} \frac{(u- \bar{z})^{\alpha - 1}}{\alpha -1} W'(u) du 
	\\\nonumber
&\quad\qquad - \alpha \int_{\bar{z}}^z (u-z)^{\alpha-1} \frac{(u- \bar{z})^{\alpha - 1}}{\alpha -1} W'(u) du 
- \int_{\bar{z}}^z (u-z)^{\alpha} \frac{(u- \bar{z})^{\alpha - 1}}{\alpha -1} W''(u) du\bigg) 
	\\\nonumber
& - 2(\alpha-1)yz(z - \xi)
	\\\nonumber
& \quad \times \bigg( - \alpha \int_{\bar{z}}^z (u-z)^{\alpha-2} (u- \bar{z})^{\alpha - 1} W(u) du 
	\\ \label{longbarpartialM2}
&\quad\qquad + \alpha \int_{\bar{z}}^z \frac{(u-z)^{\alpha-1}}{\alpha -1} (u- \bar{z})^{\alpha - 1} W'(u) du 
+ \int_{\bar{z}}^z \frac{(u-z)^{\alpha-1}}{\alpha -1} (u- \bar{z})^{\alpha} W''(u) du\bigg). 
\end{align}
A long but straightforward computation shows that the expression in (\ref{longbarpartialM2}) equals
\begin{align}\nonumber
2\alpha y \int_{\bar{z}}^z \frac{d}{du}\bigg[&(u-z)^{\alpha} (u- \bar{z})^{\alpha}  u^{-\frac{\alpha}{2}}(u - \xi)^{-\frac{\alpha}{2}} 
 \bigg(\frac{2x - \xi}{u - z}-\frac{x - \xi}{u} - \frac{x}{u - \xi}\bigg)\bigg] du.
\end{align}
Since $\alpha > 1$, the fundamental theorem of calculus implies that the integral vanishes. 
This proves (\ref{imbarpartialM}) and completes the proof of the lemma. 
\end{proof}

\begin{lemma} \label{App-lemma-P}
The function $P(z, \xi)$ defined in (\ref{Pdef}) is a well-defined smooth function of $(z, \xi) \in \mathbb{H} \times (0, \infty)$.
\end{lemma}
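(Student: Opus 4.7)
The plan is to establish smoothness of $P$ in three stages: well-definedness and continuity via the decay estimate (\ref{reMeverywhere}); identification of $\partial_x P$ and $\partial_y P$ using Lemma \ref{M1yM2xclaim}; and differentiation under the integral sign in the $\xi$-variable, followed by induction on the order. Fix a compact set $K \subset \mathbb{H} \times (0, \infty)$. For $(z, \xi) \in K$ and $x'$ sufficiently large depending on $K$, estimate (\ref{reMeverywhere}) gives $|\re \mathcal{M}(x'+iy, \xi)| \leq C_K (x')^{-\alpha}$ with $C_K$ independent of $(z, \xi) \in K$; since $\alpha > 1$ this yields absolute and uniform convergence of the integral on $K$, and hence continuity of $P$ on $\mathbb{H} \times (0, \infty)$.

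The $x$-derivative is immediate from the variable endpoint: $\partial_x P(z, \xi) = -\frac{1}{c_\alpha}\re \mathcal{M}(z, \xi)$, which is smooth. For $\partial_y P$, I introduce the truncation $P_R(z, \xi) := \frac{1}{c_\alpha}\int_x^R \re \mathcal{M}(x'+iy, \xi)\, dx'$, which is manifestly smooth for each fixed $R > x$. Lemma \ref{M1yM2xclaim} combined with the fundamental theorem of calculus gives
\begin{equation*}
\partial_y P_R(z, \xi) = -\tfrac{1}{c_\alpha}\int_x^R \partial_{x'}\im \mathcal{M}(x'+iy, \xi)\, dx' = \tfrac{1}{c_\alpha}\bigl[\im \mathcal{M}(z, \xi) - \im \mathcal{M}(R+iy, \xi)\bigr].
\end{equation*}
By (\ref{Minfinitya}), $\im \mathcal{M}(R+iy, \xi) \to 0$ uniformly on compact subsets as $R \to \infty$. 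Since $P_R \to P$ and $\partial_y P_R \to \frac{1}{c_\alpha}\im \mathcal{M}$ both converge locally uniformly, a standard convergence theorem for derivatives yields $\partial_y P(z, \xi) = \frac{1}{c_\alpha}\im \mathcal{M}(z, \xi)$, which is smooth on $\mathbb{H} \times (0, \infty)$.

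Finally, $\partial_\xi P$ is treated by dominated convergence. The function $\mathcal{M}$ is smooth in $\xi$ by Lemma \ref{Jsmoothlemma}, and repeating the calculations leading to (\ref{Meverywhere})--(\ref{reMeverywhere}) with $J$ replaced by $\partial_\xi^k J$ --- which, via the Pochhammer representation (\ref{Jdef2}), is again an integral of the same form with an extra factor of $(u-\xi)^{-1}$ per derivative --- yields bounds $|\partial_\xi^k \re \mathcal{M}(x'+iy, \xi)| \leq C_K (x')^{-\alpha}$ uniformly on compact subsets of $\mathbb{H} \times (0, \infty)$. Differentiation under the integral is thereby justified, giving $\partial_\xi P(z, \xi) = \frac{1}{c_\alpha}\int_x^\infty \partial_\xi \re \mathcal{M}(x'+iy, \xi)\, dx'$, which is continuous in $(z, \xi)$. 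Higher-order smoothness then follows by induction on the total order: $x$- and $y$-derivatives reduce, by the two identities above, to expressions involving $\mathcal{M}$ and its partial derivatives, while iterated $\xi$-derivatives are controlled by repeating the dominated-convergence step. The main obstacle I anticipate is verifying the uniform $x$-decay of $\partial_\xi^k \mathcal{M}$ for arbitrary $k$; I expect it can be handled cleanly by exploiting the real-part identity (\ref{reMidentity}) together with the Pochhammer-contour estimates already developed in Section \ref{schrammsec}.
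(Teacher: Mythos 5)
Your argument is correct and follows essentially the paper's route: both rest on smoothness of $\mathcal{M}$ (Lemma \ref{Jsmoothlemma}), the decay estimates (\ref{Minfinity})/(\ref{reMeverywhere}), and the identity $\partial_y \mathcal{M}_1 = -\partial_x \mathcal{M}_2$ of Lemma \ref{M1yM2xclaim}, which the paper packages as path-independence to rewrite $P$ as the contour integral (\ref{Pcontourdef}) and which your truncation computation of $\partial_y P$ simply unwinds into the explicit formulas $\partial_x P = -\tfrac{1}{c_\alpha}\mathcal{M}_1$, $\partial_y P = \tfrac{1}{c_\alpha}\mathcal{M}_2$. Your explicit handling of the $\xi$-derivatives is sound --- the symmetrization identity (\ref{reJrepresentation}) and the estimates of Lemma \ref{Jestimateslemma} carry over to $\partial_\xi^k J$ with the extra factors $(u-\xi)^{-k}$ only improving decay for large $|x'|$ --- and is in fact more detailed than the paper, which infers joint smoothness directly from (\ref{Pcontourdef}) and the smoothness of $\mathcal{M}$.
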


\begin{proof}
By Lemma \ref{Jsmoothlemma}, $\mathcal{M} = \mathcal{M}_1 + i\mathcal{M}_2$ is a smooth function of $(z, \xi) \in \mathbb{H} \times (0, \infty)$. Moreover, by equation (\ref{Minfinity}), there exists an $\epsilon > 0$ such that $\mathcal{M}_1(x + iy) = O(|x|^{-1-\epsilon})$ and $\mathcal{M}_2(x + iy) = O(|x|^{-\epsilon})$
as $|x| \to \infty$ uniformly with respect to $y$ in compact subsets of   $(0, \infty)$. It follows that the integral in the definition (\ref{Pdef}) of $P$ converges. 
Furthermore, Lemma~\ref{M1yM2xclaim} shows that the integral $\int (\mathcal{M}_1 dx - \mathcal{M}_2 dy)$ is independent of the path. We infer that $P(z, \xi)$ can be written as
\begin{align} \nonumber
P(z, \xi) & = -\frac{1}{c_\alpha} \int_\infty^z \Big(\mathcal{M}_1(z', \xi) dx' - \mathcal{M}_2(z', \xi) dy'\Big)
	\\ \label{Pcontourdef}
& = -\frac{1}{c_\alpha}  \int_\infty^z \re\big[\mathcal{M}(z', \xi) dz'\big], \qquad z \in \mathbb{H},
\end{align}
where $z' = x' + iy'$ and the contour of integration runs from $\infty + ic$, $c > 0$, to $z$. 
Since $\mathcal{M}(z, \xi)$ is a smooth function of $(z, \xi) \in \mathbb{H} \times (0, \infty)$, so is $P(z, \xi)$.
\end{proof}

\begin{lemma}\label{MnearRclaim}
For each $\xi > 0$, the function $\mathcal{M}(z, \xi)$   satisfies
\begin{align}\label{MnearR}
\begin{cases} 
\mathcal{M}(x+iy, \xi) = O(y^{\alpha -2}), \\ 
\re \mathcal{M}(x+iy, \xi) = O(y^{\alpha -1}), 
\end{cases} \qquad y \downarrow 0,\  x \in \R \setminus \{0, \xi\},
\end{align}
and
$$\mathcal{M}(x+iy, \xi) = O(y^{3\alpha -3}), \qquad y \downarrow 0,\  x > \xi,$$
where the error terms are uniform with respect to $x$ in compact subsets of $\R \setminus \{0, \xi\}$.
\end{lemma}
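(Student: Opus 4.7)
The plan is to read off the asymptotics of $\mathcal{M}$ directly from those of $J$, using the factored form
\begin{align*}
\mathcal{M}(z, \xi) = y^{\alpha - 2} z^{-\frac{\alpha}{2}}(z- \xi)^{-\frac{\alpha}{2}}\bar{z}^{1-\frac{\alpha}{2}}(\bar{z} - \xi)^{1-\frac{\alpha}{2}} J(z, \xi)
\end{align*}
from the definition \eqref{Mdef}. First I fix $x_0 \in \R \setminus \{0, \xi\}$ and a small compact neighbourhood of $x_0$ bounded away from $\{0,\xi\}$. On such a neighbourhood, the prefactor $z^{-\alpha/2}(z-\xi)^{-\alpha/2}\bar{z}^{1-\alpha/2}(\bar{z}-\xi)^{1-\alpha/2}$ is a continuous function of $(x,y)$ which extends continuously up to $y = 0$ and is therefore bounded. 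By Lemma \ref{Jcontinuouslemma}, $J(x+iy,\xi) = O(1)$ uniformly for $x$ in the neighbourhood as $y \downarrow 0$, so the first bound in \eqref{MnearR} follows immediately.

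For the real-part estimate, I use the identity \eqref{reMidentity}, which rewrites
\begin{align*}
\re \mathcal{M}(z,\xi) = y^{\alpha-2}|z|^{-\alpha}|z-\xi|^{-\alpha}\big[(x^2 - x\xi - y^2)\re J(z,\xi) - y(\xi - 2x)\im J(z,\xi)\big].
\end{align*}
On the neighbourhood above, the factor $|z|^{-\alpha}|z-\xi|^{-\alpha}$ and the real polynomials in $x,y,\xi$ are bounded. By Lemma \ref{reJzerolemma}, $\re J(x+iy,\xi) = O(y)$ and $\im J(x+iy,\xi) = O(1)$ as $y \downarrow 0$, uniformly in $x$ on compact subsets of $\R \setminus \{0,\xi\}$. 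The bracketed expression is thus of order $y$, yielding $\re \mathcal{M}(x+iy,\xi) = O(y^{\alpha-1})$.

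Finally, for $x > \xi$, Lemma \ref{reJzerolemma} (specifically \eqref{JnearRb}) upgrades the bound on $J$ itself to $J(x+iy,\xi) = O(y^{2\alpha-1})$, uniformly on compact subsets of $(\xi, \infty)$. Combined with the boundedness of the prefactor (now also bounded on such sets since we are bounded away from $\xi$), this gives $\mathcal{M}(x+iy,\xi) = O(y^{\alpha-2})\cdot O(y^{2\alpha-1}) = O(y^{3\alpha-3})$, completing the proof.

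There is essentially no obstacle here beyond careful bookkeeping: the lemma is a direct corollary of Lemma \ref{reJzerolemma}, the continuity statements in Lemmas \ref{Jcontinuouslemma}--\ref{Jxycontinuouslemma}, and the algebraic identity \eqref{reMidentity}. The only mild point to verify is uniformity in $x$ over compact subsets, but this follows because all the estimates on $J$ and $\re J$ cited from Lemma \ref{reJzerolemma} are themselves stated uniformly on such compacta.
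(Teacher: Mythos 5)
Your proof is correct and follows essentially the same route as the paper, whose argument is precisely that the lemma follows from Lemma \ref{reJzerolemma} together with the definition (\ref{Mdef}) of $\mathcal{M}$ (and, for the real part, the identity (\ref{reMidentity})). Your write-up simply makes explicit the boundedness of the prefactor and the uniformity on compact subsets, which is exactly the bookkeeping the paper leaves implicit.
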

\begin{proof}
This follows from Lemma \ref{reJzerolemma} and the definition (\ref{Mdef}) of $\mathcal{M}$. 
\end{proof}

\begin{lemma}\label{Pcontinuousclaim}
For each $\xi > 0$, the function $z \mapsto P(z, \xi)$ defined in (\ref{Pdef}) has a continuous extension to $\bar{\mathbb{H}} \setminus \{0\}$ which satisfies
\begin{align}\label{PonR}
P(x, \xi) = \begin{cases} 1, &\quad  -\infty < x < 0, \\
0, &\qquad 0 < x < \infty,
\end{cases} \quad  \xi > 0.
\end{align}
\end{lemma}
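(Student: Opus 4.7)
The strategy is to use the path-independent contour representation
\[
P(z,\xi) = -\frac{1}{c_\alpha}\int_\infty^z \re[\mathcal{M}(z',\xi)\,dz']
\]
established in the proof of Lemma \ref{App-lemma-P}, and to compute $\lim_{y\downarrow 0} P(x_0+iy,\xi)$ for each $x_0 \in \mathbb{R}\setminus\{0\}$ by deforming the integration path to approach the real axis, avoiding the singular points $0$ and $\xi$ via small upper semicircles of radius $\delta > 0$, then letting $\delta \to 0$.

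To establish the continuous extension to $\bar{\mathbb{H}}\setminus\{0,\xi\}$, I first split the path as a horizontal segment from $\infty$ to $x_0+iy_0$ at a fixed height $y_0 > 0$, followed by a vertical segment from $x_0+iy_0$ down to $x_0+iy$. The horizontal piece is a smooth function of $x_0$ independent of $y$, while the vertical piece equals (up to sign) $\frac{1}{c_\alpha}\int_y^{y_0}\im\mathcal{M}(x_0+iy',\xi)\,dy'$, whose integrand is $O((y')^{\alpha-2})$ uniformly on compact subsets of $\mathbb{R}\setminus\{0,\xi\}$ by Lemma \ref{MnearRclaim}; since $\alpha - 2 > -1$ the integral converges as $y \downarrow 0$, giving the desired continuous extension at $x_0 \ne 0,\xi$. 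Continuity at $x_0 = \xi$ is handled separately: the bound (\ref{Matplusinfinity}) forces $\mathcal{M}(x'+iy,\xi) = O(y^{3\alpha-3})$ for $x' > \xi$, so a direct estimate of $\int_\xi^\infty \re\mathcal{M}(x'+iy,\xi)\,dx'$ shows $P(\xi+iy,\xi) \to 0$.

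To identify the boundary values, deform the contour within $\mathbb{H}$ to one which hugs the real line outside neighborhoods of $0$ and $\xi$, with upper semicircular detours of radius $\delta$ around those of the singular points $\{0,\xi\}$ that lie between $x_0$ and $+\infty$, and closed at a large radius $R$ by a vertical segment whose contribution is $O(R^{1-\alpha}) \to 0$ via (\ref{Meverywhere}). On the real-axis portions $dz'$ is real and $\re\mathcal{M}(x',0) = 0$ by (\ref{reJxzero}) together with Lemma \ref{MnearRclaim}, so they contribute nothing. On the semicircle around $\xi$, parametrize $z' = \xi + \delta e^{i\theta}$, $\theta \in (0,\pi)$: applying (\ref{Meverywhere}) gives $|\mathcal{M}(z',\xi)\,dz'| \leq C\delta^{\alpha-1}\xi^{1-\alpha}\sin^{\alpha-2}\theta\,d\theta$, and since $\int_0^\pi \sin^{\alpha-2}\theta\,d\theta$ converges for $\alpha > 1$ this piece is $O(\delta^{\alpha-1}) \to 0$. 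Thus for $x_0 > 0$ (no origin-semicircle required), letting $\delta \to 0$ gives $P(x_0,\xi) = 0$. For $x_0 < 0$, the contour additionally traverses the semicircle $z' = \delta e^{i\theta}$, $\theta: 0 \to \pi$, around the origin. Here the powers of $\delta$ in $y^{\alpha-2}\,z'^{-\alpha/2}\,\bar{z}'^{1-\alpha/2}\,dz'$ telescope to $\delta^0$, while $(z'-\xi)^{-\alpha/2}$ and $(\bar{z}'-\xi)^{1-\alpha/2}$ tend to limits fixed by the principal branch (arg $\to \pi$ from above, arg $\to -\pi$ from below, respectively) and $J(z',\xi) \to J(0,\xi)$, producing
\[
\lim_{\delta\to 0}\int_{\text{semi at }0} \re[\mathcal{M}(z',\xi)\,dz'] = \xi^{1-\alpha}\,\im J(0,\xi)\,\int_0^\pi \sin^{\alpha-2}\theta\,d\theta.
\]
Evaluating $J(0,\xi)$ from (\ref{Jdefextension}) by the substitution $u = \xi t$ and collapsing the loop to $[0,1]$ reduces the contour to a Beta-function integral $(1-e^{-i\pi\alpha})e^{-i\pi\alpha/2} B(\tfrac{3\alpha}{2}-1,1-\tfrac{\alpha}{2})$; using $\int_0^\pi\sin^{\alpha-2}\theta\,d\theta = \sqrt{\pi}\,\Gamma(\tfrac{\alpha-1}{2})/\Gamma(\tfrac{\alpha}{2})$ and applying the Gamma reflection and Legendre duplication formulas, the product simplifies to $-c_\alpha$, yielding $P(x_0,\xi) = 1$.

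\textbf{Main obstacle.} The hardest step is the explicit computation of the semicircular contribution at the origin: pinning down the correct principal branches of $(z'-\xi)^{-\alpha/2}$ and $(\bar{z}'-\xi)^{1-\alpha/2}$ in the limit $\delta \to 0$, then reconciling the resulting trigonometric and Gamma-function expression with the specific normalization in (\ref{calphadef}). The remaining ingredients --- continuity at $x_0 = \xi$, vanishing of the large-$R$ closing segment, and vanishing of the $\xi$-semicircle --- are more routine estimates based on (\ref{Meverywhere}), (\ref{Matplusinfinity}), and (\ref{reMeverywhere}) together with Lemma \ref{MnearRclaim}.
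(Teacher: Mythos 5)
Your overall strategy is essentially the paper's: use the path-independent representation (\ref{Pcontourdef}), kill the real-axis contributions because $\re\mathcal{M}$ vanishes there (Lemma \ref{MnearRclaim}), and reduce the jump across the origin to the limit of the semicircular integral, which equals $\xi^{1-\alpha}\,\im J(0,\xi)\int_0^\pi\sin^{\alpha-2}\theta\,d\theta$ and must be matched against $-c_\alpha$; your branch bookkeeping on the origin semicircle and the $O(\delta^{\alpha-1})$, $O(R^{1-\alpha})$ estimates agree with the paper. However, the step you single out as the crux --- the evaluation of $J(0,\xi)$ --- does not work as you state it. First, collapsing the loop $L_0$ of (\ref{Jdefextension}) onto $[0,1]$ is only legitimate for $1<\alpha<2$: the exponent of $(u-\xi)^{-\alpha/2}$ is $\leq -1$ when $\alpha\geq 2$, so the Beta integral $\int_0^1 t^{\frac{3\alpha}{2}-2}(1-t)^{-\frac{\alpha}{2}}dt$ diverges and the collapse is not allowed; you would need an analytic-continuation argument in $\alpha$, or better, do as the paper does and evaluate $J(0,\xi)$ from the circular parametrization (\ref{Jxiplusreitheta}) with $r=\xi$, $\theta=\pi$, which is valid for every $\alpha>1$. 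Second, your prefactor is off by a phase: with the branch of (\ref{Jdefextension}) inherited from (\ref{Jdef}) (loop from $0-i0$ counterclockwise around $\xi$ to $0+i0$, principal branch in $\C\setminus(-\infty,\xi)$), the collapse gives $\bigl(e^{i\pi\alpha/2}-e^{-i\pi\alpha/2}\bigr)B\bigl(\tfrac{3\alpha}{2}-1,1-\tfrac{\alpha}{2}\bigr)=2i\sin\bigl(\tfrac{\pi\alpha}{2}\bigr)B\bigl(\tfrac{3\alpha}{2}-1,1-\tfrac{\alpha}{2}\bigr)$, which is purely imaginary, consistent with $\re J(0,\xi)=0$ from (\ref{reJxzero}). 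Your factor $(1-e^{-i\pi\alpha})e^{-i\pi\alpha/2}$ has a nonzero real part and its imaginary part carries an extra $\cos(\pi\alpha)$, so the product does not simplify to $-c_\alpha$ but to $-c_\alpha\cos(\pi\alpha)$, and the conclusion $P(x_0,\xi)=1$ fails with your stated intermediate value. Since matching the normalization (\ref{calphadef}) exactly is the whole point of this step, this must be repaired (the paper's value $\im J(0,\xi)=\xi^{\alpha-1}\,2\pi\Gamma(\tfrac{3\alpha}{2}-1)/[\Gamma(\tfrac{\alpha}{2})\Gamma(\alpha)]$ is what the correct phase yields).

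A second, smaller gap concerns continuity of the extension at $z=\xi$. Showing only the vertical limit $P(\xi+iy,\xi)\to 0$ does not give a continuous extension at $\xi$ (one needs the limit along every approach in $\bar{\mathbb{H}}$), and in any case the bound (\ref{Matplusinfinity}) alone does not deliver even the vertical limit for $\alpha\geq 2$, because the factor $|x'-\xi|^{-\alpha/2}$ is not integrable at $x'=\xi$; you would have to split off a neighborhood of $\xi$ and use (\ref{Meverywhere}) there. The clean fix is the one the paper uses, and which you in effect already have in your $\xi$-semicircle estimate: since the boundary value on $(\xi,\infty)$ is $0$, integrate along the arc $\xi+re^{i\varphi}$ and use (\ref{Meverywhere}) to get $|P(\xi+re^{i\theta},\xi)|\leq C r^{\alpha-1}\int_0^\theta \sin^{\alpha-2}\varphi\,|\xi+re^{i\varphi}|^{1-\alpha}d\varphi\to 0$ uniformly in $\theta\in(0,\pi]$ as $r\downarrow 0$, which gives both continuity at $\xi$ and the vanishing of the boundary value on $(0,\xi)$. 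With these two repairs your argument coincides, in substance, with the paper's proof.
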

\begin{proof}
Fix $\xi > 0$. 
The expression (\ref{Pcontourdef}) for $P$ together with Lemma \ref{MnearRclaim} imply that there exist real constants $\{P_j\}_1^3$ such that the function
$$z \mapsto \begin{cases} P(z, \xi), & z \in \mathbb{H}, \\
P_1, & z \in (-\infty, 0), \\
P_2, & z \in (0, \xi), \\
P_3, & z \in (\xi, \infty),
\end{cases}$$
is continuous $\bar{\mathbb{H}} \setminus \{0, \xi\} \to \R$.
Letting $z$ approach $\infty + i0$ in (\ref{Pcontourdef}), we deduce that $P_3 = 0$.  
It follows that
$$P(\xi + re^{i\theta}, \xi) = -\frac{1}{c_\alpha} \re\bigg(ir \int_0^\theta  \mathcal{M}(\xi + re^{i\varphi}, \xi) e^{i\varphi}d\varphi\bigg)$$
for $r > 0$ and $\theta \in (0, \pi)$. In view of the estimate (\ref{Meverywhere}), this yields
\begin{align}\nonumber
|P(\xi + re^{i\theta}, \xi)| \leq &\; C r \int_0^\theta |\mathcal{M}(\xi + re^{i\varphi}, \xi) | d\varphi
	\\ \label{Patxi}
\leq &\; C r^{\alpha -1} \int_0^\theta (\sin^{\alpha -2}\varphi)|\xi + re^{i\varphi}|^{1-\alpha} d\varphi, \qquad r> 0, \  \xi > 0.
\end{align}
Letting $r \downarrow 0$, we infer that if we set $P(\xi, \xi) = 0$, then $P(z, \xi)$ is continuous at $z = \xi$. Moreover, taking the limit $r \downarrow 0$ in (\ref{Patxi}) with $\theta = \pi$, it follows that $P_2 = 0$.

It only remains to prove that $P_3 = 1$. This will follow if we can show that the normalization constant $c_\alpha$ defined in (\ref{calphadef}) satisfies
\begin{align} \label{Calphaexpression}
  c_\alpha = -\re \int_{S_r} \mathcal{M}(z, \xi) dz, \qquad r > 0,
\end{align}
where $S_r$ is a counterclockwise semicircle of radius $r$ centered at $0$:
$$S_r : \; re^{i\varphi}, \qquad 0 \leq \varphi \leq \pi.$$
Lemma \ref{M1yM2xclaim} and Lemma \ref{MnearRclaim} show that the right-hand side of (\ref{Calphaexpression}) is independent of $r > 0$. We can therefore evaluate it in the limit as $r \downarrow 0$. 
Recalling the definition of $\mathcal{M}$, we write
\begin{align*}
   \int_{S_r} \mathcal{M}(z, \xi) dz = \int_{0}^{\pi} f_{r}(\varphi) d\varphi,
\end{align*}
where
$$f_{r}(\varphi) = i(\sin^{\alpha -2}\varphi)
(re^{i\varphi} - \xi)^{-\frac{\alpha}{2}} (re^{-i\varphi} - \xi)^{1 - \frac{\alpha}{2}} J(re^{i\varphi}, \xi).$$
The function $J$ is bounded on each compact subset of $\bar{\mathbb{H}}$ by Lemma \ref{Jcontinuouslemma}. Hence $f_r(\varphi)$ obeys the estimate
$$|f_r(\varphi)|
\leq C \sin^{\alpha -2}\varphi, \qquad r < \xi/2, \  \varphi  \in [0, \pi].$$ 
Since the function $\sin^{\alpha -2}\varphi$ belongs to $L^1((0, \pi))$, dominated convergence yields
\begin{align*}
-\re \int_{S_r} \mathcal{M}(z, \xi) dz = & -\re \lim_{r \to 0}  \int_{0}^{\pi} f_{r}(\varphi) d\varphi
= -\re \int_{0}^{\pi}  \lim_{r \to 0}f_r(\varphi) d\varphi
	\\
=&  -\re \int_{0}^{\pi} i(\sin^{\alpha -2}\varphi)
(-\xi + i0)^{-\frac{\alpha}{2}} (-\xi-i0)^{1 - \frac{\alpha}{2}} J(0, \xi) d\varphi
	\\
 = & \; -(\im J(0, \xi)) \xi^{1 - \alpha} \int_0^{\pi} \sin^{\alpha -2}\varphi d\varphi.
 \end{align*}
But setting $r = \xi$ and $\theta = \pi$ in (\ref{Jxiplusreitheta}), we find that the value of $J(z,\xi)$ at $z = 0$ is given by
\begin{align*}
J(0, \xi) = i\xi^{\alpha -1} \int_{-\pi}^\pi (1 + e^{i\varphi})^{\frac{3\alpha}{2} - 2} e^{i\varphi(1-\frac{\alpha}{2})} d\varphi, \qquad \xi > 0.
\end{align*}
It follows that
\begin{align*}
-\re \int_{S_r} \mathcal{M}(z, \xi) dz = - \bigg(\int_{-\pi}^\pi (1 + e^{i\varphi})^{\frac{3\alpha}{2} - 2} e^{i\varphi(1-\frac{\alpha}{2})} d\varphi\bigg) \bigg(\int_0^{\pi} \sin^{\alpha -2}\varphi d\varphi\bigg).
\end{align*}
Since
$$\int_{-\pi}^\pi (1 + e^{i\varphi})^{\frac{3\alpha}{2} - 2} e^{i\varphi(1-\frac{\alpha}{2})} d\varphi
= \frac{2 \pi  \Gamma \left(\frac{3 \alpha }{2}-1\right)}{\Gamma \left(\frac{\alpha}{2}\right) \Gamma (\alpha )}, \qquad \int_0^{\pi} \sin^{\alpha -2}\varphi d\varphi = \frac{\sqrt{\pi}\Gamma(\frac{\alpha-1}{2})}{\Gamma(\frac{\alpha}{2})},$$
equation (\ref{Calphaexpression}) follows.
\end{proof}

\begin{remark}\upshape
The proof of Lemma~\ref{Pcontinuousclaim} shows that the constant $c_\alpha$ can be alternatively expressed as
$$c_\alpha =  \int_{-\infty}^\infty  \re \mathcal{M}(x, y, \xi) dx,$$
where the right-hand side is independent of the choice of $y > 0$ and $\xi > 0$.
\end{remark}

In view of Lemma \ref{App-lemma-P}, the next result completes the proof of Proposition \ref{schrammprop}.

\begin{lemma}
We have
\begin{subequations}
\begin{align}\label{Peverywherea}
&  |P(z, \xi)| \leq C  (\arg z)^{\alpha -1}, \qquad z \in \mathbb{H}, \  \xi > 0,
  	\\ \label{Peverywhereb}
 & |P(z, \xi) - 1| \leq C  (\pi - \arg z)^{\alpha -1}, \qquad z \in \mathbb{H}, \  \xi > 0.
\end{align}
\end{subequations}
\end{lemma}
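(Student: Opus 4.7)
The strategy is to express $P$ via a path integral from a boundary point where $P$ is known explicitly, evaluated along a circular arc centered at the origin. By Lemma~\ref{M1yM2xclaim} the $1$-form $\re(\mathcal{M}(z,\xi)\,dz) = \mathcal{M}_1\,dx - \mathcal{M}_2\,dy$ is closed on the simply connected domain $\mathbb{H}$, so the integral in (\ref{Pcontourdef}) is path-independent. Combining this with the continuity of $P$ on $\bar{\mathbb{H}} \setminus \{0\}$ and the boundary values $P(x,\xi) = 0$ for $x>0$ and $P(x,\xi) = 1$ for $x<0$ supplied by Lemma~\ref{Pcontinuousclaim}, I get, for $z = re^{i\theta}$ with $r>0$ and $\theta \in (0,\pi)$,
\begin{align*}
P(z, \xi) &= -\frac{1}{c_\alpha} \int_0^\theta \re\!\big[\mathcal{M}(re^{i\phi},\xi)\, ire^{i\phi}\big]\,d\phi,\\
P(z, \xi) - 1 &= \frac{1}{c_\alpha} \int_\theta^\pi \re\!\big[\mathcal{M}(re^{i\phi},\xi)\, ire^{i\phi}\big]\,d\phi,
\end{align*}
obtained by integrating along the arc $\phi \mapsto re^{i\phi}$ starting respectively at $r$ (where $P = 0$) and at $-r$ (where $P = 1$).

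The central estimate then comes straight from (\ref{Meverywhere}): with $y = r\sin\phi$ and $|z| = r$,
\begin{align*}
\big|\mathcal{M}(re^{i\phi},\xi)\cdot ire^{i\phi}\big| \;\leq\; C\,(r\sin\phi)^{\alpha-2}\,r^{1-\alpha}\,r \;=\; C\sin^{\alpha-2}\phi,
\end{align*}
with $C$ independent of $r$ and $\xi$. Substituting into the two representations and changing variables $\psi = \pi - \phi$ in the second yields
\begin{align*}
|P(z,\xi)| \leq C\!\int_0^\theta \sin^{\alpha-2}\phi\,d\phi, \qquad |P(z,\xi) - 1| \leq C\!\int_0^{\pi-\theta} \sin^{\alpha-2}\psi\,d\psi.
\end{align*}
The proof is then finished by the elementary inequality $\int_0^\sigma \sin^{\alpha-2}\phi\,d\phi \leq C(\alpha)\,\sigma^{\alpha-1}$ valid uniformly for $\sigma \in (0,\pi]$: on $(0,\pi/2]$ apply $\tfrac{2}{\pi}\phi \leq \sin\phi \leq \phi$ to compare with $\int_0^\sigma \phi^{\alpha-2}\,d\phi = \sigma^{\alpha-1}/(\alpha-1)$, and on $(\pi/2,\pi]$ note that the full integral $\int_0^\pi \sin^{\alpha-2}\phi\,d\phi$ is finite (since $\alpha-2 > -1$) while $\sigma^{\alpha-1} \geq (\pi/2)^{\alpha-1}$. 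Taking $\sigma = \arg z$ and $\sigma = \pi - \arg z$ delivers (\ref{Peverywherea}) and (\ref{Peverywhereb}) respectively.

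The one technical subtlety I anticipate is that when $r = \xi$ the chosen arc begins at the branch point of $\mathcal{M}$. This is handled by first proving the bound for $r \neq \xi$, where the polar formula is justified by path-independence of the closed form $\re(\mathcal{M}\,dz)$ in $\mathbb{H}$ together with the continuity of $P$ at the boundary, with a constant independent of $r$ and $\xi$; continuity of both sides in $z$ on $\bar{\mathbb{H}}\setminus\{0\}$ then extends the estimate to $r = \xi$. Alternatively, (\ref{Jeverywhere}) shows $|\mathcal{M}(re^{i\phi},\xi)\cdot ire^{i\phi}| = O(\phi^{\alpha-2})$ as $\phi \downarrow 0$ even when $r = \xi$, so the polar representation extends directly.
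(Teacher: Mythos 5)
Your proposal is correct and follows essentially the same route as the paper: both use the vanishing boundary values on $(0,\infty)$ (resp. the value $1$ on $(-\infty,0)$) together with path-independence of $\re(\mathcal{M}\,dz)$ to write $P$ (resp. $P-1$) as an integral of $\re[\mathcal{M}(re^{i\varphi},\xi)\,ire^{i\varphi}]$ over a circular arc, then apply the bound (\ref{Meverywhere}) to reduce to $\int \sin^{\alpha-2}\varphi\,d\varphi \leq C\,\sigma^{\alpha-1}$. Your extra remarks about the case $r=\xi$ and the explicit justification of the elementary $\sin$-integral inequality only add detail the paper leaves implicit.
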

\begin{proof}
Using that $P(x, \xi) = 0$ for $x > 0$, we can write
$$P(re^{i\theta}, \xi) = -\frac{r}{c_\alpha} \re \int_{0}^{\theta} \mathcal{M}(re^{i\varphi}, \xi) ie^{i\varphi} d\varphi, \qquad r > 0, \  \theta \in (0, \pi),\  \xi > 0.$$
The estimate (\ref{Meverywhere}) now yields
\begin{align*}
|P(re^{i\theta}, \xi)| \leq &\; C r \int_{0}^{\theta} |\mathcal{M}(re^{i\varphi}, \xi)|  d\varphi
 \leq C \int_0^\theta (\sin^{\alpha -2} \varphi) d\varphi 
	\\
\leq &\; C   \theta^{\alpha -1}, \hspace{4cm} r > 0, \  \theta \in (0, \pi),\  \xi > 0,
\end{align*}
which is (\ref{Peverywherea}).
Similarly, since $P(x, \xi) = 1$ for $x < 0$, we can write
$$P(re^{i\theta}, \xi) = 1 + \frac{r}{c_\alpha} \re \int_{\theta}^{\pi} \mathcal{M}(re^{i\varphi}, \xi) ie^{i\varphi} d\varphi, \qquad r > 0, \  \theta \in (0, \pi),\  \xi > 0.$$
The estimate (\ref{Meverywhere}) now yields
\begin{align*}
|P(re^{i\theta}, \xi) - 1| \leq &\; C r \int_{\theta}^\pi |\mathcal{M}(re^{i\varphi}, \xi)|  d\varphi
\leq C \int_\theta^\pi (\sin^{\alpha -2} \varphi) d\varphi 
	\\
\leq &\; C (\pi - \theta)^{\alpha -1}, \hspace{3cm} r > 0, \  \theta \in (0, \pi),\  \xi > 0,
\end{align*}
which is (\ref{Peverywhereb}).
\end{proof}

\section{Acknowledgements} 
Lenells acknowledges support from the European Research Council, Grant Agreement No. 682537, the Swedish Research Council, Grant No. 2015-05430, and the Gustafsson Foundation, Sweden. Viklund acknowledges support from the Knut and Alice Wallenberg Foundation, the
Swedish Research Council, the National Science Foundation, and the Gustafsson Foundation, Sweden. 


\bibliographystyle{plain}
\bibliography{is}

\end{document}